\documentclass{iopjournal} 

\usepackage{cancel}
\usepackage{amsfonts,amsmath,amssymb}
\usepackage{calc}
\usepackage[english]{babel}
\usepackage[normalem]{ulem}
\usepackage{graphicx}
\usepackage{subcaption,caption}
\usepackage{epsfig}
\usepackage{ulem}
\usepackage{tikz}
\usepackage{mathrsfs}
\usepackage{bbold}             
\usepackage{soul}

\usepackage{ragged2e}
\justifying

\usepackage{amsthm}

\theoremstyle{definition}

\theoremstyle{theorem}

\newtheorem{lemma}{Lemma}
\newtheorem{proposition}{Proposition}

\newtheorem{theorem}{Theorem}
\newtheorem{corollary}{Corollary}
\newtheorem{definition}{Definition}

\theoremstyle{definition}

\theoremstyle{remark}
\newtheorem{remark}{Remark}

\newcommand{\Ac}{\mathcal{A}}
\newcommand{\Bc}{\mathcal{B}}

\newcommand{\Dc}{\mathcal{D}}
\newcommand{\Ec}{\mathcal{E}}

\newcommand{\Hc}{\mathcal{H}}
\newcommand{\Ic}{\mathcal{I}}
\newcommand{\Jc}{\mathcal{J}}

\newcommand{\Lc}{\mathcal{L}}

\newcommand{\Pc}{\mathcal{P}}
\newcommand{\Qc}{\mathcal{Q}}
\newcommand{\Rc}{\mathcal{R}}
\newcommand{\Sc}{\mathcal{S}}
\newcommand{\Tc}{\mathcal{T}}
\newcommand{\Uc}{\mathcal{U}}

\newcommand{\As}{\mathscr{A}}

\newcommand{\Cs}{\mathscr{C}}
\newcommand{\Ds}{\mathscr{D}}

\newcommand{\Fs}{\mathscr{F}}

\newcommand{\Cb}{\mathbb{C}}

\newcommand{\Mb}{\mathbb{M}}
\newcommand{\Nb}{\mathbb{N}}

\newcommand{\Rb}{\mathbb{R}}

\newcommand{\Zb}{\mathbb{Z}}

\newcommand{\Bf}{\mathfrak{B}}

\newcommand{\Df}{\mathfrak{D}}

\newcommand{\um}{\frac{1}{2}}
\newcommand{\one}{\mathbb{1}}
\newcommand{\zero}{\mathbb{0}}

\newcommand{\tr}{{\rm tr}}
\newcommand{\diag}{{\rm diag}}
\newcommand{\Span}{{\rm span}}

\newcommand{\supp}{{\rm supp}}

\DeclareMathOperator{\rank}{rank}

\newcommand{\ket}[1]{\left| #1 \right>}
\newcommand{\bra}[1]{\left< #1 \right|}
\newcommand{\norm}[1]{\left|\left| #1 \right|\right|}
\newcommand{\inner}[2]{\left< #1,#2 \right>}

\newcommand{\expect}[1]{\left< #1 \right>}

\newcommand{\ketbra}[2]{\left\vert #1 \middle>\middle< #2 \right\vert}

\newcommand{\kket}[1]{\left|\left| #1 \right>\right>}

\definecolor{darkviolet}{rgb}{0.58, 0.0, 0.83}
\definecolor{lavender}{rgb}{0.45, 0.31, 0.59}

\begin{document}

\title{Approximate Reduced Lindblad Dynamics \\
via Algebraic and Adiabatic Methods}

\author{Tommaso Grigoletto$^1$, Alain Sarlette$^{2,3}$, Francesco Ticozzi$^{1,4,5}$, Lorenza Viola$^{4,*}$}

\affil{$^1$Department of Information Engineering, University of Padova, Italy} 

\affil{$^2$Laboratoire de Physique de l'Ecole Normale Sup\'erieure, Mines Paris-PSL, Inria, ENS-PSL, Universit\'e PSL, \\
\ CNRS, Sorbonne Universit\'e, Paris, France}

\affil{$^3$Department of Electronics and Information Systems, Ghent University, Belgium}

\affil{$^4$Department of Physics and Astronomy, Dartmouth College, Hanover, New Hampshire 03755, USA}

\affil{$^5$Padua Quantum Technologies Research Center, University of Padova, Italy}

\affil{$^*$ Corresponding author. Email: Lorenza.Viola@Dartmouth.edu}
\medskip

\keywords{Open quantum systems; Markovian dynamics; model reduction; adiabatic elimination}

\begin{abstract} 
\vspace*{-3mm}
\justify{\small 
We present an algebraic framework for approximate model reduction of Markovian open quantum dynamics that guarantees complete positivity and trace preservation by construction. First, we show that projecting a Lindblad generator on its center manifold -- the space spanned by eigenoperators with purely imaginary eigenvalue -- yields an asymptotically exact reduced quantum dynamical semigroup whose dynamics is unitary, with exponentially decaying transient error controlled by the generator's spectral gap. Second, for analytic perturbations of a Lindblad generator with a tractable center manifold, we propose a perturbative reduction that keeps the reduced space fixed at the unperturbed center manifold. The resulting generator is shown to remain a valid Lindbladian; explicit finite-time error bounds that quantify leakage from the unperturbed center sector are provided. We further clarify the connection to adiabatic elimination methods, by both showing how the algebraic reduction can be directly related to a first-order adiabatic-elimination and by providing sufficient conditions under which the latter method can be applied while preserving complete positivity. We showcase the usefulness of our techniques in dissipative many-body quantum systems exhibiting non-stationary long-time dynamics. }
\end{abstract}

\section{Introduction}
\smallskip

Markovian models for open quantum systems provide a paradigmatic setting for describing irreversible continuous-time quantum dynamics \cite{alicki-lendi}. As such, they are widely used to account for the influence of a memory-less environment in a variety of both natural and engineered physical platforms of relevance to quantum science and technology. Markovian (semigroup) master equations, governed by a generator in canonical Lindblad form (also referred to as a {\em Lindbladian} henceforth), have long been used to describe noise effects in fields ranging from quantum optics \cite{Cohen} to atomic and condensed-matter physics \cite{Weiss, Jamir}, as well as to design open dynamics achieving non-unitary quantum-control tasks such as autonomous state stabilization  \cite{Ticozzi2007QuantumMS,ticozzi2013steadystateentanglementengineeredquasilocal,Cirac,Murch}.

However, even while not explicitly accounting for the environmental degrees of freedom, the resulting dynamical model may still be too large to simulate, or otherwise too complex to interpret and use for the purpose at hand. Notably, the rise of new research areas including open quantum many-body systems and driven-dissipative quantum matter \cite{Fazio, Sieberer}, along with the possibility to push experimental investigations in increasingly more powerful quantum simulators \cite{Altman}, heighten the need for smaller-dimensional, ``reduced'' descriptions that retain a valid quantum nature and reliably capture the dynamics of degrees of freedom of interest, without carrying the full complexity. A compelling scenario of interest arises, in particular, in regimes where dissipation does not relax the system merely to a steady state, but to a richer asymptotic structure known as the {\em center manifold} and tied to the occurrence of purely imaginary eigenvalues of the Lindblad generator \cite{Schirmer2010}. In such a case, {\em non-stationary long-time dynamics} and ``oscillating coherences'' may ensue  despite the underlying dynamical generator being time-invariant \cite{PhysRevResearch.5.L012003,buvca2019non,Zhao2025} -- allowing for time-translation symmetry breaking and the emergence of so-called boundary time crystals \cite{Iemini2018,Yang2025}.

From a system-theoretic standpoint, the challenge of obtaining dynamical models of lower complexity falls under the broad framework of {\em model reduction} (MR) techniques \cite{antoulas}. {\em Exact} MR, in particular, aims to construct models that reproduce the evolution of selected quantities of interest without error, for arbitrary times. A systematic approach to achieve exact MR for quantum Markovian dynamics has been recently developed in a series of papers \cite{grigoletto2023modelreductionquantumsystems, grigoletto2024exactmodelreductioncontinuoustime,letter2024, grigoletto2025quantummodelreductioncontinuoustime},
by combining Krylov operator spaces and algebraic methods. Notably, for continuous-time dynamics, the exact reduced model (if one exists) can be guaranteed to {\em also} be in Lindblad form. In general, however, exact reproduction of the target evolution may be too strong a requirement, with the above operator-algebraic approach resulting in no reduction -- especially for models that are already approximate as they are inferred from noisy tomographic data or imperfect system characterization \cite{Havel,Dobrynin_2025}. {\em Approximate} MR is then often pursued by leveraging time-scale separation ideas, with {\em adiabatic elimination} (AE) being a prominent example for obtaining a linear evolution equation as a series expansion on a low-dimensional operator subspace \cite{azouit2017towards,forni2018adiabatic}. Yet, a recurring problem therein is that approximate reductions may cease to represent a valid Markovian quantum dynamics \cite{riva2024cdc, EssigAllOrder, PhysRevA.109.062206}. This may happen in two conceptually distinct ways. First, the reduced evolution may fail to be described by a valid, completely-positive and trace-preserving (CPTP) super-operator. Second, even when a reduced Lindblad master equation is formally established, and the CPTP property is thus ensured, the construction may implicitly replace density operators by elements of an operator subspace that does not carry an algebraic structure -- preventing their interpretation as valid quantum states. For both quantum-information and device-level interpretations, this distinction matters: A useful reduced model should also point to a physically admissible interpretation of \emph{what the reduced system is}. In finite dimension, this is naturally expressed by requiring the reduced state space to arise from an associative $^*$-algebra (a direct sum of full matrix algebras), so that the reduced states are valid density operators and their dynamics acts on an identifiable effective quantum system \cite{accardi1982quantum}.

In this paper, we show that the algebraic perspective of {exact} MR can be used to obtain approximate MR procedures that remain physically consistent by construction, and  furthermore have a direct connection with AE. More specifically, our main contribution is twofold:

(i) We develop an \emph{asymptotically exact} reduction based on the center (peripheral) manifold of a Lindbladian which, as noted above, governs the long-time behavior of the dynamics. Since such a set has the structure of an associative subalgebra, one can obtain a reduced Markovian dynamics, with the additional property that the reduced long-time evolution is unitary. The associated reduced description is asymptotically exact, in the sense that any initial error outside the center manifold decays exponentially with a rate controlled by the spectral gap of the Lindblad generator, and is the minimal one with such a property. Notably, similar tools are used in classical dynamical-system theory to reduce non-linear and infinite-dimensional systems, see e.g. \cite{kato2013perturbation, roberts1989appropriate, roberts2015macroscale}.

(ii) We consider the case where the evolution of interest may not have a center manifold suitable for the desired reduction, or appears too complicated to perform such reduction directly, but can nonetheless be seen as an (analytic) perturbation of a Lindblad generator we may reduce. In this scenario,  we propose to construct a 
\emph{CPTP-preserving} perturbative reduction. The key choice is to keep the reduction maps fixed at the unperturbed value, so that the reduced state space remains an algebra even when the underlying generator is perturbed. The resulting reduced generator is again Lindbladian for all perturbation strengths for which the original generator is such. As a tradeoff, the reduced system only approximately matches the true system dynamics, and we provide explicit finite-time error bounds that quantify the leakage from the unperturbed center sector. 

Our first contribution hinges on the computation of the center manifold and its structure and, in practice, is typically applicable to idealized systems, which feature manifest exact symmetries or result from specific open-system engineering. Our second contribution allows a generalization to all systems which can be viewed as a perturbation of such a situation. This also encompasses non-idealized settings, including systems with meta-stable oscillations where the actual center manifold is trivial (namely, a single fixed point), as well as systems where an exact computation of the center manifold appears cumbersome.

In establishing the above results, we also gain important insight into the relationships and differences between algebraic and AE approaches. On the one hand, one may observe that, under perturbations, purely imaginary spectral components typically acquire small negative real parts, producing a slow subspace that is invariant, but need no longer carry a $^*$-algebra structure. In that sense, our analysis makes it clear how the AE methods tracking of the ``slow'' space may have better long-time accuracy, at the cost of losing complete positivity or the direct interpretation of the reduced variables as a quantum subsystem. On the other hand, we also detail how, at first order, AE possesses a genuine gauge freedom: different choices lead to different effective generators on the reduced coordinates, with our perturbative CPTP-preserving reduction coinciding with first-order AE for a {\em specific} gauge choice. In doing so, we uncover an important (and hitherto unappreciated) structural role for such a gauge freedom, while additionally providing general conditions under which the first-order AE generator can be chosen to be of Lindblad form. Numerically, we demonstrate how arbitrary, randomized gauge choices perform poorly, whereas enforcing complete positivity improves stability and  long-time performance. Overall, our results delineate a practical tradeoff between accuracy at long times, guaranteed physicality of the reduced dynamics, and algebraic interpretability of the reduced state space.

The content is organized as follows. Section~2 introduces the relevant class of dynamical models and formally defines their center manifold, along with the associated CP projection. Section 3 shows how the existing framework for exact MR may be extended to yield an asymptotically exact MR onto the center manifold, and illustrates its usefulness on a dissipative XXZ spin chain exhibiting persistent oscillations. After reviewing the essential AE formalism, Section 4 shows how, within the AE framework, a reduced generator can be constructed at first order, that retains the CPTP property. Finite-time error bounds are derived, and the algebraic and AE-based MR approaches are quantitatively compared in illustrative perturbed spin-chain models. Three appendixes provide additional background material on both algebraic and adiabatic methods, complete proofs of the general claims and supporting derivation for the dissipative spin-chain example.

\section{Quantum dynamical semigroups and center manifolds}
\smallskip

\subsection{Quantum dynamical semigroups} 
\smallskip

In this paper, we focus on finite-dimensional quantum systems described on a Hilbert space $\Hc\simeq\Cb^n$. The set of linear (bounded) operators acting on $\Hc$ is denoted by $\Bf(\Hc) \simeq \Cb^{n\times n}$, and the set of density operators by $\Df(\Hc)\equiv\{\rho\in\Bf(\Hc)\,|\,\rho=\rho^\dag\geq0,\,\tr(\rho)=1\}$. We will use the notations $[\cdot,\cdot]$ and $\{\cdot,\cdot\}$ to denote the commutator and anti-commutator between two operators, respectively. 

For the class of systems we consider, the evolution of the state $\rho(t)$ in the Schr\"odinger picture is described by a linear, time-invariant equation of the form 
\begin{equation}
    \dot{\rho}(t) = \Lc[\rho(t)] ,
   \label{ME}
\end{equation}
where $\Lc$ is the generator of a valid, completely-positive and trace-preserving (CPTP) {\em quantum dynamical semigroup} \cite{alicki-lendi}. It is well-known \cite{lindblad1976generators,Gorini:1975nb} that a generator $\Lc$ of a (time-invariant) quantum dynamical semigroup, $\{e^{\Lc t}\}_{t\geq0}$, can be expressed in a canonical Lindblad (diagonal) or, if a fixed operator basis is chosen, a Gorini-Kossakowski-Sudarshan-Lindblad (GKLS) form. 
Explicitly, in units where $\hbar=1$, we may write 
\begin{equation*} 
\Lc(\rho) = -i[H,\rho] + \sum_k \Dc_{L_k}(\rho), 
\end{equation*}
where $H=H^\dag\in\Bf(\Hc)$ describes the Hamiltonian (coherent) contribution to the dynamics, $L_k\in\Bf(\Hc)$ are so-called Lindblad (or noise) operators, and the dissipator \(\Dc_{L}(\rho) \equiv L\rho L^\dag - \frac{1}{2}\{L^\dag L,\rho\}\). While the representation of the generator $\Lc$ in terms of $H, \{L_k\}$ is not unique \cite{lindblad1976generators, johnson2015general}, specific forms are often suggested by physical considerations. As mentioned, a master equation of the above form can be taken to describe a wide class of open quantum systems undergoing Markovian dynamics.

\subsection{Center manifold}
\smallskip

At the heart of this work is the notion of the dynamical center manifold. By definition, this corresponds to the eigenspace of $\Lc$ whose eigenvalues have {\em zero} real part. It thus consists of the variables which persist in the long-time dynamics, after either settling down to a stationary state or converging towards a state that supports ``oscillating coherences''  \cite{PhysRevA.89.022118,buvca2019non}. More formally:
\begin{definition}
    Given a Lindblad generator $\Lc$, the {\em center manifold} is defined as 
    \[\Cs \equiv \Span\{X\in\Bf(\Hc)\, | \, \exists\omega\in\Rb\text{ s.t. } \Lc(X) = i\omega X\,\}.\]
\end{definition}
Equivalent (or related) notions of limit cycle sets, peripheral spectral eigen-operators, or rotating points have been discussed in the literature  \cite{wolf2010inverseeigenvalueproblemquantum,wolf2012quantum}, along with their connections to decoherence-free subspaces and information-preserving structures \cite{Schirmer2010,blume2010information}. Note that the center manifold $\Cs$ always includes the set of {\em fixed (stationary, or steady)} operators, 
$$\ker(\Lc) \equiv \{X\in\Bf(\Hc)|\, \Lc(X)=0\} \subseteq \Cs,$$ 
which, in the case of finite-dimensional Lindblad generators as we consider, is always non-empty. Since any such Lindblad generator $\Lc$ is  stable \cite{wolf2012quantum}, $\Cs$ is attractive and any trajectory converges exponentially to it. Let the {\em spectral gap} $\Delta_\Lc$ of $\Lc$ be defined as the absolute value of the real part of the slowest, non-purely-imaginary eigenvalue \cite{PhysRevA.111.052206}\footnote{Sometimes the spectral gap is defined in the literature to quantify the convergence to the steady-state manifold, in which case the maximum is taken over all nonzero eigenvalues; an ``oscillating gap'' has also been introduced \cite{PhysRevB.110.104303}, where the maximum is taken over all non-purely-real eigenvalues.}, that is, \[ \Delta_\Lc \equiv \min_{\substack{\lambda\in{\rm sp}(\Lc)\\ \lambda\notin i\Rb}}|
 \Re(\lambda)|.  \]
While the spectral gap determines the asymptotic convergence rate of arbitrary trajectories to $\Cs$, 
the following bound holds for arbitrary times:
\begin{lemma} [Exponential convergence to $\Cs$]
\label{lem:expo_convergence}
Let $\Lc$ be a Lindblad generator with spectral gap $\Delta_\Lc$
and let $\Pc$ be a projector onto $\Cs$, i.e., $\Pc^2=\Pc$ and ${\rm Im}\,\Pc = \Cs$. Then, for all $\rho_0\in\Df(\Hc)$, and for all $\delta>0$, there exists $\Gamma>0$ such that
\begin{equation}
    \norm{e^{\Lc t} \rho_0 - \Pc(e^{\Lc t} \rho_0) } \leq \Gamma e^{-(\Delta_\Lc-\delta) t}, \quad \forall t\geq0,
    \label{eq:exponential_convergence}
\end{equation}
where $\norm{\cdot}$ is an arbitrary operator norm. 
\end{lemma}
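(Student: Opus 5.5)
We work with the Jordan decomposition of $\Lc$ on the finite-dimensional space $\Bf(\Hc)$. Let $\Pc_0$ be the spectral (Riesz) projector onto the generalized eigenspaces of the purely imaginary eigenvalues, and $\Qc_0 = \mathcal{I}-\Pc_0$ the complementary spectral projector, where $\mathcal{I}$ is the identity superoperator. Both commute with $\Lc$, and hence with $e^{\Lc t}$.

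The first step is to check that ${\rm Im}\,\Pc_0 = \Cs$. This needs the peripheral eigenvalues of $\Lc$ to be semisimple, i.e.\ to have no nontrivial Jordan blocks. This follows from stability: $e^{\Lc t}$ is CPTP, hence uniformly bounded in $t\ge 0$ (trace norm $\le 1$ on Hermitian inputs, and a constant on all of $\Bf(\Hc)$). A Jordan block of size $>1$ at an eigenvalue $i\omega$ would produce terms growing like $t\,e^{i\omega t}$, which contradicts boundedness. So the generalized eigenspaces at imaginary eigenvalues coincide with eigenspaces, and their span is exactly $\Cs$.

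Next we reduce from an arbitrary projector $\Pc$ onto $\Cs$ to $\Pc_0$. For any $X$ we have $\Pc_0 X\in\Cs={\rm Im}\,\Pc$, so $\Pc\Pc_0=\Pc_0$. Hence
\[
(\mathcal{I}-\Pc)X = (\mathcal{I}-\Pc)(\Pc_0 X + \Qc_0 X) = (\mathcal{I}-\Pc)\Qc_0 X ,
\]
which gives $\norm{(\mathcal{I}-\Pc)e^{\Lc t}\rho_0}\le \norm{\mathcal{I}-\Pc}\,\norm{e^{\Lc t}\Qc_0\rho_0}$. Here $\Pc$ need not be orthogonal or commute with $\Lc$; the identity above is all that is used, and the constant $\norm{\mathcal{I}-\Pc}$ is finite by finite dimensionality.

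The main step is to bound $\norm{e^{\Lc t}\Qc_0}$. The restriction $\Lc|_{{\rm Im}\,\Qc_0}$ has spectrum contained in $\{\Re\lambda\le -\Delta_\Lc\}$, but it may have Jordan blocks. Its exponential therefore consists of terms $t^k e^{\lambda t}$ with $k\le n^2-1$ and $\Re\lambda\le-\Delta_\Lc$. For any $\delta>0$, the factor $t^k e^{-\delta t}$ is bounded on $[0,\infty)$. This gives $\norm{e^{\Lc t}\Qc_0}\le M_\delta e^{-(\Delta_\Lc-\delta)t}$. The same conclusion follows from the spectral radius formula applied to $e^{(\Lc+\Delta_\Lc-\delta)}\Qc_0$, or equivalently from the Jordan form. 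The $\delta$-loss is needed precisely because of these possible Jordan blocks.

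Combining the two bounds yields the claim with
\[
\Gamma = \norm{\mathcal{I}-\Pc}\, M_\delta \sup_{\rho_0\in\Df(\Hc)}\norm{\Qc_0\rho_0},
\]
which is finite because $\Df(\Hc)$ is compact; in fact $\Gamma$ can be chosen uniformly in $\rho_0$. Since all norms on $\Bf(\Hc)$ are equivalent, the choice of operator norm only affects the constant. If $\Lc$ has no non-imaginary eigenvalues, then $\Qc_0=0$ and the statement is trivial.

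The step requiring the most care is the semisimplicity of the peripheral spectrum. Without it, $\Cs$ as defined (a span of eigenvectors) would be strictly smaller than the spectral subspace, and the bound would fail. The boundedness of CPTP semigroups settles this point.
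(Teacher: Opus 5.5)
Your proof is correct and follows essentially the same route as the paper's: split off the peripheral part via the Jordan/spectral decomposition of $\Lc$, bound the remaining modes by polynomial-times-$e^{\lambda t}$ terms with $\Re\lambda\le-\Delta_\Lc$, and absorb the polynomials into the $\delta$-loss. Your version is more careful on two points. First, you prove that the purely imaginary spectrum is semisimple from the boundedness of the CPTP semigroup, where the paper cites this fact. Second, your identity $(\Ic-\Pc)\Pc_0=0$, which gives the extra factor $\norm{\Ic-\Pc}$, correctly handles an arbitrary (non-$\Lc$-spectral) projector onto $\Cs$. The paper instead writes the residual as $\sum_j b_j e^{\lambda_j t}p_j(t)Y_j$, tacitly assuming $\Pc$ annihilates the decaying modes; the fix there is simply to replace $\norm{Y_j}$ by $\norm{(\Ic-\Pc)Y_j}$.
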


\noindent 
A proof is included in Appendix \ref{app:proofs}. The result is a direct consequence of the fact that finite-dimensional quantum dynamical semigroups are stable, hence $\Lc$ only admits trivial (one-dimensional) Jordan blocks associated to purely imaginary eigenvalues, see e.g., \cite[Prop 6.2]{wolf2012quantum}.

In the following, we use the term super-operator to mean any linear map on $\Bf(\Hc)$, and call any idempotent $\Pc:\Bf(\Hc)\to\Bf(\Hc)$, such that $\Pc^2 \equiv \Pc\circ\Pc = \Pc$ a \textit{projector}. Furthermore, we say that $\Pc$ is \textit{orthogonal} if it is self-adjoint with respect to the Hilbert-Schmidt inner product i.e., $\Pc=\Pc^\dag$. Given another super-operator $\Bc$, we say that $\Pc$ is a \textit{$\Bc$-spectral projector} if it commutes with said super-operator, i.e., $[\Pc,\Bc]=0$.

As remarked in the Introduction, characterizing non-stationary long-time evolution is highly relevant in the context of non-equilibrium many-body dynamics, but also important in scenarios where fast unitary dynamics is present, which one wants to avoid treating as time-dependencies in a rotating frame \cite{riva2024cdc}. Mathematically, a key property of $\Cs$, which we will use in the rest of the paper, is the fact that the $\Lc$-spectral projection onto $\Cs$ is CP. What follows builds on known results, namely, the structure of the fixed points of CPTP maps and Lindblad generators, see e.g., \cite{lindblad1999general,johnson2015general}, and the structure of the eigenspaces associated to limit cycles and peripheral spectra of CPTP maps \cite{wolf2010inverseeigenvalueproblemquantum}. In particular, this structure is discussed in detail in 
\cite[Sec. 2.2]{johnson2015general} for both the case where a full-rank steady state exists and the general case. Adapting \cite[Theorem 8]{wolf2010inverseeigenvalueproblemquantum} to the Lindblad setting, we have the following result:

\begin{proposition}[CPTP projection onto $\Cs$]
\label{prop:spectral_projection}
    Let $\Lc$ be a 
    Lindblad generator, and let $\Cs$ be its center manifold. Then there exists a CPTP, not necessarily orthogonal (w.r.t. $\inner{\cdot}{\cdot}_{HS}$), $\Lc$-spectral projector $\Pc$ onto $\Cs$, i.e.: ${\rm Im}\,\Pc = \Cs$, $\Pc^2 = \Pc$, $[\Pc,\Lc] = 0$ and $\Pc$ is CPTP. 
\end{proposition}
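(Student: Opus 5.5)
The plan is to build $\Pc$ as the peripheral (Cesàro-type) limit of the semigroup itself, so that complete positivity and trace preservation are inherited from $e^{\Lc t}$, and then to identify $\Pc$ with the $\Lc$-spectral projection onto $\Cs$.

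\emph{Spectral projection.} First I use the spectral structure invoked for Lemma~\ref{lem:expo_convergence}: $\Lc$ has only trivial Jordan blocks on $i\Rb$, and all other eigenvalues have real part $\le -\Delta_\Lc<0$. Write the Jordan decomposition
\[
\Lc=\sum_{k} i\omega_k \Pc_k + \Lc_{\rm tr},
\]
where $\Pc_k$ are the Riesz projectors for the purely imaginary eigenvalues $i\omega_k$, and $\Lc_{\rm tr}$ acts on the complementary invariant subspace with spectrum in $\{\Re\lambda\le-\Delta_\Lc\}$. Set $\Pc\equiv\sum_k\Pc_k$. By construction this is the Riesz projector onto $\Cs$, so ${\rm Im}\,\Pc=\Cs$, $\Pc^2=\Pc$ and $[\Pc,\Lc]=0$. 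Moreover $e^{\Lc t}\Pc=\sum_k e^{i\omega_k t}\Pc_k$, and $\norm{e^{\Lc t}(\Id-\Pc)}\to0$ exponentially.

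\emph{CPTP property.} I would obtain $\Pc$ as a limit of CPTP maps. Because the frequencies $\omega_k$ are finitely many, simultaneous Diophantine (Dirichlet/Kronecker) approximation gives a sequence $t_m\to\infty$ with $e^{i\omega_k t_m}\to1$ for all $k$. Hence
\[
e^{\Lc t_m}\to\Pc .
\]
Each $e^{\Lc t_m}$ is CPTP, and the set of CPTP maps is closed in finite dimension (e.g.\ via positivity of the Choi matrix plus the linear trace condition). Therefore $\Pc$ is CPTP.

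\emph{Alternative route.} Should one prefer to avoid the Diophantine step, the Cesàro average $\lim_{T\to\infty}\frac1T\int_0^T e^{\Lc t}\,dt$ is also CPTP, but it projects only onto $\ker\Lc$. The averaging would then have to be applied in a rotating frame, which destroys positivity. The subsequence-of-times argument is therefore the right one. It mirrors the proof of \cite[Theorem 8]{wolf2010inverseeigenvalueproblemquantum} for discrete CPTP maps, adapted by taking $T=e^{\Lc}$ or sampling the continuous time.

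\emph{Main obstacle.} The only delicate step is the simultaneous approximation $e^{i\omega_k t_m}\to1$ together with the diagonalizability on $i\Rb$; the latter is already given by stability. The former follows because the closure of $\{(e^{i\omega_k t})_k: t\ge0\}$ is a compact subgroup of the torus, so it returns arbitrarily close to the identity at arbitrarily large times. It remains to note that $\Pc$ need not be orthogonal, since Riesz projectors of a non-normal $\Lc$ generally are not self-adjoint.
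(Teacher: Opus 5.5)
Your proposal is correct and is essentially the paper's argument. The paper also defines $\Pc\equiv\lim_{k\to+\infty}e^{q_k\Lc}$ along times $q_k$ obtained from Dirichlet's simultaneous approximation theorem applied to $\alpha_j=\omega_j/(2\pi)$, inherits complete positivity and trace preservation from the semigroup by closedness, and differs from you only in bookkeeping: you first fix $\Pc$ as the Riesz projector and then identify it with the limit (which also handles non-diagonalizable decaying blocks cleanly), whereas the paper takes the limit as the definition and checks idempotence, image and kernel directly on eigen-operators.
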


\noindent 
A proof is again included in Appendix \ref{app:proofs}.

\section{Asymptotically Exact Model Reduction}
\smallskip

We now show how restricting the dynamics to the center manifold yields a unitary evolution on an operator subalgebra that  simulates the original one, with errors that decay exponentially fast over time. The method is of direct interest for open dynamics that include non-trivial sets of fixed points and/or oscillatory behaviors which survive in the long-time limit. This is the case, for example, in models where the Hamiltonian commutes with the dissipator, or when information-preserving structures exist (e.g., decoherence-free subspaces, or unitarily evolving subsystems \cite{blume2010information}). A detailed discussion of a physically relevant example immediately follows the general results, in Section \ref{sec:xxz_time_crystal}. In addition, this approach allows us to present all the key ingredients needed in developing the results based on perturbation theory in Section \ref{sec:PT}. 

\subsection{Reduction on the center manifold}
\label{sub: RCM}
\smallskip

We first show that the center manifold $\Cs$ of a Lindblad generator $\Lc$ has an algebraic structure. Proposition \ref{prop:spectral_projection} above establishes that there exists a CPTP projection onto $\Cs,$ and the structure of the fixed points of CPTP maps is well-known (see Appendix \ref{app:aymptotic_model_reduction}).

It follows that there must exist a decomposition of the Hilbert space $\Hc = \bigoplus_k (\Hc_{F,k}\otimes\Hc_{G,k})\oplus\Hc_R$, some unitary $U\in\Bf(\Hc)$ and a set of density operators $\tau_k\in\Bf(\Hc_{G,k})$ such that
\begin{equation}
    \Cs = U \bigg[\bigoplus_k \big(\Bf(\Hc_{F,k})\otimes\tau_k \big) \oplus \zero_R \bigg] U^\dag.
\end{equation}
The set can be interpreted as a direct sum of independent and smaller ``full'' algebras $\Bf(\Hc_{F,k})$, each of them repeated a number of times corresponding to the dimension of $\Hc_{G,k}$ and weighted by the eigenvalues of $\tau_k.$ A standard subalgebra would require all $\tau_k$ to be uniformly mixed states; the more general sets we consider are sometimes referred to as {\em distorted algebras}. In what follows, we assume that $\Cs$ has full support, namely, $\dim(\Hc_{R})=0$; if this is not the case, then one can always first reduce onto $\supp(\Cs),$ the latter being an asymptotically attracting set for the dynamics.

Based on the above structural characterization, one can then observe that $\Cs$ is isomorphic to a {\em smaller} representation, which we denote $\check{\Cs}$, where the weighted copies are removed, i.e., $\check{\Cs} \equiv \bigoplus_k \Bf(\Hc_{F,k})\subseteq\Bf(\check{\Hc})$, with $\check{\Hc}=\bigoplus_k \Hc_{F_k}$. This allows us to find a factorization of the CPTP projector $\Pc$ in terms of two maps \cite{grigoletto2023modelreductionquantumsystems}, 
$\Rc:\Bf(\Hc)\to\Bf(\check{\Hc})$, ${\rm Im}\Rc=\check{\Cs}$ and $\Jc:\check{\Cs}\to\Bf(\Hc)$, ${\rm Im}\Jc={\Cs}$, such that $\Rc\Jc = \Ic_{\check{\As}}$, and $\Pc = \Jc\Rc.$
Furthermore, one can observe that these two factors can be chosen to be CPTP, via the explicit forms:
\begin{equation}
\label{eqn:reduction}
\Rc(X)=\bigoplus_k \tr_{\Hc_{G,k}}(W_k X W_k^\dag)=\bigoplus_k X_{F,k}=\check X , \quad \forall X\in\Bf(\Hc), 
\end{equation}
and for all $\check{X}\in\check{\Cs}$, $\check{X}=\bigoplus_k X_{F,k}$,
\begin{equation}
\label{eqn:injection}
\Jc(\check X)=U\bigg(\bigoplus_k X_{F,k} \otimes\tau_{k}\bigg)U^\dag.
\end{equation}
We name $\Rc$ the \textit{reduction super-operator} and $\Jc$ the \textit{injection super-operator}. The reduction and injection super-operators naturally allow us to compute the reduced dynamics $\check{\Lc} \equiv \Rc\Lc\Jc$ restricted to the algebra $\check{\Cs}$. Importantly, we can prove that the reduced generator $\check{\Lc}$ is a valid Lindblad generator (see Appendix \ref{app:aymptotic_model_reduction}, Thm.\,\ref{thm:Lindblad reduction}). 
Leveraging these properties of the center manifold $\Cs$, we then have the following result:

\begin{theorem}[Asymptotically exact reduction]
\label{thm:asymptotic_reduction}
Let  $\Cs$ be a center manifold of a Lindblad generator $\Lc$. Let then $\Pc$ be the CPTP projection onto $\Cs$, and $\Rc$ and $\Jc$ be the two CPTP factors $\Pc = \Jc\Rc$ as defined in Eqs.\,\eqref{eqn:reduction} and \eqref{eqn:injection}. Define $\check{\Lc} \equiv \Rc\Lc\Jc$. Then the following hold:
    \begin{enumerate}
        \item For all $\rho\in\Cs$ and all $t\geq 0$, we have \[e^{\Lc t}(\rho) = \Jc \,e^{\check{\Lc} t}\, \Rc(\rho).\]
        \item For all $\rho\in\Df(\Hc)$, we have \[\lim_{t\to+\infty} \big[ e^{\Lc t}(\rho) - \Jc e^{\check{\Lc} t} \Rc(\rho)\big] = 0 .\] 
        \item For all $\rho\in\Df(\Hc)$ and for all $\delta>0$, there exists a constant $\Gamma>0$ such that
        \[
        \big| \big|{e^{\Lc t}(\rho) - \Jc e^{\check{\Lc}t}\Rc(\rho)}\big|\big| \leq \Gamma e^{-(\Delta_\Lc-\delta) t},\quad \forall t\geq0 , \]
        where $\Delta_\Lc\in\Rb$ is the spectral gap of $\Lc$.
        \item The restricted dynamics on the center manifold is Lindbladian and unitary, of the form \(\check{\Lc}(\check{\rho}) = -i[\check{H},\check{\rho}]\), where $\check{H} = \bigoplus_k H_{F,k}$ and $H_{F,k}\in\Bf(\Hc_{F,k})$ for all $k$.
    \end{enumerate}
\end{theorem}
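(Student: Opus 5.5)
The plan is to derive everything from three facts: $\Pc=\Jc\Rc$ with $\Rc\Jc=\Ic$; the commutation $[\Pc,\Lc]=0$ from Proposition~\ref{prop:spectral_projection}; and Lemma~\ref{lem:expo_convergence}.

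\emph{Item 1.} The key intermediate identity is $\Jc\check{\Lc}=\Lc\Jc$ on $\check{\Cs}$. To see it, write
\[
\Jc\check{\Lc}=\Jc\Rc\Lc\Jc=\Pc\Lc\Jc=\Lc\Pc\Jc=\Lc\Jc\Rc\Jc=\Lc\Jc .
\]
Iterating gives $\Jc\check{\Lc}^m=\Lc^m\Jc$ for all $m$, and summing the exponential series yields $\Jc e^{\check{\Lc}t}=e^{\Lc t}\Jc$. For $\rho\in\Cs$ we have $\rho=\Pc\rho=\Jc\Rc\rho$, so $e^{\Lc t}\rho=\Jc e^{\check{\Lc}t}\Rc\rho$.

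\emph{Items 2 and 3.} For general $\rho\in\Df(\Hc)$ I will use the intertwining together with $\Pc e^{\Lc t}=e^{\Lc t}\Pc$:
\[
\Jc e^{\check{\Lc}t}\Rc\rho=e^{\Lc t}\Jc\Rc\rho=e^{\Lc t}\Pc\rho=\Pc e^{\Lc t}\rho .
\]
Hence the error equals $e^{\Lc t}\rho-\Pc(e^{\Lc t}\rho)$, which Lemma~\ref{lem:expo_convergence} bounds by $\Gamma e^{-(\Delta_\Lc-\delta)t}$. This is item 3. Item 2 follows by taking $\delta<\Delta_\Lc$; note $\Delta_\Lc>0$, or, if no non-imaginary eigenvalues exist, $\Pc=\Ic$ and the error is identically zero.

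\emph{Item 4.} This is the main step. That $\check{\Lc}$ is Lindbladian is Thm.~\ref{thm:Lindblad reduction} of the appendix, which I take as given. It remains to show it is unitary.

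By the intertwining $\Jc\check{\Lc}=\Lc\Jc$, the injection $\Jc$ is injective and maps eigenvectors of $\check{\Lc}$ to eigenvectors of $\Lc$ lying in $\Cs$ with the same eigenvalue. Hence every eigenvalue of $\check{\Lc}$ is purely imaginary. Moreover, $\Lc|_{\Cs}$ is diagonalizable, because the imaginary-eigenvalue Jordan blocks are trivial, so $\check{\Lc}$ is diagonalizable too. Therefore $e^{\check{\Lc}t}$ is a CPTP semigroup with bounded inverse $e^{-\check{\Lc}t}$ for every $t$. It is then a group of CPTP maps whose inverses are also CPTP, being norm-bounded and a limit of the positive semigroup along the recurrence of the almost-periodic group.

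I then invoke the standard fact that a CPTP map with CPTP inverse on $\Bf(\check{\Hc})$ is a unitary conjugation. Consequently $e^{\check{\Lc}t}(\cdot)=V_t\cdot V_t^\dag$, and differentiating gives $\check{\Lc}=-i[\check{H},\cdot]$ for some Hermitian $\check{H}$.

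The block structure $\check{H}=\bigoplus_k H_{F,k}$ then follows because $\check{\Lc}$ maps $\check{\Cs}=\bigoplus_k\Bf(\Hc_{F,k})$ into itself. In particular $\check{\Lc}$ preserves the block-diagonal projections onto each $\Hc_{F,k}$, so each central projection $\Pi_k$ satisfies $[\check{H},\Pi_k]=0$ up to a multiple of the identity. This multiple is removable, since a trace/eigenvalue argument forces it to vanish on projections.

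The hardest part is the CPTP-inverse argument, since Lindbladianity alone does not force unitarity. A fallback is to show $\check{\Lc}$ has an invertible faithful state (the image of $\tau$-weighted identities) and that all its eigenvalues are imaginary. One can then use the Lindblad dissipator's strict dissipativity in a weighted Hilbert–Schmidt norm: any nonzero $\Dc$ produces a strictly negative real part for some eigenvalue, which forces all $L_k\propto\one$ on each block.
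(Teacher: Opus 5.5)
Your proposal is correct and follows essentially the paper's route. Items 1--3 rest on $[\Pc,\Lc]=0$, $\Pc=\Jc\Rc$ and Lemma~\ref{lem:expo_convergence}; your explicit intertwining $\Jc\check{\Lc}=\Lc\Jc$ makes precise a step the paper uses implicitly, and getting item 2 from item 3 replaces its separate limit argument. Item 4 rests, as in the paper, on a Dirichlet-recurrence CPTP inverse ($\Rc\Ec\Jc=e^{-\check{\Lc}}$ is the paper's $\Ec=\lim_k e^{(q_k-1)\Lc}$ seen in reduced coordinates) together with ``invertible CPTP with CPTP inverse $\Rightarrow$ pure states to pure states $\Rightarrow$ unitary''. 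Note, however, that this fact must be applied on $\check{\Cs}=\bigoplus_k\Bf(\Hc_{F,k})$ rather than on $\Bf(\check{\Hc})$; there it only gives an automorphism that might permute equal-dimensional blocks, which continuity at $t=0$ excludes (the paper's ``$j=j'$'' step), and this also replaces your muddled ``up to a multiple of the identity'' argument.
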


\noindent
The proof is provided in Appendix \ref{app:proofs}. Notice that, differently from \cite[Theorem 8, point 2]{wolf2010inverseeigenvalueproblemquantum}, in the continuous-time case there can be {\em no} permutations of blocks of the same dimension. Therefore, each block of the decomposition of $\Cs$ is invariant and evolves independently of the others. 

It is worth remarking that the reduced generator $\check{\Lc}$ is the {\em minimal} model (in the sense of the smallest dimension) that is capable of exactly reproducing the asymptotic behavior (that is, at arbitrarily long times) of the reference system: for a reduced model to correctly reproduce the asymptotic dynamics for {\em all} initial conditions, the associated subspace must necessarily contain the entire center manifold, and the minimal projection onto such subspace is precisely $\Pc$. It is also interesting to note that, differently from discrete-time evolutions, in continuous time coherences can be preserved in the long time regime only in the presence of at least two steady states. In other words, in the case of a unique steady state, the center manifold coincides with the null space of the generator, see e.g. \cite[Theorem 4]{buvca2022algebraic}.

The results we obtained thus far may be aptly illustrated in the context of a paradigmatic many-body dynamical model, as we turn to next.

\subsection{Case study: Non-stationary coherent long-time dynamics in a dissipative XXZ chain} 
\smallskip
\label{sec:xxz_time_crystal}

Consider a system composed of $N$ spin-$\um$ degrees of freedom (qubits) in one dimension, whereby $\Hc = \otimes_{i=1}^{N} \Hc_i,$ $\Hc_i\simeq\Cb^2,$ and $\Bf(\Hc) \simeq (\Cb^4)^{\otimes N}.$ As usual, let $\sigma_q$, with $q\in\{0,x,y,z\}$, denote the Pauli spin-1/2 matrices and the identity $\sigma_0\equiv \one_2$. We will use the shorthand notation $\sigma_q^{(k)}\in\Bf(\Hc)$ to denote their multi-spin extension acting non-trivially only on the $k$-th spin, i.e., $\sigma_q^{(k)} \equiv \otimes_{i=1}^{k-1} \sigma_0 \otimes \sigma_q \otimes_{i=1}^{N-k}\sigma_0$. Similarly, we define local ladder (raising and lowering) operators acting non-trivially on spin $k$ as $\sigma_\pm ^{(k)} \equiv \frac{1}{2}(\sigma_x^{(k)}\pm i\sigma_y^{(k)})$. We further assume periodic boundary conditions, hence $\sigma_q^{(N+1)}=\sigma_q^{(1)}$, $\forall q$. 

We take the spins to have a common natural frequency $\omega$ and interact via an XXZ Hamiltonian,  
\begin{align}
    H = \sum_{j=1}^N \bigg [ \frac{\omega}{2} \,\sigma_z^{(j)} + 
    \frac{A_{xy}}{2}\big(\sigma_x^{(j)}\sigma_x^{(j+1)} + \sigma_y^{(j)}\sigma_y^{(j+1)}\big)  
    + A_z \,\sigma_z^{(j)}\sigma_z^{(j+1)}    
    \bigg ], 
    \label{eq:xxz_ham}
\end{align}
where for concreteness we assume an isotropic in-plane ferromagnetic coupling, $A_{xy}\geq 0$, and the Ising coupling strength $A_z \in {\mathbb R}$. Since, for spin-$1/2$ systems with nearest-neighbor couplings in one dimension, no purely imaginary eigenvalues have been shown to emerge with only on-site non-Hermitian Lindblad operators \cite{buvca2022algebraic}, either spin-$1$ XXZ chains with local (onsite) dissipation or collective spin-$1/2$ models have been considered in the literature, in connection with notions of quantum synchronization and the emergence of time-crystalline behavior \cite{Iemini2018,Tindall2020,Zhao2025,mondkar2026}. Here, we instead introduce dissipation through two-site operators that describe nearest-neighbor incoherent hopping, akin to a quantum version of an ``asymmetric simple exclusion process'' \cite{Derrida,PhysRevE.102.062210}\footnote{Notably, in the purely dissipative limit ($H\equiv 0$), the model has been shown to be integrable \cite{PhysRevE.102.062210}, in the sense that the operator space  ``fragments'' into exponentially many invariant subspaces, for which the projected generator maps to an integrable XXZ Hamiltonian. For $H\ne 0$, integrability is preserved as long as $A_{xy} =0$.}:
\begin{equation}
    L_j = \sqrt{\gamma}\,\sigma_+^{(j)}\sigma_-^{(j+1)},\quad  \forall j=1,\dots,N,\quad \gamma>0.
    \label{eq:xxz_jump}
\end{equation}

Let $J_u \equiv \um\sum_{j=1}^N \sigma_u^{(j)}$ denote total angular momentum operators. Then $J_z$ is a conserved quantity for the model, i.e., it is invariant under the Heisenberg-picture generator associated to Eq.\,\eqref{ME}, $\Lc^\dag(J_z) = 0$, and $e^{-i\varphi J_z}$ forms a strong $U(1)$ symmetry. A proof of both these facts is given in Appendix \ref{app:xxz_extra}. Because $J_z$ is the generator of a strong symmetry, we can decompose $\Hc$ into eigenspaces of $J_z$, with each one of them supporting at least one steady state \cite[Theorem A.1]{Buca_2012}. Let $m$ denote the eigenvalues of $J_z$, with $m\in \{ -\frac{N}{2},\dots,\frac{N}{2}\}$, and let $\{ \ket{s_m} \}$ be a corresponding complete set of orthonormal eigenvectors, satisfying $J_z\ket{s_m} = m \ket{s_m}$. It is well known \cite{franchini2017introduction} that {each eigenvector $\ket{s_m}$ is a linear combination of bitstrings} with exactly $\frac{N}{2}-m$ spin down ($\ket{1}$) and $\frac{N}{2}+m$ spin up ($\ket{0}$, where we used the convention $\sigma_z\ket{0}=\ket{0}$). We thus have $\Hc = \bigoplus_m \Hc_m$, where $\Hc_m = \Span\{\ket{s_m}\}$. Explicitly, we have:
\begin{align*}
    \Hc_{\frac{N}{2}} &= \Span\{\ket{00\dots0}\} ,\\
    \Hc_{\frac{N}{2}-1} &= \Span\{\ket{10\dots0} , \ket{01\dots0} ,\dots, \ket{00\dots1}\},\\
    &\,\,\,\vdots\\
    \Hc_{-\frac{N}{2}} &= \Span\{\ket{11\dots1}\}.
\end{align*}
Furthermore,  the dimension $d_m \equiv \dim(\Hc_m) = \binom{N}{\frac{N}{2}-m}.$

Clearly, the two states $\bar{\rho}_{\frac{N}{2}}\equiv\ketbra{00\dots0}{00\dots0}$ and $\bar{\rho}_{-\frac{N}{2}}\equiv\ketbra{11\dots1}{11\dots1}$ are steady states, since they are the only states supported on their respective Hilbert subspaces. Let us then define the set of states 
\begin{equation}
    \bar{\rho}_m \equiv \sum_{s_m} \frac{1}{d_m} \ketbra{s_m}{s_m}, \quad m \in \big\{ \!-\tfrac{N}{2},\dots,\tfrac{N}{2}\big\}.
    \label{eq:steady_states}
\end{equation}
As it turns out, these $N+1$ states form the steady-state manifold, i.e., $\ker\Lc = \Span\{\bar{\rho}_m, \, m\in\Mb\}$. The fact that $\Lc(\bar{\rho}_m) = 0$ for all $m$ was noted for a similar spin-$1$ model \cite{Tindall2020,Zhao2025} and is explicitly verified in our case in Appendix \ref{app:xxz_extra}. We also verify there [Theorem \ref{thm:uniqueness_steady_states}] that, for almost all values of $A_{xy}$, $\ker\Lc$ contains no other operators.

Let $\ket{0_L} \equiv \ket{00\dots0}$ and $\ket{1_L} \equiv \ket{11\dots1}$. One can observe that $\Lc$ has two purely imaginary eigenvalues: 
$\Lc(\ketbra{0_L}{1_L}) = -i\omega_0 \ketbra{0_L}{1_L}$ and $\Lc(\ketbra{1_L}{0_L}) = i\omega_0 \ketbra{1_L}{0_L}$, with $\omega_0 \equiv \omega N$ (see Appendix \ref{app:xxz_extra} for a proof). This lets us conclude that the center manifold is given by 
\begin{align}
    \Cs &= \Span\big\{\ketbra{0_L}{1_L}, \ketbra{1_L}{0_L}\big\} \oplus \ker\Lc  
    \label{eq:center_manifold}
    \simeq 
    \Cb^{2\times 2} \oplus \bigoplus_{m \in  \mathbb{M}} 
    \Cb \one_{d_m}, \quad 
    \mathbb{M} \equiv \big\{ \! -\tfrac{N}{2}+1, \ldots, \tfrac{N}{2}-1 \big\},  
\end{align}
where the full $\Cb^{2\times 2}$ block is associated to the subalgebra $\As = \Span\{\ketbra{j_L}{k_L}, j,k=0,1\}$, and may be thought of as encoding a logical qubit. Again, the fact that $\Cs$ contains no other operators is proven in Appendix \ref{app:xxz_extra} for $A_{xy}=0$, and tested numerically for other values of $A_{xy}$ up to $N=6$. $\Cs$ is then isomorphic to 
\[
\check{\Cs} = \Cb^{2\times 2} \oplus \bigoplus_{m\in \mathbb{M}} 
\Cb \subset \Cb^{N+1\times N+1}.
\]
Let $U\in\Bf(\Hc)$ denote the permutation matrix that has $\ket{0_L},\ket{1_L}$ as the first two columns, followed by the columns $\ket{s_m}$ grouped together according to the value of $m$. The CPTP projector $\Pc$ onto $\Cs$ is then given by 
\[
\Pc(\rho) = W_0  \rho W_0^\dag \bigoplus_{m \in \mathbb{M}}
\tr[W_m \rho W_m^\dag] \,\frac{\one_{d_m}}{d_m} , 
\]
where $W_0 \equiv \big[\begin{array}{c|c} \one_2  & 0_{2\times 2^{N-1}} \end{array}\big] U^\dag \in\Cb^{2\times 2^N}$ and $W_m \equiv \big[\begin{array}{c|c|c} 0 & \one_{d_m}  & 0 \end{array}\big] U^\dag \in\Cb^{d_m\times 2^{N-1}}$. The respective CPTP reduction and injection maps are, in turn, given by
\begin{align*}
    \Rc(\rho) &= W_0  \rho W_0^\dag \bigoplus_{m \in \mathbb{M}}
     \tr[W_m \rho W_m^\dag] ,&
    \Jc(\check{\rho}) &= p_{\frac{N}{2}}\sigma \bigoplus_{m \in \mathbb{M}}
      p_m \frac{\one_{d_m}}{d_m} ,
\end{align*}
where the time-evolved reduced state $\check{\rho}$ is
\[ \check{\rho}(t)
= p_{\frac{N}{2}} \sigma(t) \oplus \bigoplus_{m \in \mathbb{M}}
p_m , \]
with $p_m\geq 0$, $p_{\frac{N}{2}}+\sum_{m \in \mathbb{M}}
p_m = 1,$ and $\sigma(t) \in\Df(\Cb^2)$, $p_{\frac{N}{2}} \equiv \tr[W_0  \rho W_0^\dag]$.
With these, we can obtain the reduced Hamiltonian $\check{H} = \check{H}_{\frac{N}{2}} \oplus \bigoplus_{m\in \mathbb{M}} 
0 =\check{H}_{\frac{N}{2}} \oplus 0_{(N-1)\times (N-1)}$, where $\check{H}_{\frac{N}{2}} = \frac{\omega_0}{2} \sigma_z. $

\begin{figure*}[t]
    \centering
    \begin{subfigure}[c]{0.49\textwidth}
        \includegraphics[width=\linewidth]{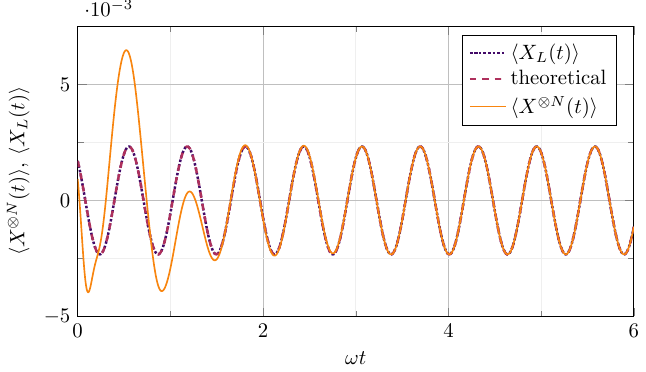}
    \end{subfigure}
    \begin{subfigure}[c]{0.49\textwidth}
    \includegraphics[width=\linewidth]{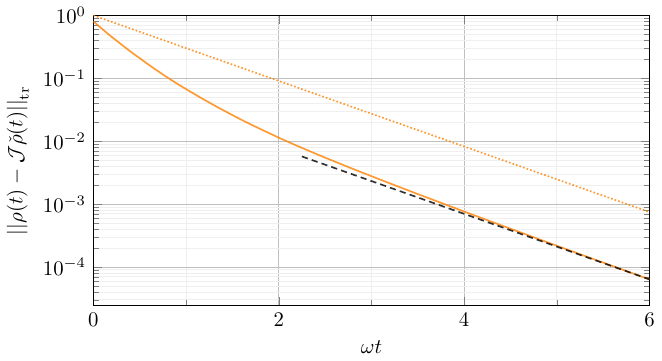}
    \end{subfigure}
    \caption{{\small 
Left: Expectation values of the full and logical observables $X^{\otimes N}$ and $X_L$ versus time, for a system of $N=5$ spins. The dashed line represents the theoretical solution of Eq.\,\eqref{eq:theo_exp_val}.  Right: Exponential convergence of the trace-norm distance between $\rho(t)$ and $\Jc(\check{\rho}(t))$. The continuous line represents the norm computed numerically in a simulation, the dotted line represents the exponential bound given by $\Gamma e^{-\Delta_\Lc t}$, with $\Gamma =1$ and $\Delta_\Lc =0.72$ and the dashed line shows how $\Delta_\Lc$ correctly sets the asymptotic convergence rate. Here the initial condition was chosen randomly while the remaining parameters are $\omega= 1.2$, $A_{xy}=2$, $A_x = 4.6$, and $\gamma=1.2$.} 
}
    \label{fig:red_xxz_crystal}
    \label{fig:kl-convergence}
\end{figure*}

Let $X^{\otimes N} \equiv\prod_{j=1}^{N}\sigma_x^{(j)}$ denote the $N$-body $X$ observable. By writing $X^{\otimes N}$ in the basis defined by $U$, one can observe that $U X^{\otimes N} U^\dag$ has support over the $2\times2$-block, i.e., $W_0 X^{\otimes N} W_0^\dag = \sigma_x$, while it has no support over the remaining diagonal blocks. This implies that the expectation of $X^{\otimes N}$ will converge, as $t\to+\infty$, to the expectation of $ X_L \equiv\ketbra{0_L}{1_L} + \ketbra{1_L}{0_L}$ (the one with support only on the $2\times2$-block).  By also letting $Z_L  \equiv \ketbra{0_L}{0_L} - \ketbra{1_L}{1_L}$ and $Y_L \equiv iX_LZ_L$, we can find an analytical solution for $\expect{X_L(t)}$ for all $t\geq0$, namely, 
\begin{align}
    \expect{X_L(t)} 
    = a_x\cos\left(\omega_0 t\right) - a_y\sin\left(\omega_0 t\right), \label{eq:theo_exp_val}
\end{align}
where $a_x \equiv \tr(X_L \rho_0)$ and $a_y \equiv \tr(Y_L \rho_0)$. Thus, $\lim_{t\to+\infty} [\expect{X^{\otimes N}(t)} -\expect{X_L(t)}] = 0$, meaning that the spin subsystems exhibit {\em stable quantum synchronization} with respect to the observable $X^{\otimes N}$, independently of the initial condition  \cite{buvca2019non,buvca2022algebraic}. In fact, a behavior similar to Eq.\,\eqref{eq:theo_exp_val} is also found for $Y_L(t)$, with $\lim_{t\to+\infty} [\expect{Y^{\otimes N}(t)} -\expect{Y_L(t)}] = 0$, whereas $Z^{\otimes N}$ is (trivially) stationary. 

It is interesting to observe that the reduced Hamiltonian $\check{H}_{\frac{N}{2}}$, which governs the dynamics in the asymptotic time limit, depends only on the energy splitting $\omega$ of the spins and their number $N$, but not on the interaction parameters $A_{xy}$ and $A_z$. This observation provides an example of how MR may be useful in revealing features of the dynamics that need not be {\em a priori} obvious. In Fig.\,\ref{fig:red_xxz_crystal}, we show numerical simulations for $N=5$ spins, starting from a randomly chosen initial condition. In particular, the left panel explicitly demonstrates the convergence of $\expect{X^{\otimes N}(t)}$ to the predicted asymptotic expectation value $\expect{X_L(t)}$, following a transient. In the right panel, the asymptotic convergence towards the center manifold $\Cs$ is also demonstrated, consistent with Theorem \ref{thm:asymptotic_reduction}.

\section{Parametric Models and Perturbation Theory}
\label{sec:PT}

\smallskip

In realistic systems, reduction to an exact center manifold may not always be the most desirable goal. In fact, only an idealized and highly symmetric system would typically feature a non-trivial and well-identifiable center manifold, whereas realistic systems of interest would only feature meta-stable oscillations (i.e., damped at a very low rate), which we may still want to capture in the reduced model. In other situations, the center manifold may be nontrivial for the actual system, but too complex to determine.
In particular, explicitly characterizing the center manifold $\Cs$  and implementing the CPTP projection $\Pc$ yielding the desired (unitary) reduced dynamics may quickly become analytically and numerically intractable as the system size grows, as it is the case in many-body systems. Different approaches to compute $\Cs$ have been put forward, that allow for the problem to be circumvented in special scenarios: for example, symmetry principles are invoked in \cite{PhysRevResearch.5.L012003} to construct noise operators that yield coherent oscillations in the long-time regime, while necessary and sufficient conditions for purely imaginary eigenvalues to emerge  and  a characterization of $\Cs$ given the Hamiltonian and noise operators are proposed in \cite{buvca2022algebraic}. It is also worth mentioning that, just like progress has been made in determining exact steady-state solutions for various Lindbladian models of interest \cite{ClerkHTR}, similar preliminary results have been obtained for certain oscillatory behaviors \cite{tokieda2026exact}.

In view of the above, here we shift our focus to a {\em perturbative} approach to MR: we assume that the model at hand may be described as a perturbation of another model, where the latter admits an easy to be determined, nontrivial center manifold. In this scenario, we show how an approximate CPTP reduced model can be obtained, whose dimension is the one of the unperturbed center manifold, i.e.~nontrivial and including possible meta-stable oscillations.

\subsection{Perturbative setting}
\smallskip
\label{sub:pert}

To develop a perturbative scheme, we assume that the evolution in Eq.\,\eqref{ME} is governed by a generator $\Lc_\varepsilon$ which is Lindblad for all $\varepsilon$ in a continuous interval $\Sc\subseteq \Rb$ that contains $0$, and depends {\em analytically} upon $\varepsilon$. Thanks to the analyticity assumption, a Taylor expansion around $\varepsilon=0$ is then possible:
\[\Lc_\varepsilon = \sum_{k=0}^{+\infty} \varepsilon^k \Lc_{(k)},\; \] 
where the unperturbed generator $\Lc_{(0)}\equiv \Lc_{\varepsilon=0}$ is assumed to be in Lindblad form and sufficiently simple to study, while the terms $\Lc_{(k)}$ represent the perturbing contributions. For example, one can consider the paradigmatic case where $\Lc$ may be expressed in the form $\Lc\equiv \Lc_0+\Delta\Lc,$ where $\Lc_0$ is Lindblad and has a known, or easy to compute, center manifold $\Cs_0,$ and $\|\Delta\Lc\|=\varepsilon_0\ll \|\Lc_0\|$ for some super-operator norm. Then, one can formally define $\Lc_{(0)}\equiv \Lc_0,$ $\Lc_{(1)}\equiv \Delta\Lc/\varepsilon_0$ and consider the linearly perturbed model $\Lc_\varepsilon=\Lc_{(0)}+\varepsilon\Lc_{(1)},$ which matches the model of interest for $\varepsilon=\varepsilon_0.$

Examples of relevant scenarios that fit this framework include: (i) a (finite-dimensional) system with a {\em weak Hamiltonian  coupling to a structured environment}, whose dynamics are modeled with a combination of Hamiltonian and dissipative generators \cite{Tamascelli_2018}; (ii) {\em approximately preserved quantum codes} in continuous-time quantum-information models \cite{lidar2013quantum}: the dominant dynamics exhibits symmetries and allows for information preservation, but small perturbations cause information leakage and dissipation outside the code (e.g., a multi-qubit system evolving under {\em collective} Markovian decoherence and Hamiltonian terms, which ensures the existence of a decoherence-free subspace or noiseless subsystem \cite{KLV,lidar2013quantum}, subject to small local dissipation effects); (iii) more generally,  {\em symmetry-breaking perturbations}: Hamiltonian and strong dissipative dynamics that admit symmetries and preserved states, together with small perturbations that break such symmetries \cite{gu2024spontaneous}.

Note that, for $\Lc_\varepsilon$ to be a Lindblad generator, a {\em sufficient} condition is to require that all orders of $\Lc_{(k)}$ are Lindblad generators. While this is a common assumption in the adiabatic elimination literature \cite{azouit2017towards,PhysRevA.109.062206}, it is {\em not} necessary: for instance, \(\Dc_{L+\varepsilon K}(\rho)\) is a valid Lindblad generator for all $\varepsilon\in\Rb,K,L$; however, its Taylor expansion takes the form 
\begin{equation}
\Dc_{L+\varepsilon K}(\rho) = \Dc_L(\rho) + \varepsilon\Big(L\rho K^\dag + K \rho L^\dag - \tfrac{1}{2}\{K^\dag L+L^\dag K,\rho\}\Big) +\varepsilon^2 \Dc_K(\rho),\label{eq: LK}
\end{equation}
where the first-order term need not be in Lindblad form\footnote{Taking e.g., $K=\sigma_x$ and $L=\sigma_z$ on a qubit, the first-order term yields the Bloch equations $\dot{x}= 2 z$, $\dot{y}=0$, $\dot{z}=2x$. These {\em cannot} correspond to a valid Lindbladian, since the generator features a positive eigenvalue.}. In what follows, we shall only assume that the full generator $\Lc_\varepsilon$ is a valid Lindblad generator for all $\varepsilon\in\Sc$, while no such requirement is imposed on each of the $\Lc_{(k)}$.

Within the above perturbation setting, we can highlight two different, yet related, center manifolds; specifically:
\begin{itemize}
    \item $\Cs_0$: the center manifold of the unperturbed generator $\Lc_{(0)}$ (with $\check{\Cs}_0$ its isomorphic smaller representation);
    \vspace*{-1mm}
    \item $\Cs_\varepsilon$: the center manifold of the perturbed generator $\Lc_\varepsilon$.
\end{itemize}
For sufficiently small $\varepsilon$, one can prove that $\dim(\Cs_\varepsilon) \leq \dim(\Cs_0)$, see e.g., \cite{kato2013perturbation}.  In what follows, we assume to be able to compute $\Cs_0$.

\begin{remark}[On the robustness of oscillatory modes]
As we already mentioned, the center manifold is not generically robust to arbitrary perturbations. Persistent coherent oscillations, in particular, are associated with purely imaginary eigenvalues of the generator, and generic perturbations will typically move these eigenvalues into the left half-plane, thereby turning such oscillations into exponentially damped modes. Particular cases where oscillations are not disrupted by the perturbations include Hamiltonian perturbations that leave the center manifold (or parts of it) invariant, potentially shifting the oscillation frequencies without destroying the oscillations themselves. By contrast, dissipative perturbations acting non-trivially on the unperturbed center manifold, or Hamiltonian perturbations that couple the center manifold to decaying modes, will induce decoherence and damping. Thus, in general robustness must be analyzed on a case-by-case basis. Relevant results in this sense are presented in \cite{loizeau2026krylovspaceperturbationtheory}. 
\end{remark}

\subsection{Approximate model reduction: Reduced dynamics on the unperturbed center manifold}

\smallskip
\label{sec:algebraic_reduction}

The possibility of obtaining reduced Lindbladians on algebras, that the fundamental Theorem \ref{thm:Lindblad reduction} included in Appendix \ref{app:aymptotic_model_reduction} affords, is a result applicable to a setting far more general than that of Theorem \ref{thm:asymptotic_reduction}: it ensures that restricting \emph{any} Lindblad generator to the center manifold of a \emph{possibly different} Lindblad generator retains the Lindblad property. By taking these Lindblad generators to be $\Lc_\varepsilon$ and $\Lc_0$, respectively, we can thus obtain a first-order approximation of the long-time dynamics associated to $\Lc_\varepsilon$, guaranteed to be in Lindblad form, by projecting it onto the unperturbed center manifold $\Cs_0$. Of course, since $\Cs_0$ does not match an invariant eigenspace of $\Lc_\varepsilon$, in general, Theorem \ref{thm:asymptotic_reduction} does not apply. As a consequence, for $\varepsilon \neq 0$, the reduced model is in general {\em approximate for all times}. The error the reduced model incurs when the initial state belongs to $\Cs_0$ can be bounded in two ways, comparing the evolution either in the full state space or in $\check{\Cs}$. The following two statements formalize these results.

\begin{corollary}[Approximate Lindblad reduction]
\label{cor:projected_perturbed_dynamics}
    Let $\Lc_\varepsilon$ be a Lindblad generator for all $\varepsilon\in\Sc$, with $\Lc_{(0)} = \Lc_{\varepsilon=0}$. Let $\Cs_0$ be the center manifold of $\Lc_{(0)}$, $\Pc_{(0)}$ be the CPTP $\Lc_{(0)}$-spectral projector onto $\Cs_0$ and $\Rc_{(0)}$ and $\Jc_{(0)}$ the CPTP factors of the projector $\Pc_{(0)}$, as defined in Eqs. \eqref{eqn:reduction} and \eqref{eqn:injection}.  
    Then the reduced generator $\check{\Lc}_{\varepsilon} \equiv \Rc_{(0)}\Lc_\varepsilon\Jc_{(0)} = \sum_{k=0}^{+\infty}\varepsilon^k \Rc_{(0)}\Lc_{(k)}\Jc_{(0)}$ is Lindblad and leaves $\check{\Cs}_0$ invariant.
\end{corollary}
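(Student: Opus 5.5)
The corollary follows almost directly from Theorem~\ref{thm:Lindblad reduction} in Appendix~\ref{app:aymptotic_model_reduction}. As recalled at the start of this subsection, that theorem ensures that if $\Pc=\Jc\Rc$ is the CPTP factorization of the spectral projector onto the center manifold of \emph{some} Lindblad generator, then $\Rc\Lc'\Jc$ is Lindblad for \emph{any} Lindblad generator $\Lc'$. The plan therefore has three steps.

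First, fix $\varepsilon\in\Sc$. Apply Proposition~\ref{prop:spectral_projection} to $\Lc_{(0)}$ to obtain the CPTP projector $\Pc_{(0)}$ onto $\Cs_0$. Then use the structural decomposition of Section~\ref{sub: RCM} (after restricting to $\supp(\Cs_0)$ if needed) to build the CPTP factors $\Rc_{(0)}$ and $\Jc_{(0)}$ via Eqs.~\eqref{eqn:reduction}--\eqref{eqn:injection}, with $\Rc_{(0)}\Jc_{(0)}=\Ic$ on $\check{\Cs}_0$.

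Second, apply Theorem~\ref{thm:Lindblad reduction} with $\Lc'=\Lc_\varepsilon$, which is Lindblad by hypothesis. This gives that $\check{\Lc}_\varepsilon=\Rc_{(0)}\Lc_\varepsilon\Jc_{(0)}$ is a Lindblad generator on $\Bf(\check{\Hc})$. Invariance of $\check{\Cs}_0$ is essentially built in: ${\rm Im}\,\Rc_{(0)}=\check{\Cs}_0$, so $\check{\Lc}_\varepsilon$ maps $\check{\Cs}_0$ into itself, and hence so does $e^{\check{\Lc}_\varepsilon t}$. If the appendix theorem is instead phrased for generators acting on the algebra $\check{\Cs}_0$ itself, the same conclusion holds without change.

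Third, justify the series form. $\Lc_\varepsilon=\sum_k\varepsilon^k\Lc_{(k)}$ converges in super-operator norm, by analyticity, for $\varepsilon$ in the convergence domain. Since $\Rc_{(0)}$ and $\Jc_{(0)}$ are fixed bounded linear maps that do not depend on $\varepsilon$, left- and right-multiplication by them is continuous and commutes with the sum, giving $\check{\Lc}_\varepsilon=\sum_k\varepsilon^k\Rc_{(0)}\Lc_{(k)}\Jc_{(0)}$. It is worth stressing that no individual term needs to be Lindblad. The Lindblad property is inherited only by the full sum, consistently with the assumption that the individual $\Lc_{(k)}$ need not be Lindblad, cf.~Eq.~\eqref{eq: LK}.

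The main obstacle, if any, is checking that Theorem~\ref{thm:Lindblad reduction} really does not require $\Lc'$ to commute with $\Pc_{(0)}$, i.e., that it only uses the CPTP (distorted-algebra) structure of $\Rc_{(0)}$ and $\Jc_{(0)}$. If the theorem were stated only for $\Lc'=\Lc$, I would reprove it along the following route. First, $\Jc_{(0)}$ and $\Rc_{(0)}$ are CPTP, so for small $t$ the composite $\Rc_{(0)}e^{\Lc_\varepsilon t}\Jc_{(0)}=\Ic+t\,\check{\Lc}_\varepsilon+O(t^2)$ is CPTP. Hence $\check{\Lc}_\varepsilon$ is Hermiticity-preserving, annihilates the trace in the dual picture, and is conditionally completely positive, since it is the derivative at $t=0$ of a CPTP family equal to $\Ic$ at $t=0$. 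The Lindblad form then follows by the Gorini--Kossakowski--Sudarshan--Lindblad characterization.
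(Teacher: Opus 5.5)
Your proposal is correct and matches the paper's argument: the corollary follows directly from Theorem~\ref{thm:Lindblad reduction} with $\Pc=\Pc_{(0)}$ and $\Lc=\Lc_\varepsilon$. The paper states explicitly that this theorem needs no relation (such as commutation) between the CPTP projector and the generator, so your fallback derivation is not required. Your handling of invariance via ${\rm Im}\,\Rc_{(0)}=\check{\Cs}_0$ and of the series form within the convergence radius is also fine.
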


\begin{theorem}[Error bounds]
    \label{thm:error_bounds}
    Let $\Lc_\varepsilon$ be a Lindblad generator analytic in $\varepsilon$, with $\Lc_{(0)} = \Lc_{\varepsilon=0}$. Let $\Pc_{(0)}$ be the CPTP $\Lc_{(0)}$-spectral projector onto the unperturbed center manifold $\Cs_0$, and let $\check{\Lc}_\varepsilon \equiv \Rc_{(0)}\Lc_\varepsilon\Jc_{(0)}$. Then, for all $t\geq0$, we have: 
\begin{align}
       \mbox{\it (i)} \;\;\;&  \norm{e^{t\Lc_\varepsilon}\Pc_{(0)} - \Jc_{(0)}e^{ t\check{\Lc}_\varepsilon}\Rc_{(0)}}_{\rm sop}
       \leq  t\, C^2|\varepsilon| \norm{\Lc_{(1)}}_{\rm sop} + O(t\varepsilon^2 + t^2|\varepsilon|),
        \label{eqn:bound_1} \\
        \mbox{\it (ii)} \;\;\;& \norm{\Pc_{(0)}e^{t\Lc_\varepsilon}\Pc_{(0)} - \Jc_{(0)}e^{ t\check{\Lc}_\varepsilon}\Rc_{(0)}}_{\rm sop} \leq \frac{t^2}{2}\,C^3\varepsilon^2 \norm{\Lc_{(1)}}_{\rm sop}^2 
        +  O(t^2|\varepsilon|^3 + t^3 \varepsilon^2) ,
        \label{eqn:bound_2}
    \end{align}
where $\norm{\cdot}_{\rm sop}$ denotes any super-operator norm and we have defined $C\equiv\norm{\Pc_{(0)}}_{\rm sop}$.
\end{theorem}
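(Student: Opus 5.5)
The plan is to rewrite the reduced propagator $\Jc_{(0)}e^{t\check{\Lc}_\varepsilon}\Rc_{(0)}$ as a propagator on the full space, and then compare it with $e^{t\Lc_\varepsilon}$ through a Duhamel formula. Abbreviate $\Pc\equiv\Pc_{(0)}$, $\Rc\equiv\Rc_{(0)}$, $\Jc\equiv\Jc_{(0)}$ and $\Lc\equiv\Lc_\varepsilon$.

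\emph{Step 1 (lifting the reduced dynamics).} Since $\Rc\Jc=\Ic$ and $\Pc=\Jc\Rc$, induction on $k$ gives $\Jc\check{\Lc}_\varepsilon^k\Rc=(\Pc\Lc\Pc)^k\Pc$. Summing the exponential series yields
\[
\Jc e^{t\check{\Lc}_\varepsilon}\Rc=e^{t\Pc\Lc\Pc}\Pc=\Pc e^{t\Pc\Lc\Pc}\Pc .
\]
Both quantities to be bounded are therefore differences of two semigroups on $\Bf(\Hc)$, each applied to $\Pc$.

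\emph{Step 2 (Duhamel identity and first-order leakage).} Differentiate $s\mapsto e^{(t-s)\Lc}e^{s\Pc\Lc\Pc}\Pc$ and integrate over $[0,t]$. Using $e^{s\Pc\Lc\Pc}\Pc=\Pc e^{s\Pc\Lc\Pc}\Pc$, this gives
\[
e^{t\Lc}\Pc-e^{t\Pc\Lc\Pc}\Pc=\int_0^t e^{(t-s)\Lc}\,(\Ic-\Pc)\Lc\Pc\,e^{s\Pc\Lc\Pc}\Pc\,ds .
\]
The key structural input is that $\Pc$ is $\Lc_{(0)}$-spectral (Proposition~\ref{prop:spectral_projection}), so $[\Pc,\Lc_{(0)}]=0$. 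Hence $(\Ic-\Pc)\Lc_{(0)}\Pc=0$ and $\Pc\Lc_{(0)}(\Ic-\Pc)=0$. Consequently the leakage term satisfies
\[
(\Ic-\Pc)\Lc\Pc=\varepsilon(\Ic-\Pc)\Lc_{(1)}\Pc+O(\varepsilon^2),
\]
and likewise for $\Pc\Lc(\Ic-\Pc)$.

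For (i), expand $e^{(t-s)\Lc}=\Ic+O(t)$ and $e^{s\Pc\Lc\Pc}\Pc=\Pc+O(t)$ inside the integral. The leading term is $t\,\varepsilon\,(\Ic-\Pc)\Lc_{(1)}\Pc$, and everything else is $O(t\varepsilon^2+t^2|\varepsilon|)$. Submultiplicativity then bounds the leading term by $t|\varepsilon|\,\norm{\Ic-\Pc}_{\rm sop}\norm{\Lc_{(1)}}_{\rm sop}C$.

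\emph{Step 3 (second order for the projected error).} Multiply the Duhamel identity on the left by $\Pc$. Since $\Pc(\Ic-\Pc)=0$, we may replace $\Pc e^{(t-s)\Lc}(\Ic-\Pc)$ by $\Pc\big(e^{(t-s)\Lc}-\Ic\big)(\Ic-\Pc)$, which equals $(t-s)\Pc\Lc(\Ic-\Pc)+O((t-s)^2)$. The integrand thus becomes
\[
(t-s)\,\Pc\Lc(\Ic-\Pc)\,(\Ic-\Pc)\Lc\Pc\,\Pc+\dots ,
\]
and by Step 2 each of the two off-diagonal factors carries one power of $\varepsilon$. Integrating $\int_0^t(t-s)\,ds=t^2/2$ gives the leading term
\[
\tfrac{t^2}{2}\,\varepsilon^2\,\Pc\Lc_{(1)}(\Ic-\Pc)\Lc_{(1)}\Pc ,
\]
with remainder $O(t^2|\varepsilon|^3+t^3\varepsilon^2)$. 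Bounding the leading term produces $C\cdot\norm{\Lc_{(1)}}\cdot\norm{\Ic-\Pc}\cdot\norm{\Lc_{(1)}}\cdot C$.

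\emph{Expected obstacle.} The main difficulty is matching the stated constants $C^2$ and $C^3$, because the argument naturally produces factors of $\norm{\Ic-\Pc}_{\rm sop}$, which is not bounded by $C$ for an arbitrary norm. I would first try to bound $\norm{\Ic-\Pc}_{\rm sop}$ by $C$, or absorb it using $C\ge1$ (valid for any nonzero projector in an operator-induced norm), possibly after rewriting $(\Ic-\Pc)\Lc_{(1)}\Pc=\Lc_{(1)}\Pc-\Pc\Lc_{(1)}\Pc$. If that fails, I would state the bounds with $\norm{\Ic-\Pc}_{\rm sop}$ in place of one factor of $C$.

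A secondary issue is keeping the remainders uniform. I would handle this by assuming $|\varepsilon|$ and $t$ lie in bounded ranges, so that $\norm{\Lc_\varepsilon}$ and the exponentials are uniformly bounded; analyticity in $\varepsilon$ gives the $O(\varepsilon^2)$ control on $\Lc_\varepsilon-\Lc_{(0)}-\varepsilon\Lc_{(1)}$.
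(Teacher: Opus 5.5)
Your argument is correct and reaches exactly the leading terms of the paper's proof: $t\varepsilon\,\Qc_{(0)}\Lc_{(1)}\Pc_{(0)}$ for (i) and $\tfrac{t^2}{2}\varepsilon^2\,\Pc_{(0)}\Lc_{(1)}\Qc_{(0)}\Lc_{(1)}\Pc_{(0)}$ for (ii), where $\Qc_{(0)}=\Ic-\Pc_{(0)}$. The difference is in how you get there. The paper compares the two exponential series term by term, $\sum_k \frac{t^k}{k!}\big[\Lc_\varepsilon^k-(\Pc_{(0)}\Lc_\varepsilon\Pc_{(0)})^k\big]\Pc_{(0)}$, and reads off the first (resp.\ second) order in $t$. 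You use an exact Duhamel identity instead.

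Your route buys two things. First, every term of the error visibly carries the factor $\Qc_{(0)}\Lc_\varepsilon\Pc_{(0)}=O(\varepsilon)$. This is what actually justifies a remainder $O(t^2|\varepsilon|)$ rather than $O(t^2)$ in (i), a point the paper leaves implicit. Second, suppose you keep the integrand exact and work in the trace-norm-induced norm. There $e^{(t-s)\Lc_\varepsilon}$ and $\Jc_{(0)}e^{s\check{\Lc}_\varepsilon}\Rc_{(0)}$ are CPTP, hence contractions, so you get the remainder-free bound $t\,\|\Qc_{(0)}(\Lc_\varepsilon-\Lc_{(0)})\Pc_{(0)}\|_{1\to1}$, valid for every $t\ge0$.

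For (ii) you should make one step explicit. The $O((t-s)^2)$ part of $\Pc_{(0)}(e^{(t-s)\Lc_\varepsilon}-\Ic)\Qc_{(0)}$ must itself carry a factor $|\varepsilon|$; otherwise your remainder is only $O(t^3|\varepsilon|)$. This holds because $[\Pc_{(0)},\Lc_{(0)}]=0$ gives $\Pc_{(0)}\Lc_{(0)}^k\Qc_{(0)}=0$. Hence every Taylor coefficient $\Pc_{(0)}\Lc_\varepsilon^k\Qc_{(0)}=\Pc_{(0)}(\Lc_\varepsilon^k-\Lc_{(0)}^k)\Qc_{(0)}$ is $O(\varepsilon)$. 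The paper's $O(t^3)$ term needs the same observation.

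On the constants: the paper closes exactly the step you flagged by asserting $\|\Qc_{(0)}\|_{\rm sop}=\|\Pc_{(0)}\|_{\rm sop}=C$, citing Szyld's identity. With that, your bounds become exactly $C^2\|\Lc_{(1)}\|_{\rm sop}$ and $C^3\|\Lc_{(1)}\|_{\rm sop}^2$. That identity is a Hilbert-space fact: it holds for operator norms induced by an inner product, e.g.\ the Hilbert--Schmidt-induced norm used in the numerics, provided $\Pc_{(0)}\neq\Ic$ (that case is trivial).

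It is not true for an arbitrary super-operator norm. Take a Lindbladian with a unique pure steady state $\ketbra{\phi}{\phi}$ and no oscillating modes, e.g.\ amplitude damping. Then $\Pc_{(0)}(\rho)=\tr(\rho)\ketbra{\phi}{\phi}$ has $\|\Pc_{(0)}\|_{1\to1}=1$, yet $\|(\Ic-\Pc_{(0)})(\ketbra{\psi}{\psi})\|_1=2$ for $\psi\perp\phi$. So your caution is justified, and absorbing the extra factor via $C\ge1$ cannot work. For a general induced norm, the correct statement is your fallback: replace one factor of $C$ by $\|\Ic-\Pc_{(0)}\|_{\rm sop}\le 1+C$.
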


\noindent 
Corollary \ref{cor:projected_perturbed_dynamics} is a direct consequence of Theorem \ref{thm:Lindblad reduction}. A proof of Theorem \ref{thm:error_bounds} is included in Appendix \ref{app:proofs}. Note that the above bounds are only useful as long as their respective right-hand-side remains small. This is, in particular, the case for times $t$ that are sufficiently small in comparison to the perturbation characteristic time scale $\varepsilon^{-1}$.
Since Eq.\,\eqref{eqn:bound_1} is dominated by how quickly the true evolution may exit the unperturbed center manifold $\Cs_0$, it does no better than integrating the perturbing vector field over time. By following the trajectory projected onto $\Cs_0$, Eq.\,\eqref{eqn:bound_2} provides a bound that scales slower than linearly in $\varepsilon t$.

\begin{remark}[Emerging unitarity]
As a special case of our formalism, we can recover existing results on ``emerging unitarity'' in strongly dissipative dynamics and dissipation-assisted computation over noiseless subsystems, as investigated in \cite{PhysRevLett.113.240406, zanardi2015geometry}. Specifically, linearly perturbed models of the form $\Lc_\varepsilon = \Lc_{(0)} + \varepsilon \Lc_{(1)}$ are considered therein, and the center manifold of the unperturbed generator $\Lc_{(0)}$ is assumed to consist {\em only} of steady states, $\Cs_0\equiv \ker\Lc_{(0)}$, with $\varepsilon \equiv 1/T$, and $T>0$ much longer than the natural relaxation timescale. The ``dissipation-projected dynamics'' is then obtained via projection onto the fixed-points set. Assuming that the perturbative term is purely Hamiltonian, i.e., $\Lc_1(\cdot)=-i[H_1,\cdot]$, it is shown in the above papers that the projected dynamics $\Pc_{(0)}\Lc_{\varepsilon}\Pc_{(0)}$ onto $\ker\Lc_0$ is also purely Hamiltonian. This can be seen as an application of Corollary \ref{cor:projected_perturbed_dynamics} in conjunction with \cite[Corollary 1]{grigoletto2024exactmodelreductioncontinuoustime}. Likewise, the emergence of unitary dynamics over the steady-state manifold is discussed in a scenario where the Lindblad operators are linearly perturbed \cite{zanardi2015geometry,Arenz_2020}, namely, $L\mapsto L+\varepsilon K$. As noted in Eq.\,\eqref{eq: LK}, such a perturbation generates, to order $\varepsilon^1$, a contribution of the form $L\rho K^\dag + K \rho L^\dag - \frac{1}{2}\{K^\dag L + L^\dag K,\rho\}$: when projected onto $\ker\Lc_0$, one may verify that this indeed yields a Hamiltonian generator, as claimed in \cite[Proposition 4]{zanardi2015geometry}. Using Corollary \ref{cor:projected_perturbed_dynamics} and the procedure described in \cite[Appendix]{grigoletto2025quantummodelreductioncontinuoustime}, our formalism also makes it possible to characterize the effects of the perturbation to $O(\varepsilon^2)$, which are encoded in $\Dc_{K}(\rho)$ and reintroduce dissipation in general. 
\end{remark}

\subsection{Approximate model reduction via adiabatic elimination}
\smallskip

An important set of methods that operates under assumptions similar to those of Sec.\,\ref{sub:pert} is AE \cite{azouit2017towards}, which is a specialization to the quantum domain of classical time-scale separation techniques for linear systems, and can be viewed as a direct generalization of Hamiltonian perturbation theory. We first briefly recall the essential ideas, building on results from perturbation theory for linear operators \cite{kato2013perturbation}, as seen through the lenses of our current framework. We then provide a new result on the ``gauge degrees of freedom'' that are intrinsic in the AE formalism \cite{azouit2017towards,riva2024cdc} [Theorem \ref{thm:ad_el_order_1}], highlighting the connection with the approximate Lindblad MR provided in Corollary \ref{cor:projected_perturbed_dynamics}.

\subsubsection{The slow operator subspace.} A small perturbation of the Lindbladian $\mathcal{L}_0$ typically implies a hybridization of other degrees of freedom with the persistent oscillations and steady states of the center manifold $\Cs_0$, inducing their slow damping (a paradigmatic example being in Purcell decay in quantum optics \cite{Houck}. In principle, a block-spectral decomposition should be able to focus on those slowly (or non-) decaying degrees of freedom only. These ideas can be formalized as follows.

Let us denote by $\Ds_\varepsilon$ the {eigenspace of $\Lc_{\varepsilon}$ corresponding to perturbing} the center manifold $\Cs_0$ of $\Lc_{(0)}$. That is, $\Ds_\varepsilon$ is the operator subspace with, say, $\dim(\Ds_\varepsilon) = \dim(\Cs_0) \equiv d_0$,
spanned by the eigenoperators associated to the eigenvalues $\lambda_\varepsilon$, which for $\varepsilon=0$ satisfy $\Re(\lambda_{\varepsilon=0})=0$. For $\varepsilon$ small enough, perturbation theory \cite{kato2013perturbation} implies that $\Cs_\varepsilon\subseteq\Ds_\varepsilon$, with $\Ds_\varepsilon$ corresponding to the slowest eigenvalues of the perturbed generator. Because $\Ds_\varepsilon$ is a sum of eigenspaces, it is $\Lc_\varepsilon$-invariant. This allows one to study the evolution of the system restricted to the {\em perturbed} center manifold $\Ds_\varepsilon$ -- resulting in a reduced model, defined on $\check{\Cs}_0$, that describes the evolution of the slowest components of the dynamics.

Let us denote by $\Pc_\varepsilon$ the $\Lc_\varepsilon$-spectral projector onto $\Ds_\varepsilon$, i.e., ${\rm Im}(\Pc_\varepsilon) = \Ds_\varepsilon$ and $[\Lc_\varepsilon, \Pc_\varepsilon]=0$. Since we are projecting onto the {\em perturbed} space $\Ds_\varepsilon,$ a CPTP $\Pc_{\varepsilon}$ might no longer exist, in general. From \cite{kato2013perturbation}, we know that $\Pc_\varepsilon$ is analytic in $\varepsilon$. Its Taylor expansion reads
\begin{align*}
    \Pc_\varepsilon = \sum_{k=0}^{+\infty} \varepsilon^k \Pc_{(k)}, 
\end{align*}
where $\Pc_{(0)}$ is the zeroth-order term, $\Pc_{(0)} = \Pc_{\varepsilon=0}$, which coincides with the $\Lc_{(0)}$-spectral projector onto $\Cs_0$. Ref.\,\cite{kato2013perturbation} further proves that $\Pc_\varepsilon$ can be obtained from $\Pc_{(0)}$ through an analytic change of basis $\Tc_\varepsilon$, namely:
\begin{align}
    \Pc_\varepsilon = \Tc_\varepsilon \Pc_{(0)} \Tc_\varepsilon^{-1}.
    \label{eq:analitic_change_of_basis}
\end{align}

By Proposition \ref{prop:spectral_projection}, it follows that $\Pc_{(0)}$ is CPTP and, as we have seen above, it can be factorized into $\Pc_{(0)} = \Jc_{(0)}\Rc_{(0)}$, with {$\Rc_{(0)}\Jc_{(0)} = \Ic$, and $\Rc_{(0)}$ and $\Jc_{(0)}$ as in Eqs.\,\eqref{eqn:reduction} and \eqref{eqn:injection}. If we now substitute the factorization of $\Pc_{(0)}$ into Eq.\,\eqref{eq:analitic_change_of_basis}, we obtain
\[
\Pc_\varepsilon = \big(\Tc_\varepsilon \Jc_{(0)} \big) \,\big( \Rc_{(0)} \Tc_\varepsilon^{-1} \big ) \equiv \Jc_\varepsilon \, \Rc_\varepsilon, \] 
a factorization in terms of two functions, $\Rc_\varepsilon:\Bf(\Hc)\to\check{\Cs}_0$ and $\Jc_\varepsilon:\check{\Cs}_0 \to \Bf(\Hc)$, which are analytic in $\varepsilon$ and obey ${\rm Im}\,\Jc_\varepsilon = \Ds_\varepsilon$ and  $\Rc_\varepsilon\Jc_\varepsilon=\Ic_{\check{\Cs}_0}.$ Since the mapping between $\Cs_0$ and $\check{\Cs_0}$ can be defined up to any isomorphism, which reflects an internal ``choice of coordinates'', the two super-operators $\Rc_\varepsilon$ and $\Jc_\varepsilon$ are not unique in general. In the AE context, this isomorphism may itself depend on $\varepsilon$ and it is referred to as a ``gauge degree of freedom'' \cite{azouit2017towards,riva2024cdc}.

To obtain the reduced model on $\Dc_\varepsilon$, let us consider a state $\rho_0\in\Ds_\varepsilon$. Since $\Ds_\varepsilon$ is $\Lc_\varepsilon$-invariant, $\rho(t) = e^{\Lc_\varepsilon t}\rho_0\in\Ds_\varepsilon$ for all $t\geq0$. Then, since $\Pc_\varepsilon$ acts as the identity over $\Ds_\varepsilon$, we have \[\rho(t) = \Pc_\varepsilon e^{\Lc_\varepsilon t} \Pc_\varepsilon \rho_0 = \Jc_\varepsilon e^{\Rc_\varepsilon \Lc_\varepsilon \Jc_\varepsilon t} \Rc_\varepsilon(\rho_0),\] 
where the second equality follows from the fact that the projector $\Pc_\varepsilon$ commutes with $\Lc_{\varepsilon}$. Defining $\hat{\rho}_0 \equiv \Rc_\varepsilon(\rho_0) \in\check{\Cs}_0$ and \(\hat{\Lc}_\varepsilon \equiv \Rc_\varepsilon \Lc_\varepsilon \Jc_\varepsilon, \) one obtains the reduced model 
$\dot{\hat{\rho}}(t) = \hat{\Lc}_\varepsilon(\hat{\rho}(t)),$ where we have introduced the symbol $\,\hat{\cdot}\,$ to distinguish the generator obtained through AE from the one obtained via the algebraic MR procedure described in Sec.\,\ref{sec:algebraic_reduction}.

\subsubsection{Solving the adiabatic elimination equations.}
The block-spectral decomposition that delivers the reduced model can be efficiently computed through a series expansion that generalizes standard Hamiltonian spectral perturbation techniques to block-diagonalization and Lindblad super-operators. The above framework provides a particularly concise formulation and clarifies the role of coordinate choice within the remaining slow-operator block.

Specifically, with the present notation, an iterative procedure as described in Appendix \ref{app:adiabatic_elimination} can compute the reduction and injection maps $\Rc_\varepsilon$ and $\Jc_\varepsilon$ at different orders and, with those, obtain the reduced generator $\hat{\Lc}_\varepsilon$. 
The resulting set of equations, i.e. Eq.\,\eqref{eq:adel:ordersk}, is under-determined, reflecting the free, $\varepsilon$-dependent isomorphism between $\Ds_\varepsilon$ and $\check{\Cs}_{0}$, thus the gauge degree of freedom \cite{azouit2017towards,riva2024cdc}. The latter involves an interplay between the map $\Jc_\varepsilon$ and the dynamical generator $\hat{\Lc_\varepsilon}$, on which our reformulation of the problem provides added insight.

Specifically, let us focus on the leading orders $\varepsilon^0$ and $\varepsilon^1$. Consistent with $\Cs_0$ being ``simple to analyze'', we assume we have determined factorization maps $\Rc_{(0)}$ and $\Jc_{(0)}$ for the projector $\Pc_{(0)}$. To finalize the order $\varepsilon^0$, we can take the invariance condition and left-multiply by $\Rc_{(0)}$ both sides, obtaining $\hat{\Lc}_{(0)} = \Rc_{(0)}\Lc_{(0)}\Jc_{(0)}$. From Theorem \ref{thm:Lindblad reduction}, we know that $\hat{\Lc}_{(0)}$ is Lindblad. 
Similarly, by taking the invariance condition from Eq.\,\eqref{eq:adel:ordersk} at order $\varepsilon^1$, left-multiplying by $\Rc_{(0)}$ and recalling that $\Rc_{(0)}\Lc_{(0)} = \hat{\Lc}_{(0)}\Rc_{(0)}$, we obtain:
\begin{equation}
    \hat{\Lc}_{(1)} =  \Rc_{(0)}\Lc_{(1)}\Jc_{(0)} + \big[ \hat{\Lc}_{(0)}, \Rc_{(0)}\Jc_{(1)} \big]\, .
    \label{eq:ad_el_order_one}
\end{equation}
While the above depends manifestly on both unknowns $\hat{\Lc}_{(1)}$ and $\Jc_{(1)}$, there are at least two situations in the AE literature where $\Jc_{(1)}$ drops out from Eq.\eqref{eq:ad_el_order_one}:
\begin{itemize}
\item $\Cs_0=\ker\Lc_{(0)}$, as in \cite{azouit2017towards}: this means there are no rotating points (only stationary operators) in the center manifold for $\varepsilon=0$, hence $\hat{\Lc}_{(0)} = 0$.
\vspace*{-1mm}
\item $\Rc_{(0)}\Jc_{(1)}=0$, as in \cite{riva2024cdc}: this corresponds to a particular gauge choice, shown to be valid in their context.
\end{itemize} 
In the above cases, Eq.\,\eqref{eq:ad_el_order_one} directly expresses $\hat{\Lc}_{(1)}$ from known elements, which can then be plugged into the un-projected $\varepsilon^1$-equation in \eqref{eq:adel:ordersk} to determine $\Jc_{(1)}$ (in the second case, together with the associated constraint $\Rc_{(0)}\Jc_{(1)}=0$). Notably, the following result proves that this two-step procedure, solving \eqref{eq:ad_el_order_one} then \eqref{eq:adel:ordersk}, is in fact {\em well-posed} also in the general case, at least for orders $\varepsilon^0$ and $\varepsilon^1$:

\begin{theorem}[Freedom at first-order AE]
\label{thm:ad_el_order_1}
With reference to the iterative setting explained above, we have \(\hat{\Lc}_{(1)} =  \Rc_{(0)}\Lc_{(1)}\Jc_{(0)} + \big[ \hat{\Lc}_{(0)}, \Rc_{(0)}\Jc_{(1)} \big]\), where $\Rc_{(0)}\Jc_{(1)}$ is a free variable.
\end{theorem}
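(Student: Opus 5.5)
Our plan is to write out the adiabatic-elimination equations at orders $\varepsilon^0$ and $\varepsilon^1$, and then show that the freedom in choosing $\Rc_{(0)}\Jc_{(1)}$ is exactly the residual gauge freedom of the factorization $\Pc_\varepsilon=\Jc_\varepsilon\Rc_\varepsilon$. We use the invariance condition $\Lc_\varepsilon\Jc_\varepsilon=\Jc_\varepsilon\hat{\Lc}_\varepsilon$ from Eq.\,\eqref{eq:adel:ordersk}. At order $\varepsilon^1$ it reads
\[
\Lc_{(0)}\Jc_{(1)}+\Lc_{(1)}\Jc_{(0)}=\Jc_{(0)}\hat{\Lc}_{(1)}+\Jc_{(1)}\hat{\Lc}_{(0)} .
\]
Left-multiplying by $\Rc_{(0)}$ and using $\Rc_{(0)}\Jc_{(0)}=\Ic$ and $\Rc_{(0)}\Lc_{(0)}=\hat{\Lc}_{(0)}\Rc_{(0)}$ (a consequence of $[\Pc_{(0)},\Lc_{(0)}]=0$ and the factorization) gives Eq.\,\eqref{eq:ad_el_order_one}. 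This is the first half of the claim.

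Next we split the order-one equation with $\Pc_{(0)}$ and $\Ic-\Pc_{(0)}$. The $\Pc_{(0)}$-component is $\Jc_{(0)}$ applied to Eq.\,\eqref{eq:ad_el_order_one}, so it carries no further information. For the complementary component, write $\Qc\equiv\Ic-\Pc_{(0)}$. Since $\Qc$ commutes with $\Lc_{(0)}$ and $\Qc\Jc_{(0)}=0$, this component becomes the Sylvester equation
\[
\Lc_{(0)}\,\Qc\Jc_{(1)}-\Qc\Jc_{(1)}\,\hat{\Lc}_{(0)}=-\Qc\Lc_{(1)}\Jc_{(0)} .
\]
It involves only $\Qc\Jc_{(1)}$ and not $\Rc_{(0)}\Jc_{(1)}$. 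The spectrum of $\Lc_{(0)}$ restricted to ${\rm Im}\,\Qc$ has real parts at most $-\Delta_{\Lc_{(0)}}<0$. The spectrum of $\hat{\Lc}_{(0)}$ is purely imaginary, since by Theorem~\ref{thm:asymptotic_reduction} it is a Hamiltonian generator. The two spectra are therefore disjoint, and the Sylvester equation has a unique solution $\Qc\Jc_{(1)}$ for any $\hat{\Lc}_{(1)}$. We expect this to be the main technical step. The difficulty is to make sure the operator $X\mapsto\Lc_{(0)}X-X\hat{\Lc}_{(0)}$ is invertible on maps $\check{\Cs}_0\to{\rm Im}\,\Qc$; this follows from the standard spectral-separation criterion.

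It remains to show that $\Rc_{(0)}\Jc_{(1)}$ can be prescribed arbitrarily. Take any super-operator $\Gc$ on $\check{\Cs}_0$ and set $\Jc_{(1)}=\Jc_{(0)}\Gc+\Qc\Jc_{(1)}$, with $\Qc\Jc_{(1)}$ the Sylvester solution, and $\hat{\Lc}_{(1)}$ given by Eq.\,\eqref{eq:ad_el_order_one} with $\Rc_{(0)}\Jc_{(1)}=\Gc$. The full order-one equation then holds. We will also check that the companion conditions of Appendix~\ref{app:adiabatic_elimination} at this order are consistent, namely $\Rc_\varepsilon\Jc_\varepsilon=\Ic$, i.e.\ $\Rc_{(1)}\Jc_{(0)}+\Rc_{(0)}\Jc_{(1)}=0$, together with the analogous invariance of $\Rc_\varepsilon$. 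The normalization simply fixes $\Rc_{(1)}\Jc_{(0)}=-\Gc$, and the $\Qc$-part of $\Rc_{(1)}$ solves a dual Sylvester equation. This choice corresponds to the change of coordinates $\Jc_\varepsilon\mapsto\Jc_\varepsilon(\Ic+\varepsilon\Gc)$, $\Rc_\varepsilon\mapsto(\Ic+\varepsilon\Gc)^{-1}\Rc_\varepsilon$. This gauge transformation leaves $\Pc_\varepsilon$ unchanged and to first order shifts $\hat{\Lc}_{(1)}$ by exactly $[\hat{\Lc}_{(0)},\Gc]$. Hence $\Rc_{(0)}\Jc_{(1)}$ is a free variable, which completes the claim.
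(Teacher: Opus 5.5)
Your proof is correct, but it takes a somewhat different route from the paper.

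\textbf{The paper's route.} It works projector-first, in an explicit block basis adapted to $\Lc_{(0)}$. It expands $[\Lc_\varepsilon,\Pc_\varepsilon]=0$ and $\Pc_\varepsilon^2=\Pc_\varepsilon$ to show that $\Pc_{(1)}$ is uniquely determined: its diagonal blocks vanish and its off-diagonal blocks solve two Sylvester equations. Only then does it parametrize all first-order factorizations obeying $\Jc_{(1)}\Rc_{(0)}+\Jc_{(0)}\Rc_{(1)}=\Pc_{(1)}$ and $\Rc_{(0)}\Jc_{(1)}+\Rc_{(1)}\Jc_{(0)}=0$. It finds that the block $J_A^1$, i.e.\ $\Rc_{(0)}\Jc_{(1)}$, is unconstrained.

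\textbf{Your route.} You stay inside the AE invariance equation and split it coordinate-free with $\Pc_{(0)}=\Jc_{(0)}\Rc_{(0)}$ and $\Qc$. You then observe that $\Rc_{(0)}\Jc_{(1)}$ never enters the $\Qc$-component.

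\textbf{Shared core.} Both arguments rest on unique solvability of a Sylvester equation, because the spectrum of $\Lc_{(0)}$ on ${\rm Im}\,\Qc$ is separated from the imaginary axis. Your $\Qc\Jc_{(1)}$ is exactly the paper's block $J_B^1=X_C^{1*}J_A^0$.

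\textbf{What your version buys.}
\begin{itemize}
\item It proves directly that the two-step AE procedure is well posed, without ever computing $\Pc_{(1)}$.
\item It explains the freedom structurally. Apply $\Jc_\varepsilon\mapsto\Jc_\varepsilon(\Ic+\varepsilon\mathcal{G})$ and $\Rc_\varepsilon\mapsto(\Ic+\varepsilon\mathcal{G})^{-1}\Rc_\varepsilon$ to Kato's factorization $\Tc_\varepsilon\Jc_{(0)}$, $\Rc_{(0)}\Tc_\varepsilon^{-1}$. This gives an exact, all-order factorization of the true $\Pc_\varepsilon$ with arbitrary $\Rc_{(0)}\Jc_{(1)}$, under which $\hat{\Lc}_\varepsilon$ changes only by a similarity. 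That is the content of the paper's closing remark that choosing $J_A^1$ amounts to a change of basis in $\Cs_0$.
\end{itemize}

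\textbf{What the paper's version buys.} It additionally shows that $\Pc_{(1)}$ is gauge independent, and it gives $\Rc_{(1)}$ and $\Jc_{(1)}$ explicitly.

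\textbf{Minor simplification.} You do not need Theorem~\ref{thm:asymptotic_reduction} to get the imaginary spectrum of $\hat{\Lc}_{(0)}$. The order-zero invariance $\Lc_{(0)}\Jc_{(0)}=\Jc_{(0)}\hat{\Lc}_{(0)}$ already makes $\hat{\Lc}_{(0)}$ similar to $\Lc_{(0)}|_{\Cs_0}$.
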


\noindent 
A proof is given in Appendix \ref{app:adiabatic_extra}. This implies that the reduced generator can be computed {\em independently} of the part of $\Jc_{(1)}$ that determines $\Pc_{(1)}$, hence one can first compute $\hat{\Lc}_{(1)}$ based on an arbitrary choice of $\Rc_{(0)}\Jc_{(1)}$, and then substitute it into Eq.\,\eqref{eq:adel:ordersk} to determine $\Jc_{(1)}$. 

\begin{remark}[Limitations of AE] Unlike the projection performed in Sec.\,\ref{sec:algebraic_reduction}, AE attains a MR whose accuracy can in principle be pushed, by solving Eqs.\,\eqref{eq:adel:ordersk} to high-enough order, arbitrarily close to the true behavior of the perturbed system initialized in $\Ds_\varepsilon$. As a tradeoff, AE does {\em not} guarantee that $\hat{\Lc_\varepsilon}$ is Lindbladian, however. Indeed, unlike $\Pc_0$, the projector $\Pc_\varepsilon$ is not  {\em a priori} guaranteed to be CPTP. Hence, proving whether (and when) one obtains reduced dynamics in Lindblad form has been the subject of significant effort in the AE literature, with mixed results \cite{PhysRevA.109.062206}. In \cite{AzouitThesis,zanardi2016dissipative}, a Lindbladian $\Lc_{\varepsilon=0}$ with static center manifold $\Cs_0$, perturbed by a Hamiltonian, is shown to {\em always} yield $\hat{\Lc_\varepsilon}$ Lindblad at order $\varepsilon^2$, modulo a proper gauge choice. The authors of \cite{forni2019cdc,forni2018adiabatic} have established further special conditions in this sense, at the next order and for a non-stationary center manifold $\Cs_0$. Conversely, situations have been identified where conventional gauge choices do not lead to Lindblad dynamics, at higher order \cite{EssigAllOrder} or for non-stationary $\Cs_0$ \cite{riva2024cdc}. This feature is best highlighted in  \cite{PhysRevA.109.062206}, where MR to a qubit-like $\check{\Cs}_0$ is considered, and parameter regimes are identified for which, provably, {\em no} gauge choice can make $\hat{\Lc_\varepsilon}$ Lindblad at order $\varepsilon^4$.
\end{remark}

\subsubsection{Guaranteeing CPTP dynamics at first order.}
Our approach allows one to directly establish the existence of Lindblad generators that solve the AE equations to the first order. In fact, the truncated generator at order $\varepsilon^1$ reads 
\begin{equation*}
    \Tilde{\Lc}_\varepsilon = \Rc_{(0)}(\Lc_{(0)} + \varepsilon\Lc_{(1)})\Jc_{(0)} + \varepsilon [\hat{\Lc}_{(0)}, \Rc_{(0)}\Jc_{(1)}], 
\end{equation*}
along with the truncated reduction and injection maps $\Tilde{\Rc}_\varepsilon \equiv \Rc_{(0)}+\varepsilon\Rc_{(1)}$ and $\Tilde{\Jc}_\varepsilon \equiv \Jc_{(0)}+\varepsilon\Jc_{(1)}$, and the reduced approximate initial ``state'' $\Tilde{\rho}_\varepsilon(0) = \Tilde{\Rc}_\varepsilon(\rho_0)$. Note that, if the original generator is $\Lc_\varepsilon = \Lc_{(0)}+\varepsilon\Lc_{(1)},$ and we assume that $\Rc_{(0)}\Jc_{(1)}$ obeys $[\hat{\Lc}_{(0)},\Rc_{(0)}\Jc_{(1)}]=0$, then the generator obtained from the algebraic MR of Sec.\,\ref{sec:algebraic_reduction} {\em coincides} with the reduced generator obtained from AE truncated at order $\varepsilon^1$, that is, $\check{\Lc}_\varepsilon = \Tilde{\Lc}_\varepsilon$. This allows us to obtain the following sufficient condition:

\begin{corollary}[First-order AE Lindblad dynamics]
\label{cor:ad_el_cptp}
Let $\Lc_\varepsilon=\Lc_{(0)}+\varepsilon\Lc_{(1)}$ be a Lindblad generator for all $\varepsilon\in\Sc$, and let $\Pc_{(0)}, \Rc_{(0)}$ and $\Jc_{(0)}$ be as in Theorem \ref{thm:asymptotic_reduction}. Then, if $[\hat{\Lc}_{(0)},\Rc_{(0)}\Jc_{(1)}]=0$, the generator $\Tilde{\Lc}_\varepsilon$ obtained from AE truncated to order $\varepsilon^1$ is also a Lindblad generator for all $\varepsilon\in\Sc$. 
\end{corollary}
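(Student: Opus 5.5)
The proof reduces $\Tilde{\Lc}_\varepsilon$ to the algebraic reduced generator of Corollary \ref{cor:projected_perturbed_dynamics} and then inherits the Lindblad property from there. Since $\Lc_\varepsilon=\Lc_{(0)}+\varepsilon\Lc_{(1)}$ is exactly linear in $\varepsilon$, all higher-order terms $\Lc_{(k)}$, $k\ge2$, vanish. The expression for $\Tilde{\Lc}_\varepsilon$ coming from Theorem \ref{thm:ad_el_order_1} therefore reads
\[
\Tilde{\Lc}_\varepsilon=\Rc_{(0)}\Lc_{(0)}\Jc_{(0)}+\varepsilon\Rc_{(0)}\Lc_{(1)}\Jc_{(0)}+\varepsilon\big[\hat{\Lc}_{(0)},\Rc_{(0)}\Jc_{(1)}\big]=\Rc_{(0)}\Lc_\varepsilon\Jc_{(0)}+\varepsilon\big[\hat{\Lc}_{(0)},\Rc_{(0)}\Jc_{(1)}\big].
\]
Here the zeroth-order term is $\hat{\Lc}_{(0)}=\Rc_{(0)}\Lc_{(0)}\Jc_{(0)}$, as obtained above by left-multiplying the invariance condition by $\Rc_{(0)}$.

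First I check that the hypotheses of Corollary \ref{cor:projected_perturbed_dynamics} are met. The family $\Lc_\varepsilon$ is Lindblad on $\Sc$ by assumption. The maps $\Pc_{(0)},\Rc_{(0)},\Jc_{(0)}$ are the CPTP spectral projector of $\Lc_{(0)}$ onto $\Cs_0$ and its factors from Eqs. \eqref{eqn:reduction}--\eqref{eqn:injection}, as guaranteed by Proposition \ref{prop:spectral_projection} together with the construction preceding Theorem \ref{thm:asymptotic_reduction}. The gauge hypothesis $[\hat{\Lc}_{(0)},\Rc_{(0)}\Jc_{(1)}]=0$ then removes the extra term, leaving $\Tilde{\Lc}_\varepsilon=\Rc_{(0)}\Lc_\varepsilon\Jc_{(0)}=\check{\Lc}_\varepsilon$. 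By Corollary \ref{cor:projected_perturbed_dynamics}, which in turn rests on Theorem \ref{thm:Lindblad reduction}, this operator is a Lindblad generator on $\check{\Cs}_0$ for every $\varepsilon\in\Sc$.

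The only point needing care is that the identity $\Tilde{\Lc}_\varepsilon=\check{\Lc}_\varepsilon$ must hold for every $\varepsilon\in\Sc$, not merely to first order. It does, because with $\Lc_\varepsilon$ exactly linear, the truncation at order $\varepsilon^1$ discards nothing from $\Rc_{(0)}\Lc_\varepsilon\Jc_{(0)}$. Note also that the Lindblad property of $\check{\Lc}_\varepsilon$ does not require $\Lc_{(1)}$ itself to be Lindblad, only the full $\Lc_\varepsilon$. This is precisely the assumption made, so no additional argument is required, and the statement holds on all of $\Sc$.
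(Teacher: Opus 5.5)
Your proposal is correct and follows the same route as the paper. You use linearity of $\Lc_\varepsilon$ together with the gauge condition $[\hat{\Lc}_{(0)},\Rc_{(0)}\Jc_{(1)}]=0$ to obtain $\Tilde{\Lc}_\varepsilon=\Rc_{(0)}\Lc_\varepsilon\Jc_{(0)}=\check{\Lc}_\varepsilon$ for every $\varepsilon\in\Sc$, and then take the Lindblad property directly from Corollary~\ref{cor:projected_perturbed_dynamics}.
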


\noindent
The proof of this result follows trivially from Corollary \ref{cor:projected_perturbed_dynamics}. Note that Corollary \ref{cor:ad_el_cptp} extends previous results on the Lindblad property of AE at order $\varepsilon^1$. In particular, in Ref.\,\cite{azouit2017towards} the authors prove that, under the assumption that $\hat{\Lc}_{(0)}=0$, $\Tilde{\Lc}_{\varepsilon}$ is Lindblad regardless of the choice of $\Rc_{(0)}\Jc_{(1)}$. Here, we extend this result to scenarios where $\hat{\Lc}_{(0)}\neq0$, characterizing a set of choices for $\Rc_{(0)}\Jc_{(1)}$ (which includes $\Rc_{(0)}\Jc_{(1)} = 0$), such that $\Tilde{\Lc}_{\varepsilon}$ remains in Lindblad form.  On the other hand, as stated, Corollary \ref{cor:ad_el_cptp} only provides a {\em sufficient} condition: it is possible that, for certain $\Rc_{(0)}\Jc_{(1)}$ with $[\hat{\Lc}_{(0)},\Rc_{(0)}\Jc_{(1)}]\neq0$, $\Tilde{\Lc}_\varepsilon$ remains nonetheless in Lindblad form.

\begin{remark}
While having $\Tilde{\Lc}_\varepsilon$ in Lindblad form ensures that the corresponding reduced model can be interpreted as a standalone quantum system, its relation to the original model may require more: we may want the reduced ``state'' $\Tilde{\rho}_\varepsilon=\Tilde{\Rc}_\varepsilon(\rho)$ to be a valid density operator for any initialization, and any possible final state in the reduced model to be mapped back to a density operator via $\Tilde{\Jc}_\varepsilon(\Tilde{\rho}_\varepsilon)$. This property has not been proved in general in the AE context. In contrast, the algebraic MR approach guarantees these properties, since the corresponding reduction and injection maps $\Rc_{(0)}$ and $\Jc_{(0)}$ can be {chosen} to be CPTP.
\end{remark}

An important question naturally arises at this point, if we consider the {\em error} induced by the approximation, i.e., $\min_{\Rc_{(0)}\Jc_{(1)}} \big|\big| e^{\Lc_\varepsilon t}\Pc_\varepsilon - \Tilde{\Jc}_\varepsilon e^{\Tilde{\Lc}_\varepsilon t}  \Tilde{\Rc}_{\varepsilon} \big|\big| $: Given that $\Rc_{(0)}\Jc_{(1)}$ can be chosen arbitrarily, is there a choice that {\em minimizes} this error? While obtaining a complete answer is beyond our present scope, some intuition may be gained by numerically exploring how different choices of $\Rc_{(0)}\Jc_{(1)}$ affect the error incurred by the reduced dynamics. We address this issue by example in the next subsection, for the same model we considered previously, which also helps us to showcase other aspects of the connections between algebraic and adiabatic approaches and the results we established.

\subsection{Illustrative applications}
\smallskip

\emph{Example 1: Model reduction in a perturbed dissipative XXZ spin chain.---} Let the dissipative XXZ spin-chain model introduced in Sec.\,\ref{sec:xxz_time_crystal} define our unperturbed model, $\Lc_{(0)}$. Consider a purely Hamiltonian perturbation of the form $\Lc_{(1)}(\cdot) = -i[H_1,\cdot]$ whereby, physically, the parameters acquire site-dependent disorder: 
\begin{align*}
    H_1 &= \sum_{j=1}^N \bigg [\frac{\omega_j}{2} \sigma_z^{(j)} + A_{x,j} \sigma_x^{(j)}\sigma_x^{(j+1)} + A_{y,j} \sigma_y^{(j)}\sigma_y^{(j+1)} + A_{z,j} \sigma_z^{(j)}\sigma_z^{(j+1)} \bigg ],
\end{align*}
with values of $A_{x,j},A_{y,j},A_{z,j}$ and $\omega_j$ sampled from a uniform distribution over  $[0,2)$.

Thanks to the simple form of the perturbation, the reduced perturbed generator can be readily computed, namely, $\check{\Lc}(\cdot) = \Rc \Lc \Jc(\cdot) = -i[\check{H}_\varepsilon,\cdot],$ with the new reduced Hamiltonian given by $\check{H}_\varepsilon = \Jc^\dag( H_0 + \varepsilon H_1) = \check{H}_{\frac{N}{2},\varepsilon} \oplus 0_{(N-1)\times (N-1)},$ where \(\check{H}_{\frac{N}{2}} = \frac{1}{2}(\omega_0+\varepsilon\bar{\omega})\sigma_z\)
and $\bar{\omega} \equiv \sum_{j=1}^N \omega_j.$ 

\begin{figure*}[t]
    \centering
    \centering
    \begin{subfigure}[c]{0.49\textwidth}
        \includegraphics[width=\linewidth]{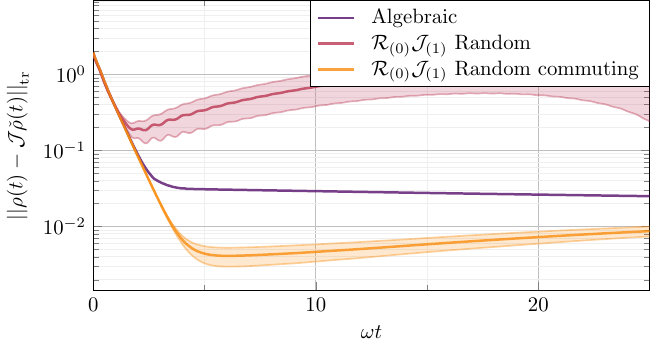}
    \end{subfigure}
    \begin{subfigure}[c]{0.49\textwidth}
        \includegraphics[width=\linewidth]{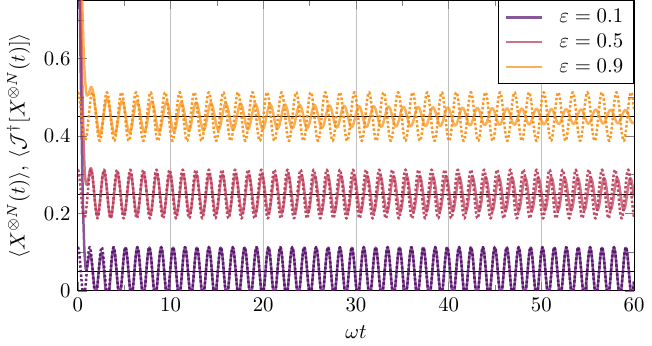}
    \end{subfigure}
    \caption{{\small
    Left: Trace-norm error between the exact and the reduced evolution obtained through algebraic MR and AE techniques for $\Rc_{(0)}\Jc_{(1)}$ randomly chosen and randomly chosen within the commutant $\{\hat{\Lc}_{(0)}\}'$, respectively, with $\varepsilon=0.1$. In the former case, $\Rc_{(0)}\Jc_{(1)}$ was constructed as a matrix with elements sampled from a uniform distribution $\Uc([0,1])$, while in the latter case, it was constructed by picking random uniform values for the diagonal in the same basis as $\hat{\Lc}_{(0)}$. The shaded areas represent the region included in one standard deviation from the average (darker line) of 30 runs of the simulation.  
    Right: synchronization decay as a function of $\varepsilon$. The continuous line represents the evolution of the original model, i.e. $\expect{X^{\otimes N}(t)}$ while the dotted line represents the expectation value predicted by reduced model, i.e. $\expect{\Jc^\dag[X^{\otimes N}(t)]} = \tr\{\Jc^\dag[X^{\otimes N}] \check{\rho}(t)\}$. The curves have been shifted vertically by $\varepsilon/2$ for stylistic reasons. In both figures, $\rho_0=\ketbra{+}{+}^{\otimes N}$, all the other parameters are as in Fig.\,\ref{fig:red_xxz_crystal}. }}
    \label{fig:gauge_choice}
\end{figure*}

\begin{figure*}[t]
    \centering
    \begin{subfigure}[c]{0.49\textwidth}
        \includegraphics[width=\linewidth]{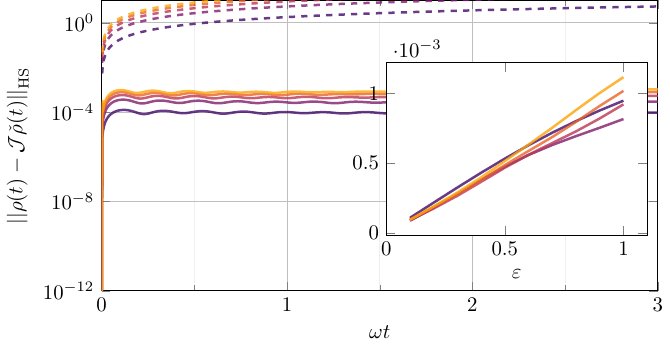}
    \end{subfigure}
    \begin{subfigure}[c]{0.49\textwidth}
        \includegraphics[width=\linewidth]{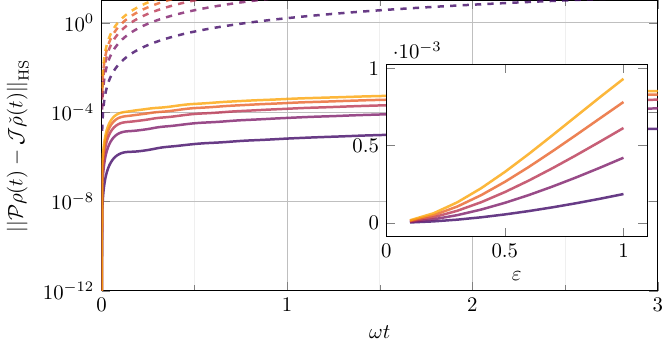}
    \end{subfigure}
    \caption{\small
    Hilbert-Schmidt distance between the reduced model $\Jc\check{\rho}(t)$ obtained through algebraic MR and the exact evolution $\rho(t)$ (left) and the projected exact evolution $\Pc\rho(t)$ (right). The colors represent different values of $\varepsilon$ from $0.1$ to $0.9$ (from dark to light), with increments of $0.2$. For each color, the continuous line represents the Hilbert-Schmidt distance computed via simulation, whereas the dashed line represents the error bound found in Eq.\,\eqref{eqn:bound_1} (left) and \eqref{eqn:bound_2} (right). The insets show the dependence of the same distance with respect to $\varepsilon$ for different values of $\omega t$ from $0.3$ to $2.7$ (from dark to light), with increments of $0.6$. Here, $\rho_0$ is picked at random from $\Cs_0$ and all the other parameters are identical to those in Fig.\,\ref{fig:gauge_choice}. The super-operator norm $\norm{\Lc_{(1)}}_\text{sop}$ used to compute the bounds is the one induced by the Hilbert-Schmidt norm $\norm{\cdot}_{HS}$, i.e. the largest eigenvalue of a matrix representation of $\Lc_{(1)}$  and, since $\Pc=\Pc^\dag$, $C=1$.}
    \label{fig:perturbation_error}
\end{figure*}

In Fig.\,\ref{fig:gauge_choice}(left), we show the error incurred by the reduced dynamics, quantified in terms of the trace-norm distance $||\rho(t) - \Tilde{\Jc}_\varepsilon \Tilde{\rho}_{\varepsilon,t} ||_{\tr}$, for both the algebraic MR and the AE reduced model under different gauge choices. This choice is seen to influence significantly the error behavior. In particular, picking $\Rc_{(0)}\Jc_{(1)}$ at random tends to perform very poorly, while picking $\Rc_{(0)}\Jc_{(1)}$ at random subject to the constraint that $[\hat{\Lc}_{(0)},\Rc_{(0)}\Jc_{(1)}]=0$ yields a much better performance. We further notice that, at longer times, the algebraic MR performs worse than the best cases of AE. This is expected, since the projector $\Pc_{(0)}$ is fixed and does not track the evolution of $\Ds_\varepsilon$. In fact, this difference may be precisely thought of as quantifying the cost of imposing the CPTP requirement in the reduced algebraic model. 

Note that the generator used in the simulation of the orange and purple curve in Fig.\,\ref{fig:gauge_choice} is the same, since for the orange curve $\Rc_{(0)}\Jc_{(1)}$ was chosen to be such that $[\Lc_{(0)},\Rc_{(0)}\Jc_{(1)}] = 0.$ The only difference is thus given by the reduction and injection map used, i.e., $\Rc_{(0)},\Jc_{(0)}$ for the purple curve and $\Tilde{\Rc}_{\varepsilon},\Tilde{\Jc}_{\varepsilon}$ for the orange curves. This further highlights the importance of the projection map used to compute the reduced model. In Fig.\,\ref{fig:gauge_choice}(right), we can see that when the model is perturbed, the synchronization that was previously observed is lost in the long-time regime. Interestingly however, the synchronization persists in the short- to intermediate-time regime (at first order), depending on the strength of the perturbation $\varepsilon$. This is due to the fact that the evolution leaves the unperturbed center manifold, and is further confirmed by Fig.\,\ref{fig:perturbation_error}, where the error trajectory and the respective bounds given in Eqs.\,\eqref{eqn:bound_1}, \eqref{eqn:bound_2} for the reduced algebraic model are shown. In particular, in the insets one observes that the dependence of the distance on the perturbation strength $\varepsilon$ is {\em linear} for $\norm{\rho(t)-\Jc\check{\rho}(t)}_{HS}$ and {\em quadratic} for $\norm{\Pc\rho(t)-\Jc\check{\rho}(t)}_{HS}$.

\medskip

\noindent
\emph{Example 2: Model reduction in a perturbed spin chain with dephasing.---} As a second example, inspired by the analysis in \cite[Chapter 3]{zhang2024driven}, we consider a purely dissipative spin-$1/2$ chain, again under periodic boundary conditions, for which a dominant dephasing dynamics is perturbed by both unital and non-unital transverse dissipation. That is, the Lindblad generator is given by \(\Lc_\varepsilon = \Lc_{(0)} + \varepsilon\Lc_{(1)} , \) with $\Lc_{(0)} \equiv -i[H_0,\cdot] + \sum_{j=1}^N \Dc_{\sigma_z^{(j)}}$, $H_0 \equiv A_z \sum_{j=1}^N \sigma_z^{(j)}\sigma_z^{(j+1)}$, $A_z \in {\mathbb R}$, and $\Lc_{(1)} \equiv \sum_k \Dc_{L_k}$, where the noise operators $L_k$ are
\begin{align*}
    L_{x,j} = \mu_{x,j} \sigma_x^{(j)}, \quad 
    L_{+,j} = \mu_{+,j} \sigma_+^{(j)}, \quad
    L_{-,j} = \mu_{-,j} \sigma_-^{(j)}, \quad \mu_{x,j},\mu_{+,j},\mu_{-,j} \geq 0, \:\forall j.
\end{align*}
In this case, the unperturbed center manifold corresponds to the diagonal algebra $\Cs_0 = \ker\Lc \equiv \Span\{\ketbra{s}{s}\}$, where $s$ denotes all the bitstrings of length $N$. Let then $\Pc$ be the CPTP projector onto the diagonal ${\bf p}=[{\bf p}_1,\ldots {\bf p}_n]^\top=\diag (\rho)$, i.e., $\Pc(\rho) = \sum_s \ketbra{s}{s} \rho \ketbra{s}{s},$ and $\Rc,\Jc$ its two factors
\begin{align*}
    \Rc : &\:\Cb^{n\times n} \mapsto \Cb^n, 
    \quad \rho \mapsto \Rc(\rho) = \sum_s \ketbra{s}{s}\rho\ket{s} ,  \\
    \Jc : &\:\Cb^{n} \mapsto \Cb^{n\times n}, 
   \quad  {\bf p} \mapsto \Jc({\bf p}) = \sum_s \ketbra{s}{s}{\bf p}_s,
\end{align*}
where $n\equiv 2^N$. Notably, by projecting $\Lc_{(1)}$ onto $\Cs_0$ we obtain, similarly to \cite{Benoist_2021}, a {\em classical} model of the form \( \dot{\bf p}(t) = L {\bf p}(t)\), where ${\bf p}\equiv\Rc({\rho})$ is a probability distribution (i.e. $[\bf p]_j\geq0$ and ${\bf 1}^T{\bf p}=1$) and $L$ is a Metzler matrix (i.e., $[L]_{j,k}\geq0$, $\forall j\neq k$), with ${\bf 1}^T L = 0$. One can further note that, in this case, $\Lc_{(1)}$ leaves the steady-state manifold $\Cs_0$ invariant.

For concreteness, let us first study the action of noise operators acting only on the first qubit, i.e., $L_{x,1},L_{+,1}, L_{-,1}.$ If we denote by $0*$ any bitstring of length $N$ that starts with $0$ and, correspondingly, by $1*$ the identical one with the first bit flipped, we have that  
\begin{align*}
    \sum_{u=x,+,-}\Dc_{L_{u,1}}(\ketbra{0*}{0*}) &= \alpha_1(\ketbra{1*}{1*} - \ketbra{0*}{0*}), \quad \alpha_1 \equiv \mu_{x,1}^2 + \mu_{-,1}^2 , \\
    \sum_{u=x,+,-}\Dc_{L_{u,1}}(\ketbra{1*}{1*}) &= \beta_1(\ketbra{0*}{0*} - \ketbra{1*}{1*}) , \quad \beta_1 \equiv \mu_{x,1}^2 + \mu_{+,1}^2. 
\end{align*}
The action of $L_{x,1},L_{+,1}, L_{-,1}$ onto $\Cs_0$ can thus be described by
$$\begin{bmatrix}
    -\alpha_1&\beta_1\\
    \alpha_1&-\beta_1
\end{bmatrix}\otimes \one_{2^{N-1}}.$$
One can verify that such a matrix is indeed Metzler, as $\alpha_1,\beta_1\geq0$. Combining the effect of the dissipators on all spins, we obtain the reduced approximate generator in the Metzler-matrix form
\begin{equation}
    L = \varepsilon \bigg(\sum_{j=1}^N \one_{2^{j-1}} \otimes \begin{bmatrix}
        -\alpha_j&\beta_j\\\alpha_j&-\beta_j
    \end{bmatrix}\otimes \one_{2^{N-j}}\bigg).
    \label{eq:Metzler_matrix}
\end{equation}
Since in this case $\hat{\Lc}_0=0$, the AE generator truncated at order $\varepsilon^1$ coincides with the generator described by the above equation.
Thus, the only difference between the two dynamics stems from the different reduction and injection maps.

\begin{figure*}[t]
    \centering
    \begin{subfigure}[c]{0.49\textwidth}
        \includegraphics[width=\linewidth]{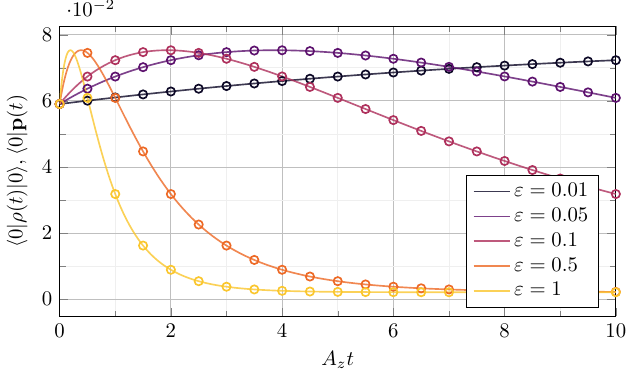}
    \end{subfigure}
    \begin{subfigure}[c]{0.49\textwidth}
        \includegraphics[width=\linewidth]{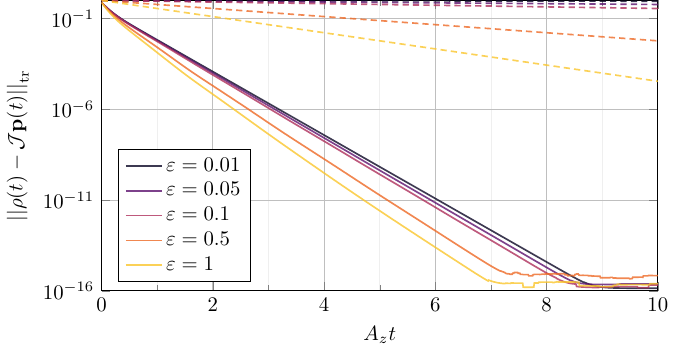}
    \end{subfigure}
    \vspace*{-2mm}
    \caption{\small{Simulation results for the perturbed purely dissipative spin chain, with $N=5$, random initial condition $\rho_0$, and a random set of parameters $\mu_{u,j},$ $u=x,+,-$. Left: Probability of measuring $\ketbra{0\dots0}{0\dots0}$ versus time for different $\varepsilon$. The continuous line represents the full model, $\bra{0}\rho(t)\ket{0}$, whereas the empty dots represent the reduced model, $\bra{0}{\bf p}(t)$. Right: Convergence of the trace-norm distance between $\rho(t)$ and $\Jc {\bf p}(t)$ for different $\varepsilon$. The dashed lines represent exponential convergence, according to $\Gamma e^{- \Delta_{\Lc_\varepsilon}t}$, for different $\varepsilon$. }}
    \label{fig:dissipative}
\end{figure*}

Numerical simulation results are shown in Fig.\,\ref{fig:dissipative}(left). One can observe that, because of the special structure of the perturbation $\Lc_{(1)}$, there is {\em no} error in the probability of measuring $\ketbra{0\dots0}{0\dots0}$. In fact, there is no error also in measuring all the others states $\ketbra{s}{s}$ (we plotted only the first probability solely for ease of presentation). On the other hand, in Fig.\,\ref{fig:dissipative}(right), one can notice a fast convergence of the error norm, which is due to the exponential decay of the coherences.

\begin{remark}[Structure of the reduced model]
Note that, if we were to represent the reduced state ${\bf p}_t$ as a density operator $\rho(t)$, it would seem as if there is no apparent MR, as one still needs a density operator of size $2^N \times 2^N$. On the other hand, the MR becomes obvious when representing the reduced state as a probability vector, since this makes it apparent that the coherences of the state are not relevant for the dynamics. The same ``hidden'' reduction happens for more general block-diagonal matrices, since the MR we achieve does not come only from the presence of repeated blocks ($\otimes\one_{G,k}$ terms). In particular, when the center manifold $\Cs$ has the block structure $\Cs=\bigoplus_k \Bf(\Hc_{F,k})$, the only parts of the state that are relevant for the dynamics are the blocks along the diagonal, and the off-diagonal blocks can be discarded. Representing the state as a single density operator is then only a matter of representation convenience and smaller representations could be proposed. 
\end{remark}

\section{Conclusions and outlook}
\smallskip

In this paper, we have shown how the operator-algebraic framework developed for exact MR of Markovian quantum dynamics may be brought to bear in developing {\em approximate} MR procedures to characterize the long-time dynamics on a center manifold. Depending on whether the Lindblad dynamics defining the full model has a center manifold that is directly suitable for the intended reduction or can be seen as an analytic perturbation of a Lindblad dynamics that does, we have shown how to achieve MR in a way that is asymptotically exact or makes natural contact with AE techniques. Specifically, by realizing how the ``gauge degree of freedom'' appearing in AE equations can be reformulated as a factorization choice for the projector in the algebraic method, we have elucidated how the two MR methods can be made to match at first order, and thus share their respective advantages and limitations. Application-relevant scenarios in which the proposed methods can be employed include systems weakly coupled to structured environments, systems that admit approximately-preserved quantum information encodings, and Lindblad generators in which exact or approximate symmetries exist (for example, with a Hamiltonian and a dissipative generator that commute, or do so approximately).

By design, the algebraic method returns a reduced generator that is in Lindblad form -- hence retains the CPTP properties of a valid, {\em quantum} reduced model; but, in the perturbative setting we consider, it maintains a fixed order of error with respect to the exact evolution. In contrast, AE can in principle be carried to higher orders of approximation and improved error performance, but with the risk of unavoidably losing the CPTP property. In this context, our results provide further insight on the issues associated to the CPTP requirement in AE-reduced models \cite{PhysRevA.109.062206}. Our analysis also highlights a tradeoff between accuracy and structure of the model, something that is already well known in the transition from general non-Markovian master equations to the Lindblad form \cite{breuer2002theory}. In particular, we highlight how using the available degrees of freedom that ensure a CPTP reduction need not result in the smallest possible reconstruction error. In general, for numerical prediction of a target quantum state or observable, a non-CPTP approximation may be preferable; for physical interpretation, composability, and implementation on quantum simulators, a CPTP approximation is needed or otherwise advantageous. Further exploration of this tradeoff is left for future work. 

We have illustrated the applicability of our methods in analyzing the non-stationary long-time dynamics arising in many-body spin-chain models of interest for non-equilibrium physics and quantum simulation. We expect they will prove useful in other applications of interest, notably, in the context of quantum reservoir engineering, where AE is routinely employed \cite{Robin1}. Likewise, our techniques should also apply beyond the finite-dimensional setting we focused on, subject to appropriate truncation and mathematical modifications \cite{Robin2}. This may be especially relevant to bosonic models that exhibit single or multiple limit cycles \cite{li2024exact, ben2021quantum, Bruder2025}, with emphasis on quantum synchronization studies \cite{vdP1,vdP2}, as well as to bosonic quantum error correction \cite{Terhal_2020}.

A natural question for future work would be to extend the algebraic treatment of perturbed models to higher orders of approximation. This may also enable making contact with the idea of a ``dissipation-projected hierarchy'' put forward in \cite{zanardi2016dissipative} for the case where the unperturbed center manifold $\Cs_0$ includes only steady states. In our context, some reduced model in a space isomorphic to $\Cs_0$ does exist, and it should ``mirror'' the one obtained by higher-order AE, in an appropriate sense. However, it is currently unclear how to compute such model with an approach that would preserve the structural CPTP constraint, as the algebraic MR approach guarantees. 

\section{Acknowledgments}
\smallskip

The authors are indebted to Michiel Burgelman and Yikang Zhang for many useful discussions relevant to various aspects of this work.  F.T. acknowledges funding from the European Union - NextGenerationEU, within the National Center for HPC, Big Data and Quantum Computing (Project No.\,CN00000013, CN 1, Spoke 10). F.T and T.G. are partially supported by the Italian Ministry of University and Research under the PRIN project ``Extracting essential information and dynamics from complex networks", grant No.\,2022MBC2EZ. A.S. is partially supported by Plan France 2030 through the project No.\,ANR-22-PETQ-0006. Work at Dartmouth was supported in part by the US Army Research Office under grant No.\,W911NF2210004.

\appendix

\section{Mathematical complements}
\subsection{Model reduction via operator algebras}

\smallskip
\label{app:aymptotic_model_reduction}

In this subsection we recall relevant results from the literature about CPTP projectors and their images \cite{lindblad1999general,wolf2012quantum,ticozzi2017alternating,bialonczyk2018application,blume2010information,wolf2010inverseeigenvalueproblemquantum}, as well as model-reduced dynamics on distorted algebras \cite{grigoletto2024exactmodelreductioncontinuoustime}. Our first two results, contained in Proposition \ref{prop:structure_of_fixed_points} and Theorem \ref{thm:Lindblad reduction}, holds in general, without any assumption on the relationship between the projector $\Pc$ and the generator $\Lc$. A relation between $\Pc$ and $\Lc$ emerges once the properties the reduced dynamics should retain are specified. In Sec.\,\ref{sub: RCM}, these general results are combined with the projection onto the center manifold to construct a reduced Lindblad generator.

\begin{proposition}[Fixed-point structure]
\label{prop:structure_of_fixed_points}
     Let $\Pc$ be a CPTP projector and let $\As$ be its image, i.e., $ \As \equiv {\rm Im}\Pc \subseteq\Bf(\Hc)$. 
    Let $\Hc_R$ be the Hilbert space where $\As$ has no support, in such a way that $\Hc = \supp(\As)\oplus\Hc_R$.
    Then there exist a decomposition of the Hilbert space, 
    \begin{equation}
        \Hc  = 
        \bigoplus_k (\Hc_{F,k}\otimes\Hc_{G,k}) \oplus \Hc_R,
    \end{equation}
    a unitary operator $U\in\Bf(\Hc)$, and a set of density operators $\tau_k\in\Df(\Hc_{G,k})$ such that 
    \begin{equation}
        \As = U\bigg[ \bigoplus_k \big(\Bf(\Hc_{F,k}) \otimes \tau_k\big) \oplus \zero_R \bigg] U^\dag.
        \label{eq:wedderburn_decomposition}
    \end{equation}
    Furthermore, $\As$ is a $*$-algebra closed with respect to the product $X\cdot_\sigma Y \equiv X \sigma Y$, where \[\sigma = U\bigg[ \bigoplus_k(\one_{F,k}\otimes \tau_k^{-1})\oplus \tau_R \bigg]U^\dag,\]  
    with $\tau_R$ being an arbitrary state on $\Hc_R$.
    Let then $W_k$ be a linear operator from $\supp(\As)$ onto $\Hc_{F,k}\otimes\Hc_{G,k}$, such that $W_k W_k^\dag=\one_{F,k}\otimes \one_{G,k}$. Then, $\Pc$ restricted to $\supp(\As)$ takes the form
    \begin{align}
         \Pc|_\As(X) 
            &= U\bigg\{ \bigoplus_{k} \Big[\tr_{\Hc_{G,k}}\left(W_k X W_k^\dag \right)\otimes \tau_k \Big]  \bigg\} U^\dag
        \label{eqn:state_ext_blocks}
    \end{align}
    and is orthogonal with respect to the inner product $\inner{X}{Y}_{\sigma,\lambda} = \tr[X^\dag \sigma^\lambda Y \sigma^{1-\lambda}]$ for $\lambda\in[0,1]$.
\end{proposition}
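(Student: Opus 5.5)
The plan is to reduce to the known structure theory for fixed points of CPTP maps. The first observation is that $\As={\rm Im}\,\Pc$ coincides with the fixed-point set of the CPTP map $\Pc$, because $\Pc^2=\Pc$. I will work in the Heisenberg picture. The adjoint $\Pc^\dag$ is a unital CP idempotent, so by Choi--Effros its image $\Bc\equiv{\rm Im}\,\Pc^\dag$ is a $*$-algebra with respect to the Choi--Effros product $X\circ Y=\Pc^\dag(XY)$.

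\textbf{Step 1: the supporting state and the reduction to $\supp(\As)$.} Take $\sigma_0\equiv\Pc(\one/n)$. This is a fixed state of $\Pc$ of maximal support among fixed states, so $\supp(\As)=\supp(\sigma_0)$. Any fixed state has support in this space, and since $\As$ is spanned by its positive elements, this identifies $\Hc_R$ as the orthogonal complement of $\supp(\sigma_0)$. A standard argument shows that $\supp(\sigma_0)$ is invariant under the Kraus operators of $\Pc$. It follows that $\Pc$ restricted to $\Bf(\supp\sigma_0)$ is a CPTP projector on the smaller space that admits a faithful fixed state. This gives the $\oplus\,\zero_R$ term.

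\textbf{Step 2: faithful case.} Once a faithful invariant state exists, I apply the theorem of Lindblad / Blume-Kohout et al. \cite{lindblad1999general,blume2010information}. It says that the fixed-point set of $\Pc^\dag$ is a genuine $*$-subalgebra. By Wedderburn's theorem, this subalgebra has the form $U[\bigoplus_k \Bf(\Hc_{F,k})\otimes\one_{G,k}]U^\dag$. The same theorem says that the fixed states have the dual form $\bigoplus_k \Bf(\Hc_{F,k})\otimes\tau_k$, where the $\tau_k$ are fixed full-rank states. This yields Eq.~\eqref{eq:wedderburn_decomposition}.

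The conditional expectation structure then determines $\Pc$ on $\supp(\As)$. On each block, $\Pc^\dag(Y)=\bigoplus_k \tr_{G,k}[(\one\otimes\tau_k)W_kYW_k^\dag]\otimes\one_{G,k}$. Taking the dual gives exactly Eq.~\eqref{eqn:state_ext_blocks}.

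\textbf{Step 3: distorted product and orthogonality.} Closure under $X\cdot_\sigma Y=X\sigma Y$ is a direct blockwise computation. For $A,B\in\Bf(\Hc_{F,k})$, we have $(A\otimes\tau_k)(\one\otimes\tau_k^{-1})(B\otimes\tau_k)=AB\otimes\tau_k$. Cross-block terms vanish, and the $\tau_R$ term never acts because elements of $\As$ are zero on $\Hc_R$. Closure under ${}^*$ is immediate since $\tau_k$ is Hermitian.

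For orthogonality, I will verify $\inner{X}{\Pc Y}_{\sigma,\lambda}=\inner{\Pc X}{Y}_{\sigma,\lambda}$ block by block. Since $\sigma^\lambda$ and $\sigma^{1-\lambda}$ act as $\one\otimes\tau_k^{-\lambda}$ and $\one\otimes\tau_k^{\lambda-1}$, we get
\[\tr\big[X^\dag\sigma^\lambda (A\otimes\tau_k)\sigma^{1-\lambda}\big]=\tr\big[X_{F,k}^\dag A\big],\]
where $X_{F,k}$ denotes the $F$-part of the block-$k$ partial trace of $X$. This expression is symmetric under swapping the roles of $X$ and $Y$, which gives self-adjointness.

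\textbf{Main obstacle.} I expect Step 1 to be the hardest part: justifying that restricting to $\supp(\sigma_0)$ gives a well-defined CPTP projector and that the faithful-case theorem applies. I would first try the Kraus-invariance argument: $\tr[\Pc(\rho)\Pi_R]=0$ for $\rho$ supported in $\supp(\sigma_0)$ forces $\Pi_RK_i\Pi_S=0$. Failing that, I would simply cite \cite[Theorem 8]{wolf2010inverseeigenvalueproblemquantum} for the general non-faithful form.
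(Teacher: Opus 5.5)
Your proposal is correct and follows essentially the same route as the paper, which does not prove this proposition itself but defers to the standard fixed-point structure theory of CPTP maps in the cited references (Lindblad; Blume-Kohout et al.; Wolf--Perez-Garcia). Your explicit steps are sound and fill in what the paper leaves to those references: the reduction to $\supp(\Pc(\one/n))$ via Kraus invariance, and the blockwise checks of closure under $X\sigma Y$ and of self-adjointness of $\Pc$ with respect to $\inner{\cdot}{\cdot}_{\sigma,\lambda}$.
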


\noindent 
A proof of this result can be found in the above-mentioned references. In the literature, a $*$-algebra closed with respect to a modified product is also referred to as a \textit{distorted algebra} \cite{blume2010information}, while the block structure provided in Eq.\,\eqref{eq:wedderburn_decomposition} is known as the \textit{Wedderburn decomposition}.

The above representation theorem provides us with a structure for both the algebra and the CPTP projection on the support of the fixed states. From this structure, one can observe that a distorted algebra $\As$ is isomorphic to a {\em smaller} representation, which we denote $\check{\As}$, where the weighted copies are removed, i.e., $\check{\As} \equiv \bigoplus_k \Bf(\Hc_{F,k})\subseteq\Bf(\check{\Hc})$, with $\check{\Hc}=\bigoplus_k \Hc_{F_k}$. This allows us to find a factorization in terms of two maps \cite{grigoletto2023modelreductionquantumsystems}, 
$\Rc:\Bf(\Hc)\to\Bf(\check{\Hc})$, ${\rm Im}\Rc=\check{\As}$ and $\Jc:\check{\As}\to\Bf(\Hc)$, ${\rm Im}\Jc={\As}$, such that 
$$\Rc\Jc = \Ic_{\check{\As}}, \qquad \Pc = \Jc\Rc.$$ 
Furthermore, one can observe \cite{grigoletto2023modelreductionquantumsystems} that these two factors can be chosen to be CPTP, i.e., 
\begin{equation}
\label{eqn:reductionb}
\Rc(X)=\bigoplus_k \tr_{\Hc_{G,k}}(W_k X W_k^\dag)=\bigoplus_k X_{F,k}=\check X , \quad \forall X\in\Bf(\Hc), 
\end{equation}
and for all $\check{X}\in\check{\As}$, $\check{X}=\bigoplus_k X_{F,k}$,
\begin{equation}
\label{eqn:injectionb}
\Jc(\check X)=U\bigg(\bigoplus_k X_{F,k} \otimes\tau_{k}\bigg)U^\dag.
\end{equation}
We name $\Rc$ the \textit{reduction super-operator} and $\Jc$ the \textit{injection super-operator}. These reduction and injection super-operators naturally allow us to compute the reduced dynamics $\check{\Lc} \equiv \Rc\Lc\Jc$ restricted to the algebra $\check{\As}$.  To provide the initial condition for the reduced model in $\As,$ we consider the application of ${\Pc}$ to the initial state. From now on, we will assume that we first reduce the dynamics to $\supp(\As),$ and then consider the compressed description we have just introduced. Importantly, we can prove that the reduced generator $\check{\Lc}$ is a valid Lindblad generator by leveraging the structure of the reduction and injection maps. We here report \cite[Theorem 4]{grigoletto2024exactmodelreductioncontinuoustime} for completeness:

\begin{theorem}[Reduced Lindbladian on algebras]
\label{thm:Lindblad reduction} Let $\Pc$ be a CPTP projection, with $\mathscr{A}$ its image $\As = {\rm Im }\Pc \subseteq \mathcal{B(H)}$, and let $\Rc$ and $\Jc$ be the CPTP factorization of $\Pc = \Jc\Rc$, as defined in Eqs.\,\eqref{eqn:reductionb} and \eqref{eqn:injectionb}. Then, for any Lindblad generator $\Lc$, its restriction to $\check{\As}$, $\check{\Lc}\equiv\Rc\Lc\Jc, $ is also a Lindblad generator, that is, $\check{\Lc}:\check\As\to\check\As$ and $\{e^{\check{\Lc}t}\}_{t\geq 0}$ is a quantum dynamical semigroup. 
\end{theorem}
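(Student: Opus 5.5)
The plan is to write the given Lindbladian in GKLS form, $\Lc(\rho)=-i[H,\rho]+\sum_j \Dc_{L_j}(\rho)$, and compute $\Rc\Lc\Jc$ block by block using the explicit forms \eqref{eqn:reductionb} and \eqref{eqn:injectionb}. From this I would read off a reduced Hamiltonian and reduced noise operators. Since $\Jc$ maps $\check\As$ into $\As$ and $\Rc$ maps into $\check\As$, invariance $\check\Lc:\check\As\to\check\As$ is immediate. All the work therefore goes into showing that $\Rc\Lc\Jc$ has GKLS structure.

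I would avoid a direct computation at first and use a cleaner route via semigroups. For $t\ge 0$, set $\Phi_t\equiv\Rc e^{\Lc t}\Jc$. This is a composition of CPTP maps, hence CPTP on $\check\As$ for every $t\ge0$, with $\Phi_0=\Rc\Jc=\Ic_{\check\As}$ and $\frac{d}{dt}\Phi_t|_{t=0}=\Rc\Lc\Jc=\check\Lc$. The standard characterization I would invoke is this: a linear map $\check\Lc$ generates a CPTP semigroup if and only if $e^{\check\Lc t}$ is CPTP for all $t\ge0$. Equivalently, $\check\Lc$ is trace-annihilating and conditionally completely positive. Both properties are first-order conditions, so they can be inherited from the one-parameter family $\Phi_t$.

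First, $\tr\circ\check\Lc=\frac{d}{dt}\tr\circ\Phi_t|_{0}=0$ because $\Phi_t$ is trace preserving. Second, for conditional complete positivity, take any $\check X\ge0$ in $\check\As\otimes\Bf(\Cb^m)$ and any projector $Q$ onto a subspace of $\ker\check X$. Then $Q\,(\Phi_t\otimes\Ic)(\check X)\,Q\ge0$ with value $0$ at $t=0$, so its derivative at $0$, namely $Q(\check\Lc\otimes\Ic)(\check X)Q$, is positive semidefinite. This is exactly the condition that $\check\Lc$ generates a CP semigroup; it is the Lindblad/Evans characterization of CP semigroups on finite-dimensional $C^*$-algebras. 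Since $\check\As=\bigoplus_k\Bf(\Hc_{F,k})$ is a genuine (undistorted) finite-dimensional $C^*$-algebra, this characterization applies. It yields a GKLS form on $\check\As$, i.e.\ $e^{\check\Lc t}$ is a quantum dynamical semigroup.

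The main obstacle is making this conditional-CP step rigorous on a direct sum of matrix algebras rather than a single full matrix algebra. To handle it, I would embed $\check\As\subseteq\Bf(\check\Hc)$ and extend $\check\Lc$ to all of $\Bf(\check\Hc)$ as $\check\Lc\circ\Ec$, where $\Ec$ is the CPTP pinching onto the diagonal blocks. The extension $\check\Lc\circ\Ec=\Rc\Lc\Jc\Ec$ is again the derivative at $t=0$ of the CPTP family $\Rc e^{\Lc t}\Jc\Ec$, which equals $\Ec$ at $t=0$. I would then argue that $\Ec+t\,\check\Lc\Ec+O(t^2)$ being CPTP forces $\check\Lc\Ec-\kappa(\Ec-\Ic)$ to be conditionally CP for suitable $\kappa\ge0$, or equivalently that $\check\Lc\Ec+\kappa(\Ec-\Ic)$ is a Lindbladian on $\Bf(\check\Hc)$. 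This is possible because $\Ec-\Ic$ is itself a Lindbladian, a pure dephasing generator. If that perturbative step becomes delicate, the fallback is a direct computation. That computation gives $\check H=\Jc^\dag$-type compressions $\bigoplus_k\tr_{G,k}[(\one\otimes\sqrt{\tau_k})W_kHW_k^\dag(\one\otimes\sqrt{\tau_k})]$. For the noise, it gives reduced Lindblad operators obtained by expanding $W_kL_jW_{k'}^\dag(\one\otimes\sqrt{\tau_{k'}})$ in an orthonormal basis of $\Hc_{G,k}\otimes\Hc_{G,k'}$. One would then check that the leftover anticommutator terms combine into an effective Hamiltonian correction.
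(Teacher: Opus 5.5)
Your main argument is correct, but it is not the paper's route, because the paper gives no proof of this statement. It imports the result as Theorem~4 of its earlier exact-reduction work. For explicit reduced Hamiltonian and noise operators it points to a separate block-by-block computation; that is the kind of data used in Example~1, where $\check H_\varepsilon=\Jc^\dag(H_0+\varepsilon H_1)$.

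You argue non-constructively from the CPTP family $\Phi_t=\Rc e^{\Lc t}\Jc$, with $\Phi_0=\Ic_{\check\As}$. Three first-order consequences of $\Phi_t$ being CPTP follow:
\begin{itemize}
\item trace annihilation;
\item Hermiticity preservation (state this explicitly; it is inherited the same way);
\item the kernel-compression inequality $Q(\check\Lc\otimes\Ic)(\check X)Q\geq0$ at every ampliation level.
\end{itemize}
The Nagumo / Schneider--Vidyasagar cone-invariance theorem, applied to $\check\Lc\otimes\Ic_m$ on the self-adjoint part of $\check\As\otimes\Bf(\Cb^m)$, then gives positivity of $e^{t\check\Lc}\otimes\Ic_m$ for all $m$.

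What your route buys is generality and brevity. You only use that $\Rc$ and $\Jc$ are CPTP with $\Rc\Jc=\Ic_{\check\As}$; you need neither the Wedderburn form nor the fact that $\Jc\Rc$ is a projector. What it does not give is the GKLS data, which the explicit computation produces.

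A shorter variant of your idea needs no characterization theorem. For any CPTP $\Phi$, $e^{s(\Phi-\Ic)}=e^{-s}\sum_n s^n\Phi^n/n!$ is CPTP, so each $(\Phi_t-\Ic)/t$ generates a CPTP semigroup. Since CPTP maps form a closed set, the limit $\check\Lc$ does too; equivalently, $e^{t\check\Lc}=\lim_n(\Rc e^{\Lc t/n}\Jc)^n$.

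Three corrections.

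First, the ``main obstacle'' you flag is not one. The positive cone of $\check\As\otimes\Bf(\Cb^m)$ is self-dual, and its normal cone at $X$ is $\{-Y:\ Y\geq0,\ YX=0\}$. So the invariance theorem applies verbatim to a direct sum of matrix algebras.

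Second, the pinching detour would fail and should be dropped. Take $\check\As=\Bf(\Cb^2)\oplus\Cb$ (e.g.\ $\Rc=\Ec$ and $\Jc$ the inclusion), and $\check\Lc=-i[H\oplus0,\cdot\,]$ with $H$ not a multiple of $\one$.
\begin{itemize}
\item Then $e^{t(\check\Lc\Ec+\kappa(\Ec-\Ic))}=e^{-\kappa t}(\Ic-\Ec)+e^{t\check\Lc}\Ec$. This conjugates the first block by $V_t=e^{-iHt}$ but only rescales the off-diagonal blocks.
\item Any Kraus operators would have to act as $c_jV_t$ on the first block and as scalars $d_j$ on the second.
\item That forces $(\sum_jc_j\bar d_j)V_t=e^{-\kappa t}\one$, which is impossible for every real $\kappa$ once $V_t$ is not scalar.
\end{itemize}
Your two formulations of this step also differ in the sign of $\kappa$.

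Third, in the fallback computation there is no leftover to absorb. With full support, summing $M^\dag M$ over your reduced jump operators for each $L_j$ gives exactly $\bigoplus_k\tr_{\Hc_{G,k}}[(\one\otimes\tau_k)W_kL_j^\dag L_jW_k^\dag]$. That is precisely what the compressed anticommutator produces, so no Hamiltonian correction arises.
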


This theorem ensures that, {\em regardless} of the image of the projector $\Pc$, as long as $\Pc$ is CPTP, the reduced model, $\dot{\check{\rho}} (t)=\check{\Lc}(\check{\rho}(t))$, is in Lindblad form. In order to compute the reduced Hamiltonian and noise operators that enter $\check{\Lc}$, given a representation $(H, \{ L_k\})$ of $\Lc$, one can use the procedure described in \cite[Appendix]{grigoletto2025quantummodelreductioncontinuoustime}.

\begin{remark}[Beyond exact MR]
With the above algebraic tools in hand, the key issue becomes how to choose a sub-algebra $\As$, such that the reduced model is actually a good approximation of the original one. In \cite{grigoletto2024exactmodelreductioncontinuoustime,grigoletto2023modelreductionquantumsystems}, a framework for {\em exact} MR is developed, in which knowledge of the initial states or observables of interest is leveraged to construct minimal sub-algebras that ensure perfect reconstruction of the original dynamics.
It is important to stress that Theorem \ref{thm:Lindblad reduction} only provides a \emph{sufficient} condition for quantum MR. This means that, while we can guarantee that $\check{\Lc}$ is Lindblad whenever the projector $\Pc$ is CPTP, it is possible to find reduced generators that are Lindblad even if this is not the case. Despite their appeal, exact reductions may be too stringent a requirement for obtaining useful models in settings of practical relevance, however.  This motivates the approaches we develop in Sections 3 and 4 in the main text.

\end{remark}

\subsection{The adiabatic elimination equations}
\label{app:adiabatic_elimination}

\smallskip

AE \cite{azouit2017towards} aims to compute the dynamics $\hat{\Lc}_\varepsilon$ on $\Ds_\varepsilon$, as well as the perturbed injection map $\Jc_\varepsilon$, by using a Taylor expansion in $\varepsilon$. $\hat{\Lc}_\varepsilon$ and $\Jc_\varepsilon$ can be computed directly, without the need to obtain the reduction map $\Rc_\varepsilon$. This is done by leveraging the so-called  \textit{invariance condition}: since $\Ds_\varepsilon$ is invariant under $\Lc_\varepsilon$, we have that \(\Lc_\varepsilon \Pc_\varepsilon = \Pc_\varepsilon \Lc_\varepsilon \Pc_\varepsilon\). By substituting $\Pc_\varepsilon$ with its factorization $\Pc_\varepsilon = \Jc_\varepsilon\Rc_\varepsilon$ and right-multiplying by $\Jc_\varepsilon$, one obtains the invariance condition as given in \cite{azouit2017towards}:
\begin{equation}
    \Lc_\varepsilon\Jc_\varepsilon = \Jc_\varepsilon \hat{\Lc}_\varepsilon .
    \label{eq:invariance_condition}
\end{equation} 
To solve Eq.\,\eqref{eq:invariance_condition}, both $\Jc_\varepsilon$ and $\hat{\Lc}_\varepsilon$ are expanded into their respective Taylor series, i.e., 
\begin{align*}
    \Jc_\varepsilon &= \sum_{k=0}^{+\infty} \varepsilon^k \Jc_{(k)}, &\hat{\Lc}_\varepsilon&= \sum_{k=0}^{+\infty} \varepsilon^k \hat{\Lc}_{(k)}. 
\end{align*}
Substituting these two expansions into Eq.\,\eqref{eq:invariance_condition} yields 
\begin{equation*}
   \sum_{k=0}^{+\infty} \varepsilon^k \bigg(\sum_{j=0}^k \Lc_{(j)}\Jc_{(k-j)}\bigg) =  \sum_{k=0}^{+\infty} \varepsilon^k \bigg(\sum_{j=0}^k \Jc_{(j)}\hat{\Lc}_{(k-j)}\bigg).
\end{equation*}
Since the equality must hold for all $\varepsilon$, it must hold separately for every order $\varepsilon^k$. This results in a set of equations to solve:
\begin{align}
    \nonumber \varepsilon^0&&\Lc_{(0)}\Jc_{(0)} &= \Jc_{(0)}\hat{\Lc}_{(0)},\\
    \nonumber \varepsilon^1&&\Lc_{(1)}\Jc_{(0)} + \Lc_{(0)}\Jc_{(1)} &= \Jc_{(1)}\hat{\Lc}_{(0)} +  \Jc_{(0)}\hat{\Lc}_{(1)},\\
   \label{eq:adel:ordersk} \vdots&&\vdots \\
   \nonumber  \varepsilon^k&& \sum_{j=0}^k \Lc_{(j)}\Jc_{(k-j)} &= \sum_{j=0}^k \Jc_{(j)}\hat{\Lc}_{(k-j)} \; .
\end{align}
The main idea is to solve this set of equations iteratively, starting at order $\varepsilon^1$ and solving for $\Jc_{(1)}$ and $\hat{\Lc}_{(1)}$, then increasing each in order, until one stops at a given maximum order, say, $K$, to have a truncated approximation of accuracy $\varepsilon^K$ \cite{azouit2017towards}. At order $\varepsilon^k$, assuming that all the equations have been solved up to order $\varepsilon^{k-1}$ means that all the terms $\Jc_{(j)}$ and $\hat{\Lc}_{(j)}$ have been computed for all $j=0,\dots,k-1$, so the resulting equation contains $\hat{\Lc}_{(k)}$ and $\Jc_{(k)}$ as the only unknowns.

\section{Proofs and technical results}
\smallskip

\label{app:proofs}

\begin{proof} \textbf{[Lemma \ref{lem:expo_convergence}]}
Let us decompose $\rho_0$ into $\rho_0 = \sum_j a_jX_j + \sum_j b_jY_j$, where $X_j,Y_j$ form a basis for the generalized eigenspaces of $\Lc$ and $X_j\in\Cs$. Thanks to the spectral properties of the Lindblad generator $\Lc$, we have 
$$ e^{\Lc t}\rho_0 = \sum_{j} a_je^{i\omega_j t } X_j + \sum_j b_je^{\lambda_j t} p_j(t) Y_j,\quad \Re(\lambda_j)<0, $$ 
\noindent 
with $p_j(t)$ some polynomial in $t$. Since $\Pc$ acts trivially on $\Cs$, we have that $e^{\Lc t}\rho_0 - \Pc(e^{\Lc t}\rho_0) = \sum_j b_j e^{\lambda_j t} p_j(t) Y_j$. The norm is then bounded by $\norm{e^{\Lc t}\rho_0 - \Pc(e^{\Lc t}\rho_0)} \leq \sum_{j} |b_j| e^{\lambda_j t} |p_j(t)| \norm{Y_j}.$ Every exponential is bounded by $e^{-\Delta_\Lc t}$, i.e., $e^{\lambda_jt} \leq e^{-\Delta_\Lc t}$ for all $j$. Furthermore we have that, for all $\delta>0$,  $|p_j(t)|e^{\lambda_j t}\leq|p_j(t)|e^{-\Delta_\Lc t}\leq \Gamma_j e^{-(\Delta_\Lc-\delta)t}$ where $\Gamma_j$ can be taken to be $\Gamma_j \geq \max_{t\geq0} |p_j(t)|e^{-\delta t}$, which clearly exists and is non-negative. To conclude, we have \sloppy $\sum_{j} |b_j| e^{\lambda_j t} |p_j(t)| \norm{Y_j} \leq e^{-(\Delta_\Lc-\delta) t}\sum_{j} |b_j|  \Gamma_j \norm{Y_j} \equiv \Gamma e^{-(\Delta_\Lc-\delta) t} $.
\end{proof}

\begin{proof} \textbf{[Proposition \ref{prop:spectral_projection}]}
This proof uses the Dirichlet's Theorem on Diophantine approximation, which we recall for completeness: Given a set of real values $\alpha_1,\dots,\alpha_d\in\Rb$ and a natural number $k\in\Nb$, there exists a set of integers $p_1,\dots,p_d\in\Zb$ and a natural $q\in\Nb$, $1\leq q\leq k^d$ such that $|\alpha_j q - p_j|\leq 1/k$. 
    
Let us denote by $\omega_j\in\Rb$ the imaginary part of the purely imaginary eigenvalues corresponding to the eigen-operators in $X\in\Cs$, i.e., $\Lc(X) = i\omega_j X$ for some $X\in\Cs$, and fix $\alpha_j = \frac{\omega_j}{2\pi}$.  
This, with Dirichlet's Theorem, allows us to construct an increasing sequence $\{q_k\}_{k\in\Nb}\subset\Nb$, and $d$ sets of integers $\{p_{j,k}\}_{k\in\Nb}\subset\Zb$, one for each $j=1,\dots,d$, such that $\lim_{k\to+\infty} |\frac{\omega_j}{2\pi}q_k-p_{j,k}| = 0$. 
Using the sequence $\{q_k\}_{k\in\Nb}$ we then define the projector $\Pc$ as 
    \[\Pc \equiv \lim_{k\to+\infty} e^{q_k \Lc}.\]
Clearly, $\Pc$ is CPTP, since $\Lc$ is a Lindblad generator, and $[\Pc,\Lc]=0$ by construction. It remains to prove that $\Pc$ is a projector and that ${\rm Im}\Pc = \Cs$. 

Let us then consider the eigen-operators of $\Lc$, i.e., $X\in\Bf(\Hc)$ such that $\Lc(X) = (-\mu+i\omega) X$, with $\mu\geq0$ and $\omega\in\Rb$. We can divide the operator space $\Bf(\Hc)$ into the direct sum of the center manifold $\Cs$ and an operator subspace containing all of the other eigen-operators, i.e., $\Fs \equiv \Span\{X\in\Bf(\Hc)| \Lc(X)=(-\mu+i\omega) X, \mu>0\}$ and 
    \(\Bf(\Hc) = \Cs \oplus \Fs.\)
    Then, if $\mu >0$, i.e., $X\in\Fs$ we have that $e^{\Lc q_k}(X) = e^{-\mu q_k}e^{i\omega q_k} X$, hence $\lim_{k\to+\infty} e^{\Lc q_k}(X) = 0$. On the other hand, if $\mu=0$, i.e., $X\in\Cs$, we have $e^{\Lc q_k}(X) = e^{i\omega_j q_k}X = e^{i 2\pi \alpha_j q_k }X = e^{i 2\pi (\alpha_j q_k - p_{j,k}) } e^{i2\pi p_{j,k}}X$; therefore,  
    \[\lim_{k\to+\infty} e^{\Lc q_k}(X) = \lim_{k\to+\infty} 
    e^{i 2\pi 
    (\alpha_j q_k - p_{j,k})}
    e^{i2\pi p_{j,k}}
    X 
    = X, \]
    since $(\alpha_j q_k - p_{j,k}) {\to 0}$ and $p_{j,k}\in\Zb$. 
    This implies that, $\forall X\in\Cs$ $\Pc(X) = X$, while $\forall X\in\Fs$, $\Pc(X)=0$, proving that $\ker\Pc = \Fs$ and ${\rm Im}\Pc = \Cs$ and the fact that $\Pc$ is a projector.  
\end{proof}

\begin{proof} \textbf{[Theorem \ref{thm:asymptotic_reduction}]}
The proof of the first point follows directly from the invariance of the center manifold, i.e., $\Lc\Cs\subseteq\Cs$. In fact, for any $\rho\in\Cs$ and any $t\geq0$, we have that $e^{\Lc t}\rho\in\Cs$; hence, $e^{\Lc t}(\rho) = \Pc e^{\Lc t}(\rho) = \Jc \Rc e^{\Lc t} \Jc\Rc(\rho) = \Jc e^{\check{\Lc} t} \Rc(\rho)$.

The proof of the second point comes from the attractivity of the center manifold. Let us define $\Qc \equiv \Ic - \Pc$ and note that $[\Qc,\Lc]=0$. We can then decompose $\Lc$ into $\Lc = \Pc\Lc\Pc + \Qc\Lc\Qc$ where the mixed terms $\Pc\Lc\Qc=\Qc\Lc\Pc=0$ since $[\Pc,\Lc]=[\Qc,\Lc]=0$. Then, since $[\Qc\Lc\Qc,\Pc\Lc\Pc]=0$ we have that $e^{\Lc t} = e^{\Pc\Lc\Pc t} e^{\Qc\Lc\Qc t}$ and since $\Qc\Lc\Qc$ contains only the eigenvalues of $\Lc$ with strictly negative real part we have $\lim_{t\to+\infty} e^{\Qc\Lc\Qc t} = \Pc$ hence 
$$\lim_{t\to+\infty} \big( e^{\Lc t} - \Pc e^{\Lc t}\big) = 0. $$ 
Since $\Jc e^{\check{\Lc} t} \Rc = \Pc e^{\Lc t}$, this concludes the proof of the second point. 

The proof of the third point follows from the exponential convergence of the center manifold given in Lemma \ref{lem:expo_convergence}.

The proof of the last point follows as a specialization to the continuous-time case of \cite[Theorem 8, point 2]{wolf2010inverseeigenvalueproblemquantum}.
Let us assume $U=\one$ for the sake of simplicity, as the arguments that follow hold for any $U$. Define $ \Ec \equiv \lim_{k\to+\infty} e^{(q_k-1)\Lc}$ with $q_k$ the sequence derived in the proof of Proposition \ref{prop:structure_of_fixed_points}. Then $\Ec$ is CPTP and $\Ec e^\Lc = \Pc$. This implies that $e^\Lc$ maps {\em extreme states} in $\Cs$ into extreme states in $\Cs$, where by {extreme states in $\Cs$ we mean states} that can not be written as a convex combination of other states in $\Cs$. To prove this fact consider an extreme state $\rho\in\Cs$, then $e^\Lc(\rho) = \sum_k \lambda_k \sigma_k$ with a nontrivial convex sum of states $\sigma_k$. Applying $\Ec$ on both sides we obtain $\Ec e^{\Lc}\rho = \Pc(\rho) = \rho = \sum_k \lambda_k \Ec(\sigma_k)$ which is a contradiction. 

Extreme states $\rho\in\Cs$ have the form $\rho = \bigoplus_k \delta_{j,k}\ketbra{\phi}{\phi} \otimes \tau_k  \in\Cs$ for some $j$. Then, for any extreme state $\rho\in\Cs$ we have $e^{\Lc t}(\rho) = \bigoplus_k \delta_{j',k}\ketbra{\phi'}{\phi'} \otimes \tau_k $ for some $j'$. By continuity in time of the evolution it must be $j=j'$. This directly implies that, for any $\rho\in\Cs$, we have 
\[e^{\Lc t}(\rho) = \bigoplus_k \Ec_{t,k} (\rho_k)\otimes\tau_k, \]
where $\Ec_{t,k}$ are a set of CPTP quantum dynamical semigroup that map extreme states into extreme states, i.e., $\Ec_{k,t}$ are unitary conjugations $\Ec_{t,k} = e^{\Lc_{(k)} t}$  with $\Lc_{(k)}(\rho_k) = -i[H_{F,k},\rho_k]$ and $H_{F,k}\in\Bf(\Hc_{F,k})$. The result then follows by summing the effects of each generator in the different blocks.  
\end{proof}

\begin{proof} \textbf{[Theorem \ref{thm:error_bounds}]}
Let us expand both exponentials in their Taylor expansion:
\begin{align*}
  &\norm{\sum_{k=0}^{+\infty} \frac{t^k}{k!} [\Lc_\varepsilon^k - \check{\Lc}_\varepsilon^k]\Pc_{(0)} }_{\rm sop}  
  = \norm{t [\Lc_\varepsilon\Pc_{(0)} - \Pc_{(0)}\Lc_\varepsilon\Pc_{(0)}]  + O(t^2)}_{\rm sop}
   \leq t\norm{\Qc_{(0)}\Lc_\varepsilon\Pc_{(0)}}_{\rm sop}   + O(t^2),
\end{align*}
where $\Qc_{(0)} \equiv \Ic - \Pc_{(0)}$. Defining $\bar{\Lc}_{(0)} \equiv \Lc_\varepsilon - \Lc_{(0)}$, and noting that $\Pc_{(0)}$ commutes with $\Lc_{(0)}$, it follows that 
\begin{align*}
\Qc_{(0)}\Lc_\varepsilon\Pc_{(0)} &= \Qc_{(0)}\Lc_{(0)}\Pc_{(0)} + \Qc_{(0)}\bar{\Lc}_{(0)}\Pc_{(0)}
=\cancel{\Qc_{(0)}\Pc_{(0)}}\Lc_{(0)}\Pc_{(0)} + \Qc_{(0)}\bar{\Lc}_{(0)}\Pc_{(0)}
=\Qc_{(0)}\bar{\Lc}_{(0)}\Pc_{(0)}.
\end{align*}
We can then observe that, since $\Pc_{(0)}$ and $\Qc_{(0)}$ are projectors, we have $\norm{\Pc_{(0)}}_{\rm sop} = \norm{\Qc_{(0)}}_{\rm sop} \equiv C\geq1$ (see Ref.\,\cite{szyld2006many}), and thus $\norm{\Qc_{(0)}\bar{\Lc}_{(0)}\Pc_{(0)}}_{\rm sop} \leq C^2\norm{\bar{\Lc}_{(0)}}_{\rm sop} $.
Furthermore, we have that $\norm{\bar{\Lc}_{(0)}}_{\rm sop} \leq \sum_{k=1}^{+\infty} |\varepsilon|^k \norm{\Lc_{(k)}}_{\rm sop} = |\varepsilon|\norm{\Lc_{(1)}}_{\rm sop} + O(\varepsilon^2)$, implying that 
\[\norm{\left(e^{t\Lc_\varepsilon} - e^{ t\check{\Lc}_\varepsilon}\right)\Pc_{(0)}}_{\rm sop} \leq t C^2 |\varepsilon| \norm{\Lc_{(1)}}_{\rm sop} + O(t^2,\varepsilon^2),\]
whereby the bound in Eq.\,\eqref{eqn:bound_1} immediately follows.

Similarly, we have:
\begin{align*}
  \norm{\sum_{k=0}^{+\infty} \frac{t^k}{k!} \Pc_{(0)}[\Lc_\varepsilon^k - \check{\Lc}_\varepsilon^k]\Pc_{(0)} }_{\rm sop}  \leq t & \norm{[\cancel{\Pc_{(0)}\Lc_\varepsilon\Pc_{(0)}} - \cancel{\Pc_{(0)}\Lc_\varepsilon\Pc_{(0)}}]}_{\rm sop} \\&\qquad + \frac{t^2}{2} \norm{\Pc_{(0)}[\Lc_\varepsilon^2 - \check{\Lc}_\varepsilon^2]\Pc_{(0)}}_{\rm sop} + O(t^3).
\end{align*}
Then, with some algebraic manipulations, one can show that 
\begin{align*}
    \Pc_{(0)}[\Lc_\varepsilon^2 - \check{\Lc}_\varepsilon^2]\Pc_{(0)} & = \varepsilon^2\Pc_{(0)}[\Lc_{(1)}^2 - \Lc_{(1)}\Pc_{(0)}\Lc_{(1)}]\Pc_{(0)} 
    + O(\varepsilon^3),
\end{align*}
and thus  
\begin{align*}
    \norm{\Pc_{(0)}[\Lc_\varepsilon^2 - \check{\Lc}_\varepsilon^2]\Pc_{(0)}}_{\rm sop} \leq C^3\varepsilon^2 \norm{\Lc_{(1)}}_{\rm sop}^2+ O(\varepsilon^3),
\end{align*}
which concludes the proof. Note that in this proof we used the fact that super-operator norms are sub-multiplicative, i.e., $\norm{\Ac\Bc}_{\rm sop}\leq \norm{\Ac}_{\rm sop}\norm{\Bc}_{\rm sop}$ and that projectors have unit norm, i.e., $\norm{\Pc}_{\rm sop}=\norm{\Qc_{(0)}}_{\rm sop}\geq 1$. Since these properties hold for any superoperator norm, the theorem holds for any super-operator norm. Finally, note that in case $\Pc$ is orthogonal with respect to the inner product that defines the desired superoperator norm (e.g., if $\Pc = \Pc^\dag$ and $\norm{\cdot}_{\rm sop}$ is the norm induced by the Hilbert-Schmidt inner product), then $C=1.$
\end{proof}

\label{app:adiabatic_extra}

\begin{proof}\textbf{ [Theorem \ref{thm:ad_el_order_1}]}
Consider the commutative condition 
\(\Lc_\varepsilon \Pc_\varepsilon = \Pc_\varepsilon \Lc_\varepsilon\) and recall that $\Pc_\varepsilon$ is idempotent, i.e., $\Pc_\varepsilon^2 = \Pc_\varepsilon$ and that both $\Lc_\varepsilon$ and $\Pc_\varepsilon$ are analytic. Then, by expanding both $\Lc_\varepsilon$ and $\Pc_\varepsilon$ into their Taylor expansion, and separating different orders of $\varepsilon$, we obtain the couple of equations that must be solved at all orders:
\begin{equation*}
\varepsilon^k:\quad\begin{cases}
        \sum_{j=0}^k \Lc_{(j)}\Pc_{(k-j)} = \sum_{j=0}^k \Pc_{(j)}\Lc_{(k-j)}\\
        \sum_{j=0}^k \Pc_{(j)}\Pc_{(k-j)} = \Pc_{(k)}
    \end{cases}.
\end{equation*}
Observe that, assuming we have solved all the orders up to $\varepsilon^{k-1}$, at order $\varepsilon^k$ we are left to solve two equations in a single unknown,  the projector $\Pc_{(k)}.$ Once $\Pc_{\varepsilon}$ or one of its finite-order approximation has been determined, one can proceed to determine the factorizations $\Jc_\varepsilon$ and $\Rc_\varepsilon$ such that $\Rc_\varepsilon \Jc_\varepsilon = \Ic$ and $\Jc_\varepsilon \Rc_\varepsilon = \Pc_\varepsilon$. Expanding into Taylor expansion, we obtain, at order $\varepsilon^k$ (for $k>0$), the following two constraints: 
\[ \varepsilon^k:\quad
\begin{cases}
    \Pc_{(k)} = \sum_{j=0}^k \Jc_{(j)}\Rc_{(k-j)} \\
    \sum_{j=0}^k \Rc_{(j)}\Jc_{(k-j)}=0
\end{cases}, \]
while, at order $\varepsilon^0$, the second constraint becomes $\Rc_{(0)}\Jc_{(0)} = \Ic$. In this case, if we assume to have solved the problem up to order $\varepsilon^{k-1}$, then at order $\varepsilon^k$ we must  solve a problem with two equations and two unknowns. In this proof we are only going to focus on orders $\varepsilon^0$ and $\varepsilon^1$, and at each order we solve the above two sets of equations one after the other.

$\bullet$ \textit{Order $\varepsilon^0$.}
At order $\varepsilon^0$, we have to solve $\Pc_{(0)}\Lc_{(0)} = \Lc_{(0)}\Pc_{(0)}$ and $\Pc_{(0)}^2 = \Pc_{(0)}$. Then, there exists a basis (which depends on the eigen-decomposition of the unperturbed generator $\Lc_{(0)}$) such that $\Lc_{(0)}$ and $\Pc_{(0)}$ can be written in the following block form 
\begin{equation*}
\Lc_{(0)} \equiv \left[\begin{array}{c|c}
    L_0&0\\\hline
    0&L_S
\end{array}\right], \qquad 
\Pc_{(0)} \equiv \left[\begin{array}{c|c}
    X_A^{0}&X_B^{0}\\\hline
    X_C^{0}&X_D^{0}
\end{array}\right] ,
\end{equation*}
where $L_0$ contains only imaginary eigenvalues, $L_S$ contains only eigenvalues with strictly negative real part, and the off-diagonal blocks are $0$, since both $\Cs_0$ and its complement are invariant. Then, the commutation relation imposes 
\begin{align*}
\begin{cases}
[X_A^{0},L_0]=0,\\
X_B^0L_S-L_0X_B^0 = 0,\\
X_C^0L_0-L_SX_C^0 = 0,\\
[X_D^0,L_S] = 0,
\end{cases}
\end{align*}
and the second and third (Sylvester) equations impose $X_B^0 = X_C^0 = 0$. The condition that $\Pc_{(0)}$ should be a projector imposes that $X_A^0$ and $X_D^0$ are projectors. Furthermore, since $\Pc_{(0)}$ should be a projector onto $\Cs_0$, we have that $X_A^0 = \one$ and $X_D^0 = 0$ thus resulting in 
\begin{equation*}
\Pc_{(0)} = \left[\begin{array}{c|c}
    \one&0\\\hline
    0&0
\end{array}\right].
\end{equation*}
We can then focus on computing the factorization of the projector. Let  \(\Rc_{(0)} = \big[\begin{array}{c|c}
    R_A^0&R_B^0
\end{array}\big]\) and \(\Jc_{(0)} = \big[\begin{array}{c|c}
    (J_A^0)^T&
    (J_B^0)^T
\end{array}\big]^T\). 
With these choices, the constraints on the factorization become  
\begin{align*}
   \begin{cases}
       J_A^0R_A^0 = \one ,\\
       J_A^0R_B^0 = 0,\\
       J_B^0R_A^0 = 0,\\
       J_B^0R_B^0 = 0,\\
       R_B^0J_B^0 = 0 .
   \end{cases}
\end{align*}
From the first constraint, we have $J_A^0$ invertible and $R_A^0 = (J_A^0)^{-1}$. The other constraints impose $R_B^0 = 0 = J_B^0$, thus resulting in \(\Rc_{(0)} = \big[\begin{array}{c|c}
    (J_A^0)^{-1}&0
\end{array}\big]\) and \(\Jc_{(0)} = \big[\begin{array}{c|c}
    J_A^0&0
\end{array}\big]^T.\)
The reduced dynamics at order $\varepsilon^0$ is then given by $\hat{\Lc}_{(0)} = \Rc_{(0)}\Lc_{(0)}\Jc_{(0)} = (J_A^0)^{-1} L_0 J_A^0 $ where $J_A^0$ is simply a change of coordinates in the basis of $\Cs_{(0)}$. 

Note that the CPTP property is not invariant with respect to all changes of coordinates in $\Cs_{(0)}$. The natural basis choice in $\Cs_{(0)}$ is determined by the factorization of the CPTP projection $\Pc_{(0)}$: $\Jc_{(0)} T T^{-1}\Rc_{(0)}=\Pc_{(0)}$ represents a valid factorization for any change of base $T$ but not all of them ensure that $\Jc_{(0)} T$ and $T^{-1}\Rc_{(0)}$ are also CPTP. Since the factorization into $\Jc_{(0)}\Rc_{(0)}=\Pc_{(0)}$ given in Eq.\,\eqref{eqn:reduction} and \eqref{eqn:injection} ensures that $\Jc_{(0)}$ and $\Rc_{(0)}$ are also CPTP, it provides the natural choice for a basis in $\Cs_{(0)}$. 
Furthermore we know that, given $\Lc_0$ that is Lindblad, by choosing $J_A^0 = U^T\otimes U$ for some unitary operator $U\in\check{\Cs}_0$, then $\hat{\Lc}_{(0)}$ is also Lindblad. However, for other changes of bases this need not be the case.
Since $J_A^0$ acts only as a change of basis on the evolution and thus does not change the accuracy of the model, one can take $J_A^0 = \one$.

$\bullet$ \textit{Order $\varepsilon^1$.} In the same basis we had at order $\varepsilon^0$, we can write 
\begin{equation*}
\Lc_{(1)} = \left[\begin{array}{c|c}
    L_A^1&L_B^1\\\hline
    L_C^1&L_D^1
\end{array}\right], \qquad 
\Pc_{(1)} = \left[\begin{array}{c|c}
    X_A^1&X_B^1\\\hline
    X_C^1&X_D^1
\end{array}\right].
\end{equation*}
In this case, the commutation condition $\Lc_{(1)}\Pc_{(0)} + \Lc_{(0)}\Pc_{(1)} = \Pc_{(0)}\Lc_{(1)} + \Pc_{(1)}\Lc_{(0)}  $ imposes:
\begin{align*}
&\begin{cases}
    [L_0,X_A^1] = 0,\\
    L_0X_B^1 - X_B^1L_S = L_B^1,\\
    L_SX_C^1 - X_C^1L_0 = -L_C^1, \\
    [L_S,X_D^1] = 0,
\end{cases}
\end{align*}
while the projector condition $\Pc_{(0)}\Pc_{(1)} + \Pc_{(1)}\Pc_{(0)} = \Pc_{(1)}$ imposes $X_A^1=0$ and $X_{D}^1=0$. The second and third Sylvester equations admit a unique solution, since $L_0$ and $L_S$ share no eigenvalues. Let us denote by $X_B^{1*}$ and $X_C^{1*}$ their solution. This results in the term at order $\varepsilon^1$ of the projector being
\begin{align*}
    \Pc_{(1)} = \left[\begin{array}{c|c}
    0&X_B^{1*}\\\hline
    X_C^{1*}&0
\end{array}\right].
\end{align*}

Let \(\Rc_{(1)} = \big[\begin{array}{c|c}
    R_A^1&R_B^1
\end{array}\big]\) and \(\Jc_{(1)} = \big[\begin{array}{c|c}
    J_A^1&
    J_B^1
\end{array}\big]^T.\)
Then, the constraints are $\Rc_{(0)}\Jc_{(1)} + \Rc_{(1)}\Jc_{(0)} = 0$ and $ \Jc_{(1)}\Rc_{(0)} + \Jc_{(0)}\Rc_{(1)} = \Pc_{(1)}$, which become:
\(R_A^1 = -(J_A^0)^{-1}J_A^1(J_A^0)^{-1}\)
and
\begin{align*}
&\begin{cases}
    R_B^1 = (J_A^0)^{-1}X_B^{1*},\\
    J_B^1 = X_C^{1*}J_A^0.
\end{cases}
\end{align*}
This finally results in 
\begin{align*}
&\Rc_{(1)} = \left[\begin{array}{c|c}
    -(J_A^0)^{-1}J_A^1(J_A^0)^{-1}&(J_A^0)^{-1}X_B^{1*} 
\end{array}\right] ,  
&\Jc_{(1)} = \left[\begin{array}{c}
    J_A^1\\\hline
    X_C^{1*}J_A^0, 
\end{array}\right], 
\end{align*}
where $J_A^1$ is a free variable. The term of order $\varepsilon^1$ in the reduced dynamics is then 
\begin{align*}
    \hat{\Lc}_{(1)} &= \Rc_{(0)}\Lc_{(1)}\Jc_{(0)} + [\hat{\Lc}_{(0)},\Rc_{(0)}\Jc_{(1)}]
    = (J_A^0)^{-1} L_A^1 J_A^0 + [(J_A^0)^{-1} L_0 J_A^0, J_A^1].
\end{align*}
As stated in the theorem, the term $\Rc_{(0)}\Jc_{(1)}$ appearing on the right side of the commutator is indeed a free variable. This concludes the proof.

One may note that for  $J_A^0 = \one$, we obtain
\(\Rc_{(1)} = \big[\begin{array}{c|c}
    -J_A^1&X_B^{1*}
\end{array}\big]\), \(\Jc_{(1)} = \big[\begin{array}{c|c}
    J_A^1&
    X_C^{1*}
\end{array}\big]^T\) and \(\hat{\Lc}_{(1)} = L_A^1 + [L_0, J_A^1].\)
At first sight, the choice of $J_A^1$ now appears to modify the dynamics. However, it is actually equivalent to a basis change in $\Cs_0$, up to terms of order $\varepsilon^2$. The latter would appear through the impact of $J_A^1$ on the next order. Indeed, the spectrum of the dynamics on the exact reduced model (infinite Taylor expansion) should be a physical quantity, independent of gauge choices.
\end{proof}

\section{Additional material for the dissipative XXZ spin-chain example}
\label{app:xxz_extra}
\smallskip 

We start by explicitly showing that the angular momentum operator $J_z$ is a conserved quantity and the generator of a group of strong symmetries.

\begin{lemma}
\label{lam:strong_symmetry_crystal}
Let $\Lc$ be the Lindblad generator with the Hamiltonian and noise operators given in Eqs.\,\eqref{eq:xxz_ham} and \eqref{eq:xxz_jump}. Then 
$\Lc^\dag(J_z) = 0$ and $U_\varphi = e^{-i\varphi J_z}$ is a strong symmetry for the model, i.e., $[U_\varphi,H]=[U_\varphi,L_j]=0$. 
\end{lemma}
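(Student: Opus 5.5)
The plan is to show the commutation relations $[J_z,H]=0$ and $[J_z,L_j]=0$ for all $j$ first, and then deduce both claims from them.

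\emph{The Hamiltonian.} For $H$ in Eq.\,\eqref{eq:xxz_ham}, the on-site terms $\sigma_z^{(j)}$ and the Ising terms $\sigma_z^{(j)}\sigma_z^{(j+1)}$ are diagonal in the computational basis, as is $J_z$, so they commute with $J_z$. For the XY part I will use the identity
\[
\tfrac12\big(\sigma_x^{(j)}\sigma_x^{(j+1)}+\sigma_y^{(j)}\sigma_y^{(j+1)}\big)=\sigma_+^{(j)}\sigma_-^{(j+1)}+\sigma_-^{(j)}\sigma_+^{(j+1)}.
\]
Each term on the right is a hopping term $\sigma_\pm^{(j)}\sigma_\mp^{(j+1)}$.

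\emph{The key local relation.} Single-site algebra gives $[\tfrac12\sigma_z,\sigma_\pm]=\pm\sigma_\pm$, and operators on different sites commute. Hence
\[
[J_z,\sigma_+^{(j)}\sigma_-^{(j+1)}]=(+1-1)\,\sigma_+^{(j)}\sigma_-^{(j+1)}=0,
\]
and likewise for $\sigma_-^{(j)}\sigma_+^{(j+1)}$. This settles the XY part of $H$ and also every $L_j$ in Eq.\,\eqref{eq:xxz_jump}. Periodic boundary conditions cause no problem, since the bond $(N,1)$ has the same form as all the others.

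\emph{Strong symmetry.} Since $U_\varphi=e^{-i\varphi J_z}$ is a function of $J_z$, it commutes with everything that commutes with $J_z$. This gives $[U_\varphi,H]=[U_\varphi,L_j]=0$.

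\emph{Conservation of $J_z$.} I will write out the Heisenberg-picture generator:
\[
\Lc^\dag(X)=i[H,X]+\sum_j\Big(L_j^\dag X L_j-\tfrac12\{L_j^\dag L_j,X\}\Big).
\]
Setting $X=J_z$, the Hamiltonian term vanishes. Since $L_j$ commutes with $J_z$, so does $L_j^\dag$ (take adjoints, using that $J_z$ is Hermitian). Then
\[
L_j^\dag J_z L_j=J_z L_j^\dag L_j=\tfrac12\{L_j^\dag L_j,J_z\},
\]
because $J_z$ commutes with $L_j^\dag L_j$. Each dissipative term therefore cancels, and $\Lc^\dag(J_z)=0$.

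\emph{Main obstacle.} There is no real obstacle. The only step that needs care is the sign bookkeeping in $[\sigma_z/2,\sigma_\pm]=\pm\sigma_\pm$, together with confirming that the XY coupling rewrites exactly into hopping terms. Both are two-by-two matrix checks.
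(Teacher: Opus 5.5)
Your proof is correct and follows the same structure as the paper's: show that $J_z$ commutes with $H$, $L_j$ and $L_j^\dagger$, then deduce both $\Lc^\dagger(J_z)=0$ and $[U_\varphi,H]=[U_\varphi,L_j]=0$. The only difference is how you check $[J_z,L_j]=0$. The paper splits $\sigma_+^{(j)}\sigma_-^{(j+1)}$ into an exchange term $\vec\sigma^{(j)}\cdot\vec\sigma^{(j+1)}$, an Ising term and the antisymmetric part $\sigma_y^{(j)}\sigma_x^{(j+1)}-\sigma_x^{(j)}\sigma_y^{(j+1)}$. Your ladder relation $[\tfrac12\sigma_z,\sigma_\pm]=\pm\sigma_\pm$ does this in one line and also covers the XY part of $H$, which the paper only cites. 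Your explicit cancellation of $L_j^\dagger J_z L_j$ against $\tfrac12\{L_j^\dagger L_j,J_z\}$ also fills in a step the paper leaves implicit.
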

\begin{proof}
We first prove that $\Lc^\dag(J_z)=0$. The fact that $[H,J_z]=0$ is well-known in the literature, see e.g.,  \cite{franchini2017introduction}. It remains to prove that $\sum_j \Dc^\dagger_{L_j}(J_z)=0$, where \(\Dc^\dagger_{L}(X) \equiv L^\dagger X L - \frac{1}{2}\{L^\dag L,X\}\). One way to show this is to notice that $[J_z, L_j] = [J_z,L_j^\dagger] =0$, for all $j$, with $L_j=\sigma_+^{(j)} \sigma_-^{(j+1)}$. Let us rewrite 
\begin{align*}
    \sigma_+^{(j)} \sigma_-^{(j+1)} 
    &= \frac{1}{4}\big( \vec{\sigma}^{(j)}\cdot\vec{\sigma}^{(k)} -  \sigma_z^{(j)}\sigma_z^{(j+1)} \big) 
     +\frac{i}{4} \big( \sigma_y^{(j)}\sigma_x^{(j+1)} -  \sigma_x^{(j)}\sigma_y^{(j+1)}  \big),
\end{align*}
with $\vec{\sigma}^{(j)}\cdot\vec{\sigma}^{(k)} \equiv \sigma_x^{(j)}\sigma_x^{(k)} + \sigma_y^{(j)}\sigma_y^{(k)} + \sigma_z^{(j)}\sigma_z^{(k)}$. Since the operators $\vec{\sigma}^{(j)}\cdot\vec{\sigma}^{(j+1)}$ are invariant under collective rotations generated by $J_z$ \cite{KLV}, it is immediate to see that the operator following the factor $\tfrac{1}{4}$ above commutes with $J_z$. Direct calculation using the Pauli commutation rules shows that the commutator between $\sigma_z^{(j)} +\sigma_z^{(j+1)}$ (hence $J_z$) and the operator following the $\tfrac{i}{4}$ factor also vanishes. The fact that $[J_z,(\sigma_+^{(j)} \sigma_-^{(j+1)})^\dagger] =0$ may be established in an entirely similar manner. 

Since the continuous group $U_\varphi$ is generated by $J_z$, and the latter operator commutes with $H$ and all the $L_j$, it also follows that $[U_\varphi,H]= [U_\varphi,L_j]=0$.
\end{proof}

Next, we prove that the states $\overline{\rho}_m$ in Eq.\,\eqref{eq:steady_states} are steady states for the dynamics of interest.

\begin{proposition}
\label{prop:steady_states}
    Let $\Lc$ be the Lindblad generator with the Hamiltonian and noise operators given in Eqs.\,\eqref{eq:xxz_ham} and \eqref{eq:xxz_jump} and let $\bar{\rho}_m$ be the states defined in Eq.\,\eqref{eq:steady_states}. Then $\Lc(\bar{\rho}_m) = 0$ for all $m=-\frac{N}{2},\dots,\frac{N}{2}$.
\end{proposition}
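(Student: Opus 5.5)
The plan is to write $\bar\rho_m = \Pi_m/d_m$, where $\Pi_m \equiv \sum_{s_m}\ketbra{s_m}{s_m}$ is the spectral projector of $J_z$ onto the eigenspace $\Hc_m$. We then treat the Hamiltonian and dissipative parts of $\Lc$ separately, using the strong symmetry from Lemma \ref{lam:strong_symmetry_crystal}.

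\emph{Hamiltonian part.} Lemma \ref{lam:strong_symmetry_crystal} gives $[H,J_z]=0$. Since $\Pi_m$ is a polynomial in $J_z$, it follows that $[H,\Pi_m]=0$. Hence $-i[H,\bar\rho_m]=0$.

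\emph{Dissipative part.} The same lemma gives $[L_j,J_z]=[L_j^\dagger,J_z]=0$, so $L_j$ and $L_j^\dagger$ commute with $\Pi_m$. Using $\Pi_m^2=\Pi_m$, each dissipator can be rewritten as
\[
\Dc_{L_j}(\Pi_m) = L_j\Pi_m L_j^\dagger - \tfrac12\{L_j^\dagger L_j,\Pi_m\} = \Pi_m\big(L_jL_j^\dagger - L_j^\dagger L_j\big)\Pi_m .
\]
It therefore suffices to show that the operator $\sum_j [L_j,L_j^\dagger]$ vanishes identically.

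\emph{Key computation.} With the convention $\sigma_z\ket{0}=\ket{0}$, we have $\sigma_+=\ketbra{0}{1}$. This gives $\sigma_+\sigma_-=P_0\equiv\ketbra{0}{0}$ and $\sigma_-\sigma_+=P_1\equiv\ketbra{1}{1}$. Since operators on different sites commute,
\[
[L_j,L_j^\dagger]=\gamma\big(P_0^{(j)}P_1^{(j+1)}-P_1^{(j)}P_0^{(j+1)}\big).
\]
Substituting $P_{0,1}=(\one\pm\sigma_z)/2$, the quadratic terms $\sigma_z^{(j)}\sigma_z^{(j+1)}$ cancel. This leaves
\[
[L_j,L_j^\dagger]=\tfrac{\gamma}{2}\big(\sigma_z^{(j)}-\sigma_z^{(j+1)}\big).
\]
Summing over $j=1,\dots,N$ gives a telescoping sum, which vanishes by the periodic boundary condition $\sigma_z^{(N+1)}=\sigma_z^{(1)}$. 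Therefore $\sum_j\Dc_{L_j}(\bar\rho_m)=0$, and together with the Hamiltonian part this yields $\Lc(\bar\rho_m)=0$ for every $m$.

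The only step needing care is this last computation: getting the ladder-operator conventions right and using the periodic boundary conditions. With open boundaries the telescoping sum would leave the boundary term $\tfrac{\gamma}{2}\Pi_m(\sigma_z^{(1)}-\sigma_z^{(N)})\Pi_m$, which does not vanish in general. The extreme cases $m=\pm N/2$ need no separate treatment, since there $\Pi_m$ is rank one and the argument above applies unchanged.
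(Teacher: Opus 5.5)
Your proof is correct, but it handles the dissipator differently from the paper. The paper works in the computational basis. It notes that the diagonal is invariant under each $\Dc_{L_j}$, so there $\Dc$ acts as a classical Markov chain with $10\to 01$ hops at rate $\gamma$. It then shows, by a counting argument, that the uniform distribution on each weight-$m$ sector is stationary: in a cyclic bitstring the pattern $10$ occurs as often as $01$.

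You instead use the strong symmetry to write $\Dc_{L_j}(\Pi_m)=\Pi_m[L_j,L_j^\dagger]\Pi_m$, and then prove the operator identity $\sum_j[L_j,L_j^\dagger]=0$ by telescoping. Since $\sum_j[L_j,L_j^\dagger]=\Lc(\one)$, this says exactly that $\Lc$ is unital. On a bitstring $s$, the operator $\gamma\sum_j\big(P_0^{(j)}P_1^{(j+1)}-P_1^{(j)}P_0^{(j+1)}\big)$ has eigenvalue $\gamma\big(n_{01}(s)-n_{10}(s)\big)$, with $n_{01},n_{10}$ counting cyclic occurrences of $01$ and $10$. So your telescoping sum is the operator-level form of the paper's count.

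What each route buys:
\begin{itemize}
\item Your argument is basis-free and shorter. It also proves more: $\Lc(X)=X\,\Lc(\one)=0$ for every $X$ commuting with $H$ and all $L_j,L_j^\dagger$, in particular every function of $J_z$. It makes clear that the periodic boundary conditions enter only through unitality.
\item The paper's argument sets up the Markov-chain (Metzler-matrix) picture on the diagonal and the weight-balance property $F_m\mathbf{1}_{d_m}=0$. It reuses both later in the appendix to prove convergence to the uniform distributions.
\end{itemize}
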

\begin{proof}
Consider the decomposition of the Hilbert space into $\{\Hc_m\}$, $m=-N/2,\ldots,N/2$ according to the $J_z$ symmetry. On each $\Hc_m$, the state $\bar{\rho}_m$ is proportional to the identity. Therefore, it trivially commutes with the Hamiltonian part. 

To analyze the action of the dissipator $\mathcal{D}$, we first note that the diagonal in the canonical basis $\{\ket{0},\ket{1}\}^{\otimes N}$, remains invariant under the action of any $\mathcal{D}_{L_j}$. The dynamics induced by $\mathcal{D}$ on this diagonal amounts to a classical Markov chain. Let us introduce the compact notation \(\kket{s} \equiv \ketbra{s}{s}\) and study the effect of $L_1 = \sigma_+^{(1)} \sigma_-^{(2)}$ on $\kket{00*}, \kket{01*}, \kket{10*}, \kket{11*}$, where $*$ collectively denotes the values of the bits $j+2,\ldots,N$ on which $L_1$ acts trivially. We then have 
    \begin{align} \label{app:eq:as01}
    \Dc_{L_1}(\kket{00*}) &= \Dc_{L_1}(\kket{01*}) = \Dc_{L_1}(\kket{11*}) = 0,\\  \label{app:eq:as02}
        \Dc_{L_1}(\kket{10*}) &= \gamma(\kket{01*} - \kket{10*}).
    \end{align}
From there, and the same holding for all $j$, the uniform distribution over all $\kket{s_m}$ is a steady state. Indeed, plugging in this state, we get:
\begin{eqnarray*}
    \Dc_{L}\bigg(\tfrac{1}{d_m} \sum_{s_m} \kket{s_m}\bigg) & = & \tfrac{1}{d_m} \sum_{s_m} \Dc_{L}(\kket{s_m}) = \tfrac{1}{d_m} \sum_j \sum_{s_m} \Dc_{L_j}(\kket{s_m}) \; .
\end{eqnarray*}
In this last expression, each term is either 0 rightaway from \eqref{app:eq:as01}, or contributes a term as in \eqref{app:eq:as02}. On the right-hand side, the term $-\tfrac{\gamma}{d_m} \kket{s_m}$ will appear as many times as $s_m$ has a subsequence of the type $10$; the term $+\tfrac{\gamma}{d_m} \kket{s_m}$ will appear as many times as $s_m$ has a subsequence of the type $01$. When considering a bitstring with indices $N+1=1$, these two numbers of occurrences are equal. Hence, we conclude that $\Dc_{L}\left(\tfrac{1}{d_m} \sum_{s_m} \kket{s_m}\right)=0$.
\end{proof}

Let $\As \equiv \Span\{\ketbra{s}{s}\}$, where $s$ are bit strings of length $N$, denote the commutative algebra of the diagonal elements in the standard basis. As a consequence of the above proposition, we also obtain that, as long as $A_{xy}=0$, the diagonal is an invariant subspace, as formalized in the following:

\begin{corollary}
Let $\Lc$ be the Lindblad generator with the Hamiltonian and noise operators given in Eqs.\,\eqref{eq:xxz_ham} and \eqref{eq:xxz_jump}. For $A_{xy}=0$, the commutative algebra $\As$ is $\Lc$-invariant.    
\end{corollary}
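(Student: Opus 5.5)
We need to show that for $A_{xy}=0$ the diagonal algebra $\As=\Span\{\ketbra{s}{s}\}$ is mapped into itself by $\Lc$. Since $\Lc$ is linear, it suffices to check that $\Lc(\ketbra{s}{s})\in\As$ for every bitstring $s$. We split $\Lc$ into its Hamiltonian part $-i[H,\cdot]$ and its dissipative part $\sum_j\Dc_{L_j}$, and treat each separately.

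\textbf{Hamiltonian part.} For $A_{xy}=0$, Eq.\,\eqref{eq:xxz_ham} reduces to
\[
H=\sum_j \Big[\tfrac{\omega}{2}\sigma_z^{(j)}+A_z\sigma_z^{(j)}\sigma_z^{(j+1)}\Big].
\]
Every term here is a product of $\sigma_z$'s, so $H$ is diagonal in the computational basis $\{\ket{0},\ket{1}\}^{\otimes N}$. It follows that $H$ commutes with each projector $\ketbra{s}{s}$, and hence $-i[H,\ketbra{s}{s}]=0$. The Hamiltonian part therefore annihilates $\As$, and a fortiori leaves it invariant.

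\textbf{Dissipative part.} Here we reuse the computation in the proof of Proposition~\ref{prop:steady_states}, specifically Eqs.\,\eqref{app:eq:as01}--\eqref{app:eq:as02}. Each $L_j=\sqrt{\gamma}\,\sigma_+^{(j)}\sigma_-^{(j+1)}$ maps a basis vector $\ket{s}$ either to $0$ or to $\sqrt{\gamma}$ times another basis vector $\ket{s'}$. Consequently $L_j\ketbra{s}{s}L_j^\dag$ is either $0$ or $\gamma\ketbra{s'}{s'}$, which lies in $\As$ in both cases. Moreover, $L_j^\dag L_j=\gamma\,\sigma_-^{(j)}\sigma_+^{(j)}\sigma_+^{(j+1)}\sigma_-^{(j+1)}$ is a product of local projectors, so it is diagonal. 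The anticommutator $\{L_j^\dag L_j,\ketbra{s}{s}\}$ is then diagonal as well. Together these show $\Dc_{L_j}(\ketbra{s}{s})\in\As$ for every $j$, which is exactly the content of Eqs.\,\eqref{app:eq:as01}--\eqref{app:eq:as02}, applied for each $j$ by translation.

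\textbf{Conclusion.} Summing the two parts and extending by linearity over the basis $\{\ketbra{s}{s}\}$ gives $\Lc(\As)\subseteq\As$. There is no real obstacle in this argument; the only point needing care is the observation that $A_{xy}=0$ is what makes $H$ diagonal. When $A_{xy}\neq0$, the flip-flop terms $\sigma_x\sigma_x+\sigma_y\sigma_y$ generate off-diagonal coherences, and invariance fails.
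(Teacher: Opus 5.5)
Your proof is correct and follows essentially the paper's route: the paper deduces the corollary from the observation, made in the proof of Proposition~\ref{prop:steady_states}, that each $\Dc_{L_j}$ preserves the diagonal, together with the fact that for $A_{xy}=0$ the Hamiltonian is diagonal in the computational basis and so commutes with every $\ketbra{s}{s}$. You make both steps explicit, including the diagonal form of $L_j^\dag L_j$, which the paper leaves implicit.
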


Next, we prove that the coherences $\ketbra{0_L}{1_L}$ and $\ketbra{1_L}{0_L}$ belong to the center manifold.

\begin{proposition}
    Let $\Lc$ be the Lindblad generator with Hamiltonian and noise operators given in Eqs.\,\eqref{eq:xxz_ham} and \eqref{eq:xxz_jump} and 
    let $\ket{0_L} \equiv \ket{00\dots0}$, $\ket{1_L} \equiv \ket{11\dots1},$ respectively. 
    Then, for any value of $A_{xy}$, 
    \begin{align*}
        \Lc(\ketbra{0_L}{1_L}) &= -i(E_0-E_1) \ketbra{0_L}{1_L},\\
        \Lc(\ketbra{1_L}{0_L}) &= i(E_0-E_1) \ketbra{1_L}{0_L},
    \end{align*}
    where $E_0,E_1$ are such that $H\ket{0_L} =E_0\ket{0_L}$ and $H\ket{1_L} = E_1\ket{1_L}$. In particular, $E_0 = (A_z+\frac{\omega}{2})N$ and $E_1 = (A_z-\frac{\omega}{2})N$.
\end{proposition}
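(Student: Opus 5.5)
Since $\Lc(X)=-i[H,X]+\sum_j\Dc_{L_j}(X)$ is linear, I will compute the Hamiltonian and dissipative contributions separately for $X=\ketbra{0_L}{1_L}$. The $\ketbra{1_L}{0_L}$ case then follows by taking adjoints: $\Lc$ is Hermiticity-preserving, so $\Lc(X^\dag)=\Lc(X)^\dag$, and the eigenvalue is conjugated.

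\emph{Hamiltonian part.} First I check that $\ket{0_L}$ and $\ket{1_L}$ are eigenvectors of $H$.
\begin{itemize}
\item The $\sigma_z^{(j)}$ and $\sigma_z^{(j)}\sigma_z^{(j+1)}$ terms are diagonal in the computational basis. On $\ket{00\dots0}$ they give $\frac{\omega}{2}N$ and $A_zN$ (there are $N$ bonds under periodic boundary conditions). On $\ket{11\dots1}$ they give $-\frac{\omega}{2}N$ and $A_zN$.
\item The XY term equals $A_{xy}\big(\sigma_+^{(j)}\sigma_-^{(j+1)}+\sigma_-^{(j)}\sigma_+^{(j+1)}\big)$. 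Each summand contains a lowering operator on one site and a raising operator on the neighbouring site. On an all-equal bitstring one of the two factors annihilates it: $\sigma_+\ket{0}=0$ and $\sigma_-\ket{1}=0$, with the convention that $\ket{0}$ is spin up.
\end{itemize}
Hence $H\ket{0_L}=E_0\ket{0_L}$ and $H\ket{1_L}=E_1\ket{1_L}$, with $E_0=(A_z+\frac{\omega}{2})N$ and $E_1=(A_z-\frac{\omega}{2})N$, for any value of $A_{xy}$. It follows that
\[
-i[H,\ketbra{0_L}{1_L}]=-i(E_0-E_1)\ketbra{0_L}{1_L}.
\]

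\emph{Dissipative part.} I show that each term of $\Dc_{L_j}(\ketbra{0_L}{1_L})=L_j\ketbra{0_L}{1_L}L_j^\dag-\frac12\{L_j^\dag L_j,\ketbra{0_L}{1_L}\}$ vanishes, where $L_j=\sqrt{\gamma}\,\sigma_+^{(j)}\sigma_-^{(j+1)}$.
\begin{itemize}
\item $L_j\ket{0_L}=0$, because $\sigma_+^{(j)}$ kills $\ket{0}$ at site $j$. So the sandwich term $L_j\ketbra{0_L}{1_L}L_j^\dag$ is zero.
\item $L_j^\dag L_j=\gamma\,\sigma_-^{(j)}\sigma_+^{(j)}\,\sigma_+^{(j+1)}\sigma_-^{(j+1)}$ is the projector onto the configuration "site $j$ down, site $j+1$ up", multiplied by $\gamma$. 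It annihilates both $\ket{0_L}$ and $\ket{1_L}$.
\item Therefore both anticommutator pieces, $L_j^\dag L_j\ketbra{0_L}{1_L}$ and $\ketbra{0_L}{1_L}L_j^\dag L_j$, vanish. For the second I use $\bra{1_L}L_j^\dag L_j=(L_j^\dag L_j\ket{1_L})^\dag=0$, since $L_j^\dag L_j$ is Hermitian.
\end{itemize}

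Combining the two parts gives $\Lc(\ketbra{0_L}{1_L})=-i(E_0-E_1)\ketbra{0_L}{1_L}$. Taking adjoints gives the second identity, and $E_0-E_1=\omega N=\omega_0$.

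No step is a real obstacle. The only point needing care is the spin convention, $\sigma_z\ket{0}=\ket{0}$ so that $\sigma_+\ket{0}=0$ and $\sigma_-\ket{1}=0$. This is what makes both the XY hopping terms and the jump operators annihilate the two uniform states.
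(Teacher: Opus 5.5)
Your proof is correct and takes essentially the same route as the paper. Both arguments show that the XY hopping terms and the jump operators annihilate $\ket{0_L}$ and $\ket{1_L}$. Hence every $\Dc_{L_j}$ vanishes on $\ketbra{0_L}{1_L}$, and only the diagonal part of $H$ contributes, giving the phase $-i(E_0-E_1)$. You simply spell out details the paper calls trivial. You use $L_j^\dag L_j$ as a projector where the paper uses $L_j^\dag\ket{0_L}=L_j^\dag\ket{1_L}=0$, and you invoke Hermiticity preservation for the conjugate identity.
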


\begin{proof}
    Let us start by noticing that, for all $j$, $L_j\ket{0_L} = L_j\ket{1_L} = L_j^\dag\ket{0_L} = L_j^\dag\ket{1_L}=  0$, hence 
    \[\Dc_{L_j}(\ketbra{0_L}{1_L}) = \Dc_{L_j}(\ketbra{1_L}{0_L})=0.\]
    Then, we note that $(\sigma_x^{(j)}\sigma_x^{(j+1)}+\sigma_y^{(j)}\sigma_y^{(j+1)})\, \ket{0_L} = (\sigma_x^{(j)}\sigma_x^{(j+1)}+\sigma_y^{(j)}\sigma_y^{(j+1)})\, \ket{1_L} = 0$ for all $j$. From there the conclusion of the proof follows trivially. 
 \end{proof}

With $\As$ defined as above, let
$\Pc$ be the CPTP projector onto the diagonal ${\bf p}=[{\bf p}_1,\ldots {\bf p}_n]^\top=\diag (\rho)$, i.e., $\Pc(\rho) = \sum_s \ketbra{s}{s} \rho \ketbra{s}{s},$ and $\Rc,\Jc$ its two factors $\Rc(\rho) = \sum_s \ketbra{s}{s}\rho \ket{s}={\bf p}$ and $\Jc({\bf p}) = \sum_s \ketbra{s}{s}{\bf p}_s$. We next prove that, for $A_{xy}=0$, the center manifold in this diagonal algebra is spanned by the states defined in Eq.\,\eqref{eq:steady_states}.
 
\begin{proposition}
Let $\Lc$ be the Lindblad generator with the Hamiltonian and noise operators given in Eqs.\,\eqref{eq:xxz_ham} and \eqref{eq:xxz_jump}.  Then $\dot{\bf p}(t) = L {\bf p}(t),$  where $F \equiv \Rc \Lc \Jc \in\Rb^{n\times n}$ 
is a Metzler matrix (i.e., $[F]_{j,k} \geq 0$, $\forall j\neq k$), with ${\bf 1}_n^T F = 0$ and $n=\dim(\Hc)$. 
Furthermore, any initial state ${\bf p}(0)=\Rc(\rho(0))$ converges under $L$ towards a steady state $\overline{\bf p}\in\Rb^n$, $F\overline{\bf p}=0$, which is partitioned in $N+1$ subvectors associated to the invariant subspaces $\Hc_m$, with each subvector proportional to the uniform distribution ${\bf 1}_{d_m}/d_m$.
\end{proposition}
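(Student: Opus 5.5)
We will work in the setting of the preceding corollary, i.e.\ $A_{xy}=0$, so that the diagonal algebra $\As$ is $\Lc$-invariant. Since $\Pc=\Jc\Rc$ with $\Rc\Jc=\Ic$, invariance gives $\Lc\Jc=\Jc\Rc\Lc\Jc$. Hence for $\rho(0)\in\As$ the full evolution is reproduced exactly: $\rho(t)=\Jc e^{Ft}\Rc\rho(0)$ with $F\equiv\Rc\Lc\Jc$, i.e.\ $\dot{\bf p}=F{\bf p}$. For a general $\rho(0)$ we instead read the statement as the reduced dynamics on ${\bf p}=\Rc(\rho)$, which is legitimate by Theorem~\ref{thm:Lindblad reduction}.

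To compute $F$ column by column, note that the Hamiltonian part acts trivially on diagonal projectors when $A_{xy}=0$, since $H$ is then diagonal in the computational basis. So $F=\sum_j \Rc\Dc_{L_j}\Jc$, and Eqs.~\eqref{app:eq:as01}--\eqref{app:eq:as02} (for every $j$, by translation) give the columns explicitly. Each local $10\to01$ hop at positions $(j,j+1)$ contributes $+\gamma$ to an off-diagonal entry and $-\gamma$ to the diagonal entry of the same column. This shows directly that $F$ is Metzler and that every column sums to zero, i.e.\ ${\bf 1}_n^TF=0$. Alternatively, Metzler and zero column sum follow abstractly: $e^{Ft}=\Rc e^{\Lc t}\Jc$ restricted to the commutative algebra is a CPTP map on $\Cb^n$, hence a column-stochastic matrix for all $t\ge0$. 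Differentiating at $t=0$ yields both properties.

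For convergence, the hops preserve the number of ones, so $F$ is block diagonal with respect to the partition into the sectors $\Hc_m$. Within a sector we invoke the standard theory of continuous-time Markov chains: $e^{Ft}$ is stochastic, so $F$ has no eigenvalues with positive real part and no nontrivial Jordan blocks on $i\Rb$. Convergence to a stationary vector then holds as soon as each block is irreducible, since an irreducible Metzler generator has a simple zero eigenvalue and all others with strictly negative real part (Perron--Frobenius), so there are no purely imaginary modes. Proposition~\ref{prop:steady_states} already shows that ${\bf 1}_{d_m}/d_m$ is annihilated by each block. Irreducibility then makes it the unique stationary distribution of that block, and the limit is $\overline{\bf p}=\bigoplus_m p_m{\bf 1}_{d_m}/d_m$, with $p_m={\bf 1}_{d_m}^T{\bf p}(0)|_m$ conserved by the zero column sums within each block.

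\textbf{Main obstacle: irreducibility in each sector.} We must show that any bitstring with fixed Hamming weight can reach any other through moves $10\to01$ on cyclically adjacent sites. On the ring, repeated such moves shift each $1$ one step clockwise past a $0$. Any configuration can therefore be transported to any other with the same weight, since particles can circulate around the ring in one direction and their relative gaps can be adjusted one move at a time. Single-direction hopping on a ring is thus irreducible, as in the totally asymmetric exclusion process on a circle. We will give this combinatorial argument carefully, for example by moving particles in order to a canonical configuration and noting that the reverse path also exists by circulating around the ring. The sectors $m=\pm N/2$ are trivial one-state blocks.
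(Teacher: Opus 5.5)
Your proposal is correct and follows the same route as the paper. Both arguments run through four steps: the reduced model is a Markov generator on the diagonal, it decomposes into blocks by conservation of the number of ones, each block has the uniform vector as stationary state by Proposition~\ref{prop:steady_states}, and connectivity of each sector gives convergence. The one real difference is in the connectivity step: you prove two-way reachability directly through the wrap-around bond of the ring, whereas the paper only builds paths out of $1^k0^{N-k}$ by rightward moves (no wrap-around needed) and relies on weight-balance of each sector to make that one-way reachability sufficient.

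Your restriction to $A_{xy}=0$ is also unnecessary. The diagonal of $[H,D]$ vanishes for every diagonal $D$, so $\Rc(-i[H,\Jc({\bf p})])=0$ and $F=\sum_j\Rc\Dc_{L_j}\Jc$ for all parameter values. This is why the paper can invoke Theorem~\ref{thm:Lindblad reduction} without assuming invariance of the diagonal algebra.
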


\begin{proof}
The fact that the reduced model on $\As$ coincides with the continuous-time Markov chain $\dot{\bf p}(t) = F {\bf p}(t)$ is a direct consequence of Theorem \ref{thm:Lindblad reduction} and the fact that, on a commutative algebra, a Lindblad generator becomes a Markov-chain generator, which is represented by Metzler matrix such that ${\bf 1}_n^T F = 0.$ Thanks to the the strong symmetry given in Lemma \ref{lam:strong_symmetry_crystal}, $F$ is the generator of $N+1$ disconnected sub-networks of respective sizes $d_m$, each with a generator $F_m$. The fact that each sub-network is weight-balanced, i.e., $F_m {\bf 1}_m= 0$ for each $m$, has been derived in the proof of Proposition \ref{prop:steady_states}.  Finally, when each sub-network is weight-balanced and connected, by using \cite[Theorem 7.2 and Theorem 7.4]{FB-LNS}, it follows that the state on each sub-network converges exponentially towards the uniform distribution ${\bf 1}_{d_m}/d_m$. 

The fact that each sub-network $m$ is connected can be easily obtained from the following argument. Consider a starting string with $m$ bits on 1 followed by $N-m$ bits on 0. Consider any other end string with $m$ bits on 1. It is easy to work out a path from the starting string to the end string by consecutive swaps from $10$ to $01$ on bits $j,j+1$, as are applied by our Markov chain: first move the rightmost 1 of the starting string towards its position in the end string; then move the second from right; and so on. This concludes the proof.
\end{proof}

\begin{corollary}
Let $\Lc$ be the Lindblad generator with the Hamiltonian and noise operators given in Eqs.\,\eqref{eq:xxz_ham} and \eqref{eq:xxz_jump}, with $A_{xy}=0$ and $\As = \Span\{\ketbra{s}{s}\}$ the commutative algebra of the diagonal elements in the standard basis. Then any state in $\As$ converges to a mixture of the steady states $\bar{\rho}_m$ defined in Eq.\,\eqref{eq:steady_states}.
\end{corollary}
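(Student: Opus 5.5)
The corollary follows from the preceding proposition by lifting its classical statement on $\Rb^n$ back to operators through the injection $\Jc$. Take a state $\rho_0\in\As$. Since $\As=\Span\{\ketbra{s}{s}\}$ is the image of the CPTP projector $\Pc=\Jc\Rc$, we have $\rho_0=\Pc(\rho_0)=\Jc({\bf p}(0))$ with ${\bf p}(0)=\Rc(\rho_0)$ a probability vector.

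\textbf{Step 1: the full evolution stays in $\As$.} For $A_{xy}=0$ the previous corollary says $\As$ is $\Lc$-invariant. Hence $e^{\Lc t}\rho_0\in\As$ for all $t\geq0$, which gives $e^{\Lc t}\rho_0=\Pc e^{\Lc t}\Pc\rho_0$. Invariance also gives $\Lc\Pc=\Pc\Lc\Pc$, so the restriction of $e^{\Lc t}$ to $\As$ is $e^{\Pc\Lc\Pc t}$. Using $\Rc\Jc=\Ic$ we get $(\Pc\Lc\Pc)^k\Jc=\Jc(\Rc\Lc\Jc)^k=\Jc F^k$ for every $k$. Summing the exponential series,
\[
e^{\Lc t}\rho_0=\Jc\, e^{Ft}{\bf p}(0)=\Jc({\bf p}(t)),
\]
exactly as in point 1 of Theorem \ref{thm:asymptotic_reduction}, but with $\As$ in place of $\Cs$.

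\textbf{Step 2: classical convergence.} By the preceding proposition, ${\bf p}(t)\to\overline{\bf p}$. Each block $m$ of $\overline{\bf p}$ equals $c_m{\bf 1}_{d_m}/d_m$. The weights $c_m\geq0$ are the conserved block masses $c_m=\sum_{s\in m}[{\bf p}(0)]_s$; they are conserved because the $N+1$ sub-networks are disconnected, by the strong $J_z$ symmetry of Lemma \ref{lam:strong_symmetry_crystal}. Since ${\bf 1}^T{\bf p}(0)=1$, we have $\sum_m c_m=1$.

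\textbf{Step 3: lift the limit.} $\Jc$ is linear and finite-dimensional, hence continuous. Therefore $e^{\Lc t}\rho_0\to\Jc(\overline{\bf p})$. Because $\ket{s_m}$ ranges over the bitstrings with fixed $m$, we have $\Jc({\bf 1}_{d_m}/d_m)=\bar\rho_m$, and so
\[
\Jc(\overline{\bf p})=\sum_m c_m\bar\rho_m,
\]
which is a convex mixture of the steady states in Eq.\,\eqref{eq:steady_states}.

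The only point needing care is Step 1, the exact identification of the full dynamics on $\As$ with the Markov chain. It rests entirely on the invariance of $\As$ for $A_{xy}=0$. I would double-check that invariance directly: $H$ is diagonal in the computational basis, so the Hamiltonian term annihilates diagonal operators. Each $L_j$ maps basis vectors to basis vectors (or to zero), so $\Dc_{L_j}$ preserves diagonality.
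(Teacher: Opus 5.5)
Your proof is correct and follows the route the paper takes implicitly (the corollary is stated without a separate proof). Invariance of the diagonal algebra for $A_{xy}=0$ makes the full evolution on $\As$ coincide exactly with $\Jc\, e^{Ft}\Rc$. The preceding proposition then gives convergence of ${\bf p}(t)$ to block-uniform vectors with conserved block masses $c_m$, and lifting through $\Jc$ yields $\sum_m c_m\bar\rho_m$.

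One small caveat concerns your direct double-check of invariance. Mapping basis vectors to basis vectors (or to zero) is not by itself enough for $\Dc_{L_j}$ to preserve diagonality; you also need $L_j^\dag L_j$ to be diagonal. That holds here because $L_j$ acts injectively on the basis vectors it does not annihilate, which gives $L_j^\dag L_j=\gamma\,\ketbra{10}{10}\otimes\one_{2^{N-2}}$, with the projector acting on sites $j,j+1$.
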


Next, we prove that, as long as $A_{xy}=0$, all the coherences other than $\ketbra{0_L}{1_L}$ and $\ketbra{1_L}{0_L}$ converge to zero, in the canonical basis, implying that the space $\As\oplus\Span\{\ketbra{0_L}{1_L},\ketbra{1_L}{0_L}\}$ is invariant and attractive.

\begin{proposition}
    Let $\Lc$ be the Lindblad generator with the Hamiltonian and noise operators given in Eqs.\,\eqref{eq:xxz_ham} and \eqref{eq:xxz_jump} and let $A_{xy}=0$.
    Then any coherence $\bra{s}\rho(t)\ket{s'}$ with bitstrings $s\neq s'$ apart from $\bra{0_L}\rho(t)\ket{1_L}$ and $\bra{1_L}\rho(t)\ket{0_L}$ converges to zero. 
\end{proposition}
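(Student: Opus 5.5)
The plan is to work directly in the canonical basis, exploiting the fact that for $A_{xy}=0$ the Hamiltonian $H=\sum_j[\frac{\omega}{2}\sigma_z^{(j)}+A_z\sigma_z^{(j)}\sigma_z^{(j+1)}]$ is diagonal, $H\ket{s}=E_s\ket{s}$. Each $L_j=\sqrt{\gamma}\,\sigma_+^{(j)}\sigma_-^{(j+1)}$ maps a bitstring $s$ having the pattern $10$ at sites $(j,j+1)$ to the bitstring with $01$ there, and annihilates every other bitstring. Moreover, $L_j$ is injective on the strings where it does not vanish.

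Writing $c_{s,s'}\equiv\bra{s}\rho\ket{s'}$ and letting $k(s)$ denote the number of (cyclic) occurrences of $10$ in $s$, a direct computation of $\Dc_{L_j}(\ketbra{u}{u'})$ gives
\[
\dot c_{s,s'} = \Big(-i(E_s-E_{s'})-\tfrac{\gamma}{2}\big(k(s)+k(s')\big)\Big)c_{s,s'} + \gamma\sum_{j}\;\sum_{(u,u'):\,L_ju\propto s,\,L_ju'\propto s'} c_{u,u'} .
\]
The jump term maps a pair $u\neq u'$ to a pair $L_ju\neq L_ju'$. The off-diagonal sector is therefore invariant, and it decouples from the diagonal algebra $\As$ already treated. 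The pair $(0_L,1_L)$, together with its transpose, is isolated: $k=0$ for both strings, and no $L_j$ maps any string onto $\ket{0_L}$ or $\ket{1_L}$. This matches the preceding proposition. It remains to show that the generator $M$ restricted to the remaining pairs has spectrum in the open left half-plane.

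Step 1 (positive comparison system). Differentiating $|c_{s,s'}|$ and using the triangle inequality, the vector of moduli is dominated entrywise by the solution of the linear system $\dot x = Fx$. Here $F$ is Metzler, with diagonal entries $-\frac{\gamma}{2}(k(s)+k(s'))$ and off-diagonal entries $\gamma$ along the transitions $(u,u')\to(L_ju,L_ju')$. The column sum of $F$ at the pair $(u,u')$ equals $-\gamma\big[\tfrac{k(u)+k(u')}{2}-m(u,u')\big]\le 0$, where $m(u,u')$ counts the sites $j$ at which both $u$ and $u'$ display $10$, so that $m\le\min(k(u),k(u'))$. Hence $V=\sum|c_{s,s'}|$ is non-increasing. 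By standard results for compartmental (substochastic) Metzler matrices, $F$ is Hurwitz provided every pair can reach a \emph{leaking} pair, namely one whose column sum is strictly negative. Decay of all $c_{s,s'}$ then follows.

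Step 2 (leaking pairs). A pair is leaking exactly when the sets of $10$-positions of $s$ and $s'$ differ. This holds in particular when exactly one of $s,s'$ lies in $\{0_L,1_L\}$: then one string has $k=0$, while the other, containing both a $0$ and a $1$, has $k\ge1$ on the cycle.

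Step 3 (reachability; main obstacle). Consider a non-leaking pair $s\neq s'$, so that $s$ and $s'$ share the same $10$-set $S$. Neither string is constant, since otherwise both would be constant and the pair would be the excluded one, so $S\neq\emptyset$. Take a shared $10$ at $(j,j+1)$ and apply $L_j$ to both strings. Both now carry $01$ there, and the newly created $1$ at $j+1$ produces a $10$ at $(j+1,j+2)$ precisely when site $j+2$ is $0$. I would then push this $1$ rightwards, using repeated shared moves, until it reaches a site where $s$ and $s'$ differ. At that point exactly one of the two strings acquires a $10$ at a position the other lacks, yielding a leaking pair. Since $s\neq s'$, such a differing site exists on the cycle. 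The delicate part is bookkeeping: intermediate moves must keep the strings in agreement up to the current front, and the case in which the moving $1$ meets a block of $1$'s has to be handled. There, I would instead move the rightmost $1$ of that block, in the spirit of the connectivity argument used in the previous proposition. Once reachability is established, $F$ is Hurwitz, so every coherence other than $\bra{0_L}\rho\ket{1_L}$ and $\bra{1_L}\rho\ket{0_L}$ converges (exponentially) to zero.
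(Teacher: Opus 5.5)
Your proposal is correct and is essentially the paper's argument. It uses the same quantity $V=\sum|\rho_{s,s'}|$, the same observation that $V$ strictly decreases exactly on pairs whose sets of $10$-positions differ, and the same device of pushing a shared $10$ rightwards until it reaches a site where $s$ and $s'$ differ, so every other pair feeds into a strictly decaying one.

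The only difference is how you close. You use the Metzler comparison system and the compartmental (outflow-connectivity) Hurwitz criterion instead of LaSalle's invariance principle. This avoids reasoning about invariant sets of the complex-valued dynamics, where inflows with different phases could in principle cancel, and it gives an exponential rate. Your Step 3 bookkeeping also becomes trivial if you start from the shared $10$ immediately preceding a segment on which $s$ and $s'$ differ: each move then shortens the common run of $0$'s by one until a mismatch appears.
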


\begin{proof}
    To analyze the action of the dissipator $\Dc$, let us start by fixing $j$ for the sake of convenience. For instance, for $j=1$ we can study the effect of $L_1 = \sigma_+^{(1)} \sigma_-^{(2)}$ on $\ketbra{s}{s'}$, where $s,s'$ are two different bit strings. We start by noting that $L_1^\dag L_1 = \gamma \ketbra{10}{10}\otimes\one_{2^{N-2}}$ and that $L_1 \ket{00*} = L_1\ket{01*} = L_1\ket{11*}=0$ and $L_1\ket{10*} = \sqrt{\gamma}\ket{01*}$ where $*$ collectively denote the values of the bits $j+2,\ldots,N$ on which $L_1$ acts trivially. Then we obtain that $\Dc_{L_1}(\ketbra{s}{s'})=0$ whenever the first two bits of $s$ or of $s'$ (or of both) differ from $10$. In the case where the first two bits of either $s$ or $s'$ (but not both) are equal to $10$, we obtain $\Dc_{L_1}(\ketbra{s}{s'})= - \frac{\gamma}{2} \ketbra{s}{s'}$ and in the case where the first two bits of both $s$ and $s'$ are $10$ we obtain $\Dc_{L_1}(\ketbra{s}{s'}) = \gamma\ketbra{01*}{01*'} - \gamma\ketbra{s}{s'}$ (where $*$ and $*'$ denote the part of the string $s$, resp. $s'$, with the first two qubits removed). 

    \sloppy

    Combining the effects of all the noise operators, we then obtain the effect of the dissipative term $\Dc \equiv \sum_{j=1}^N \Dc_{L_j}$ as 
    \(\Dc(\ketbra{s}{s'}) = \gamma[F(s,s')- \frac{1}{2}g(s,s')\ketbra{s}{s'}]\),
    where 
    \[F(s,s') \equiv \sum_{j=1}^{N} \delta_{s_{j},1}\delta_{s_{j+1},0}\delta_{s_{j}',1}\delta_{s_{j+1}',0} \ketbra{v(s,j)}{v(s',j)},\] 
    with $v(s,j)$ a function that returns the string $s$ in which the characters $j,j+1$ have been changed to $01$; and $g(s,s')\equiv \sum_{j=1}^{N}\delta_{s_{j},1}\delta_{s_{j+1},0}+\delta_{s_{j}',1}\delta_{s_{j+1}',0}\in\Rb$. 

Regarding the Hamiltonian term, for $A_{xy}=0$ the action onto $\ketbra{s}{s'}$ yields $-i[H,\ketbra{s}{s'}] = i\omega_{s,s'}\ketbra{s}{s'}$, where $\omega_{s,s'}\in\Rb$ depends on $s$ and $s'$. Combining this with the effect of the dissipative term $\Dc$, we obtain, for any off-diagonal term $\rho_{s,s'} \equiv \bra{s}\rho \ket{s'}$ with $s\neq s'$:
    \begin{align*}
        \frac{d}{dt}\rho_{s,s'}(t) = &\left(i\omega_{s,s'} - \frac{\gamma}{2} g(s,s')\right)\rho_{s,s'}(t) + \gamma\sum_{w,w'}\bra{s}F(w,w')\ket{s'}\rho_{w,w'}(t),
    \end{align*}
    where $w,w'$ are bit strings.
    This equation implies, for the absolute value of $\rho_{s,s'}$:
    \begin{align*}
        \frac{d}{dt}|\rho_{s,s'}(t)|\leq &- \frac{\gamma}{2}g(s,s')|\rho_{s,s'}(t)| + \gamma\sum_{w,w'}\bra{s}F(w,w')\ket{s'}|\rho_{w,w'}(t)| ,
    \end{align*}
where $\bra{s}F(w,w')\ket{s'}\geq0$. Now, we have that $\bra{s}F(w,w')\ket{s'} \neq 0$ if and only if $s=v(w,j)$ and $s'=v(w',j)$ and $w_jw_{j+1}=10$ and $w_j'w_{j+1}'=10$ for some $j$ or, equivalently, $v^{-1}(s,j)=w$ and $v^{-1}(s',j) = w'$ and $s_{j}s_{j+1}=01$ and $s_{j}s_{j+1}'=01$, where $v^{-1}(s,j)$ is the function that returns the string $s$ with the bits in positions $j,j+1$ changed to $10$. We can thus write $\sum_{w,w'}\bra{s}F(w,w')\ket{s'}|\rho_{w,w'}(t)| = \sum_{j=1}^N f(s,s',j)|\rho_{v^{-1}(s,j),v^{-1}(s',j)}(t)|$ with $f(s,s',j) \equiv \delta_{s_{j},0}\delta_{s_{j+1},1}\delta_{s_{j}',0}\delta_{s_{j+1}',1}$, resulting in 
    \begin{align}
        \frac{d}{dt}|\rho_{s,s'}(t)|\leq &- \frac{\gamma}{2}g(s,s')|\rho_{s,s'}(t)| + \gamma\sum_{j=1}^N f(s,s',j)|\rho_{v^{-1}(s,j),v^{-1}(s',j)}(t)|.
        \label{eq:coherence_dynamics}
    \end{align}
    
From these observations, we can conclude the proof with a Lyapunov function and the LaSalle invariance principle. Consider $V \equiv \sum_{s,s'} |\rho_{s,s'}(t)|$, where the sum runs over all $s\neq s'$ and $s,s'\neq 0_L,1_L$. According to Eq.~\eqref{eq:coherence_dynamics}, the contribution of $|\rho_{w,w'}(t)|$ to $\tfrac{d}{dt}V$ is as follows: (i) For each pair of bits $j,j+1$, there is a positive term, proportional to $\gamma$, if and only if $f(v(w,j),v(w',j),j)=1$, which means $w_jw_{j+1}=10$ and $w_j'w_{j+1}'=10$. But in this case, there is an equal negative term via $-\tfrac{\gamma}{2} g(w,w')$, where $\delta_{w_{j},1}\delta_{w_{j+1},0}+\delta_{w_{j}',1}\delta_{w_{j+1}',0}$ contributes +2 to the value of $g(w,w')$. In other cases, there is no positive term. Hence, $\tfrac{d}{dt}V \leq 0$. (ii) For each pair of bits $j,j+1$, there is a further negative contribution, proportional to $-\tfrac{\gamma}{2}$, whenever $w_jw_{j+1}=10$ or $w_j w_{j+1}'=10$ (but not both: the latter terms have already been used in point (i)). Hence, the Lyapunov function $V$ will strictly decrease, unless $|\rho_{w,w'}(t)|=0$ for all pairs $w,w'$ on which this case happens for some $j$. These are all pairs on which at least one sequence $10$ happens at a different place in $w$ than in $w'$. In other words, the set where $\tfrac{d}{dt}V = 0$ can have $|\rho_{w,w'}(t)|\neq 0$ only for pairs of bitstrings $w,w'$ in which all sequences $10$ happen at the same places; e.g., $w={10}0{10}11$ and $w'={10}1{10}01$. Note that, since we are in a fixed subspace $m$, the numbers of 1's in both bitstrings must also match.

According to the LaSalle's invariance principle, the state will converge to the largest invariant set where $\tfrac{d}{dt}V = 0$. Consider a state where $\tfrac{d}{dt}V = 0$ with $|\rho_{w,w'}(t)|\neq 0$ and, since $w\neq w'$, we can write $w=*100...01*$ and $w'=*100...00*$, where the dots denote a substring of all 0's, of equal length in both $w,w'$, and the $*$ denote other symbols, all equal in both $w,w'$. If $|\rho_{w^{(1)},{w'}^{(1)}}(t)|=0$ for the bitstrings $w^{(1)}=*010...01*$ and ${w'}^{(1)}=*010...00*$ obtained by flipping the first two bits in the identified substring of $w$ and $w'$, then this will not stay so under Eq.\,\eqref{eq:coherence_dynamics}. In search of the invariant set, we can thus assume that $|\rho_{w^{(1)},{w'}^{(1)}}(t)| \neq 0$, writing $w^{(1)}=*100...01*$ and ${w'}^{(1)}=*100...00*$, with now the central dots denoting all 0's on one less bit, and the left $*$ substring including one more bit, compared to $w$ and $w'$. By iterating this argument, we get to a state with $|\rho_{w^{(k)},{w'}^{(k)}}(t)| \neq 0$ where $w^{(k)}=*101*$ and ${w'}^{(k)}=*100*$. The action of $\mathcal{D}_{L_j}$ on the two leftmost identified bits, as expressed through Eq.\,\eqref{eq:coherence_dynamics}, will contribute a positive term to $|\rho_{w^{(k+1)},{w'}^{(k+1)}}(t)|$, where $w^{(k+1)}=*011*$ and ${w'}^{(k+1)}=*010*$. Now, ${w'}^{(k+1)}$ features a sequence $10$ at a place where $w^{(k+1)}$ features $00$. This means that, to be in the set where $\tfrac{d}{dt}V = 0$, we must have $|\rho_{w^{(k+1)},{w'}^{(k+1)}}(t)|=0$. The whole argument thus shows that, in the invariant set where $\tfrac{d}{dt}V = 0$, we cannot have $|\rho_{w,w'}(t)|\neq 0$, because this would ultimately imply $|\rho_{w^{(k+1)},{w'}^{(k+1)}}(t)| \neq 0$, for which $\tfrac{d}{dt}V \neq 0.$ By contradiction, and since $w\neq w'$ were chosen arbitrarily, we must thus conclude that the largest invariant set where $\tfrac{d}{dt}V = 0$ necessarily requires $|\rho_{w,w'}(t)| = 0$ for all $w \neq w'$ and $w,w' \neq 0_L,1_L$. The LaSalle invariance principle thus ensures that any state will asymptotically converge to this situation.
\end{proof}

It follows from the above results that, when $A_{xy}=0$, the kernel and the center manifold are {\em fully} characterized by Eqs.\,\eqref{eq:steady_states} and \eqref{eq:center_manifold} in the main text.

\begin{corollary} 
\label{cor:steady_states_j0}
Let $\Lc$ be the Lindblad generator with the Hamiltonian and noise operators given in Eqs.\,\eqref{eq:xxz_ham} and \eqref{eq:xxz_jump}. If $A_{xy}=0$, then the only equilibria of $\Lc$ are the states $\bar{\rho}_m$ in Eq.\,\eqref{eq:steady_states}, and the center manifold $\Cs$ coincides with Eq.\,\eqref{eq:center_manifold}. 
\end{corollary}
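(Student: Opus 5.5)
The plan is to assemble the previous appendix results into a decomposition of $\Bf(\Hc)$ into the span of the diagonal $\As$, the two logical coherences, and the remaining off-diagonal coherences. Each piece will then be analysed separately. The main point is that, for $A_{xy}=0$, the Hamiltonian is diagonal in the computational basis. As a result, $\Lc$ maps $\As$ into itself (previous Corollary), and it maps every rank-one coherence $\ketbra{s}{s'}$ with $s\neq s'$ into the span of coherences $\ketbra{w}{w'}$ with $w\neq w'$. Indeed, the jump term $F(s,s')$ sends $s,s'$ to $v(s,j),v(s',j)$; these remain distinct because $v(\cdot,j)$ is injective on strings with $10$ at positions $j,j+1$. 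Hence $\Bf(\Hc)=\As\oplus\Os$, where $\Os\equiv\Span\{\ketbra{s}{s'}:s\neq s'\}$, and both summands are $\Lc$-invariant. It follows that every eigen-operator of $\Lc$ with purely imaginary eigenvalue can be split into a component in $\As$ and a component in $\Os$, and each component is itself such an eigen-operator (or zero).

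\emph{Diagonal block.} By the Markov-chain Proposition, $\Lc|_{\As}=\Jc F\Rc$ is a Metzler generator with $N+1$ connected, weight-balanced components. Every trajectory in $\As$ therefore converges to a mixture of the $\bar{\rho}_m$. This rules out purely imaginary eigenvalues with nonzero imaginary part on $\As$, since these would give non-convergent trajectories. It also forces $\ker\Lc\cap\As=\Span\{\bar\rho_m\}$, because each connected component of $F$ has one-dimensional kernel (Perron--Frobenius / \cite[Thm.\ 7.4]{FB-LNS}).

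\emph{Off-diagonal block.} Write $\Os=\Span\{\ketbra{0_L}{1_L},\ketbra{1_L}{0_L}\}\oplus\Os'$. The preceding Proposition shows that both logical coherences are eigen-operators with eigenvalues $\mp i\omega_0$. The Proposition just before this corollary shows that every coherence $\rho_{s,s'}$ with $\{s,s'\}\neq\{0_L,1_L\}$ tends to zero for any initial condition. Suppose $X\in\Os$ satisfies $\Lc X=i\omega X$. Its component in $\Os'$ has constant modulus along the trajectory $e^{\Lc t}X=e^{i\omega t}X$, so it must vanish. The caveat is that $\Os'$ is not obviously invariant on its own. To handle this, I would note that the $(0_L,1_L)$ coherences receive no inflow from other coherences: $F(w,w')$ never produces $0_L$ or $1_L$, since the images of $v$ always contain a $01$ pair. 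Moreover, $\Lc$ acts diagonally on them. Therefore the components of $X$ in $\Os'$ evolve by themselves along the trajectory, which is enough for the modulus argument.

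Combining the two blocks, the purely imaginary spectrum consists exactly of $0$ (with eigenspace $\Span\{\bar\rho_m\}$) and $\pm i\omega_0$ (with eigenspaces spanned by the logical coherences). This gives $\ker\Lc=\Span\{\bar\rho_m\}$, so the $\bar\rho_m$ are the only equilibria, and $\Cs$ coincides with Eq.\,\eqref{eq:center_manifold}. The step I expect to be most delicate is justifying the splitting of eigen-operators in the off-diagonal block, namely the no-inflow property. If that property fails, I would instead invoke LaSalle convergence on the full $\Os$ directly. Convergence to zero of all coherences except the logical ones, for every initial condition, already excludes any other eigen-operator with $\Re\lambda=0$, because such an eigen-operator would have to be supported on the logical coherences alone.
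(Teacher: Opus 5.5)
Your proposal is correct and is essentially the paper's own argument made explicit. The paper simply says the corollary follows from the preceding appendix results: invariance of the diagonal algebra $\As$ for $A_{xy}=0$, Markov-chain convergence on $\As$ to mixtures of the $\bar\rho_m$, the eigen-relations for $\ketbra{0_L}{1_L}$ and $\ketbra{1_L}{0_L}$, and LaSalle convergence to zero of all other coherences; you assemble exactly these pieces.

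Two small remarks. First, the no-inflow detour is unnecessary: by linearity the entrywise convergence holds for any initial $X$, so $e^{\Lc t}X=e^{i\omega t}X$ directly forces every non-logical off-diagonal entry of $X$ to vanish. Second, both you and the paper tacitly assume $\omega_0=\omega N\neq 0$; without it the logical coherences would lie in $\ker\Lc$ and the claim about equilibria would fail.
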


For the kernel of $\Lc$, we can conclude the proof for {\em arbitrary} values of $A_{xy}$ by invoking a standard genericity argument from  \cite[Lemma 5]{ticozzi2013steadystateentanglementengineeredquasilocal}, which we report here for completeness:
\begin{lemma}
\label{lem:generic}
    Let $A_\varphi\in\Rb^{n\times n}$ be analytic in $\varphi\in\Rb$ and let $r=\max_{\varphi\in\Rb}\rank(A_\varphi)$. Then the set $\Omega =\{\varphi\in\Rb| \rank(A_\varphi)<r\}$ is such that $\mu(\Omega)=0$, where $\mu$ is the Lebesgue measure in $\Rb$. 
\end{lemma}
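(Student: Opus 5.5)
We prove the contrapositive form: the set of $\varphi$ where the rank is strictly less than the generic value $r$ is the zero set of a nontrivial real-analytic function, and such zero sets have Lebesgue measure zero.

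First, since the rank takes only finitely many values in $\{0,\dots,n\}$, the maximum $r$ is attained at some $\varphi_0\in\Rb$. If $r=0$, then $A_\varphi\equiv 0$ and $\Omega=\emptyset$, so assume $r\geq 1$. Because $\rank(A_{\varphi_0})=r$, there is an $r\times r$ minor of $A_{\varphi_0}$ that is nonzero; fix its row index set $I$ and column index set $K$. Define
\[
f(\varphi)\equiv\det\big([A_\varphi]_{I,K}\big).
\]
This is a polynomial in the entries of $A_\varphi$, so it is real-analytic in $\varphi$ on all of $\Rb$, and $f(\varphi_0)\neq0$.

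Second, note that $\Omega\subseteq Z(f)\equiv\{\varphi\,|\,f(\varphi)=0\}$. Indeed, if $\rank(A_\varphi)<r$, then every $r\times r$ minor of $A_\varphi$ vanishes, in particular the one indexed by $(I,K)$. (In fact every $\varphi\notin Z(f)$ has rank exactly $r$, by maximality of $r$.) It therefore suffices to show that $\mu(Z(f))=0$.

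Third, we use the identity theorem for real-analytic functions of one variable. Since $f$ is analytic on the connected set $\Rb$ and not identically zero, its zeros cannot accumulate at any point of $\Rb$: if they accumulated at some $\varphi^*$, then by continuity $f(\varphi^*)=0$, every Taylor coefficient of $f$ at $\varphi^*$ would vanish, and connectedness would force $f\equiv0$, contradicting $f(\varphi_0)\neq0$. Hence $Z(f)$ meets each compact interval $[-M,M]$ in only finitely many points. It follows that $Z(f)$ is countable, so $\mu(Z(f))=0$, and therefore $\mu(\Omega)=0$.

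The only step requiring care is the identity-theorem argument. It relies on $\Rb$ being connected and on $f$ being globally analytic rather than merely smooth: a smooth, non-analytic $f$ could vanish on a set of positive measure, so analyticity is genuinely needed there. The same structure extends to $\varphi\in\Rb^p$ by using that zero sets of nontrivial real-analytic functions on $\Rb^p$ have measure zero, but only the scalar case is required here.
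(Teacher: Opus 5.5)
Your proof is correct and complete. The paper gives no proof of its own for this lemma; it quotes it from \cite{ticozzi2013steadystateentanglementengineeredquasilocal}. Your route is the standard argument:
\begin{itemize}
\item fix an $r\times r$ minor that is nonzero at a maximal-rank point $\varphi_0$;
\item observe that $\Omega\subseteq Z(f)$;
\item use the identity theorem to conclude that $Z(f)$ has no accumulation point in $\Rb$.
\end{itemize}
This actually gives a stronger conclusion than the lemma: $\Omega$, being the common zero set of all $r\times r$ minors, is a closed, discrete, hence countable subset of $\Rb$. One wording correction: what you give is a direct proof via the inclusion $\Omega\subseteq Z(f)$, not a proof of a contrapositive.
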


\begin{theorem}
\label{thm:uniqueness_steady_states}
    Let $\Lc$ be the Lindblad generator with the Hamiltonian and noise operators given in Eqs.\,\eqref{eq:xxz_ham} and \eqref{eq:xxz_jump}. 
    For almost all choices of $A_{xy}$, we have that $\ker\Lc$ coincides with the states $\bar{\rho}_m$ defined in Eq.\,\eqref{eq:steady_states}.
\end{theorem}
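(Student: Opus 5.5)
The plan is to treat $\Lc$ as an analytic (indeed affine) family in the parameter $\varphi \equiv A_{xy}$ and apply the genericity Lemma~\ref{lem:generic} to a matrix representation of $\Lc_\varphi$. The argument combines a lower bound on $\dim\ker\Lc_\varphi$ valid for every $\varphi$ with an exact count at the single point $\varphi=0$.

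\textbf{Step 1: a real analytic family.} Since $\Lc_\varphi$ is Hermiticity-preserving, I will represent it as a real matrix $A_\varphi\in\Rb^{n^2\times n^2}$, $n=2^N$, by its action on the real vector space of Hermitian operators in a fixed Hermitian basis (e.g.\ Pauli strings). Its entries are affine in $\varphi$, because $H$ in Eq.\,\eqref{eq:xxz_ham} depends linearly on $A_{xy}$ and the jump operators do not depend on it. The complex kernel of $\Lc_\varphi$ on $\Bf(\Hc)$ is the complexification of the real kernel of $A_\varphi$, since $\Lc_\varphi(X)=0$ implies $\Lc_\varphi(X^\dag)=0$. Hence $\dim_{\Cb}\ker\Lc_\varphi = n^2-\rank A_\varphi$.

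\textbf{Step 2: an upper bound on the rank for all $\varphi$.} Proposition~\ref{prop:steady_states} holds for any $A_{xy}$: each $\bar\rho_m$ is proportional to the identity on $\Hc_m$, $[H,J_z]=0$, and the dissipative part does not involve $A_{xy}$. The $N+1$ states $\bar\rho_m$ have disjoint supports, so they are linearly independent. Therefore $\dim\ker\Lc_\varphi\ge N+1$, that is, $\rank A_\varphi\le n^2-N-1$ for every $\varphi$.

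\textbf{Step 3: the maximum is attained at $\varphi=0$.} By Corollary~\ref{cor:steady_states_j0}, for $A_{xy}=0$ the kernel is exactly $\Span\{\bar\rho_m\}$. Here I must check that no non-Hermitian kernel elements are missed. The coherences $\ketbra{0_L}{1_L}$ and $\ketbra{1_L}{0_L}$ have eigenvalues $\mp i\omega N\neq 0$ (assuming $\omega\neq0$). All other off-diagonal elements decay by the LaSalle argument, and the diagonal Markov chain has exactly $N+1$ stationary vectors. So $\rank A_0=n^2-N-1$, and hence $r\equiv\max_\varphi\rank A_\varphi = n^2-N-1$.

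\textbf{Step 4: conclusion.} Lemma~\ref{lem:generic} shows that $\Omega=\{\varphi:\rank A_\varphi<r\}$ has Lebesgue measure zero. For $\varphi\notin\Omega$ we get $\dim\ker\Lc_\varphi=N+1$. Since the kernel contains the $N+1$ independent states $\bar\rho_m$ by Step 2, it coincides with their span.

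The main obstacle I expect is Step 3. It rests on the appendix characterization at $A_{xy}=0$, and I need the kernel dimension to be exact over $\Cb$, not just among states. I will address this by the Hermitian-decomposition remark in Step 1, together with the observation that the convergence statements rule out any further purely stationary direction.
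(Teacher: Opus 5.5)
Your proposal is correct and follows the paper's proof essentially verbatim. You take the lower bound $\dim\ker\Lc\ge N+1$ for every $A_{xy}$ from Proposition~\ref{prop:steady_states}, use equality at $A_{xy}=0$ from Corollary~\ref{cor:steady_states_j0}, and apply Lemma~\ref{lem:generic} to conclude that the rank drops only on a null set. Your extra care fills in details the paper leaves implicit. Passing to a real matrix on Hermitian operators makes the lemma, which is stated for real matrices, applicable. You also make the assumption $\omega\neq 0$ explicit; without it $\ketbra{0_L}{1_L}$ would lie in $\ker\Lc$ for every $A_{xy}$ and the statement would fail.
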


\begin{proof}
 From Proposition \ref{prop:steady_states} we have that $\dim\ker(\Lc)\geq N+1$, thus $\rank(\Lc)\leq n^2-N-1$. Then from Corollary \ref{cor:steady_states_j0}, for $A_{xy}=0$ we indeed have $\dim\ker(\Lc)=N+1$, thus $\rank(\Lc) = n^2-N-1$. By applying Lemma \ref{lem:generic}, the set of $A_{xy}$ such that $\dim\ker(\Lc)> N+1$ is a measure zero set.
\end{proof}

For the center manifold $\Cs$, we are not aware of an exact equivalent to Lemma \ref{lem:generic}. A perturbative argument could be invoked to show that the conclusions of Corollary \ref{cor:steady_states_j0} remain true at least for $A_{xy}$ sufficiently small. As remarked in the text, numerical evidence suggests they may hold in general.

\smallskip

\bibliographystyle{plain}
\bibliography{ref}

@book{antoulas,
author = {Antoulas, A. C.},
title = {Approximation of Large-Scale Dynamical Systems},
publisher = {Advances in Design and Control, vol. 6;},
year = {2005},
doi = {10.1137/1.9780898718713},
address = {Society for Industrial and Applied Mathematics, Philadelphia},
}

@article{Havel,
	author = {Havel, T. F.},
	doi = {10.1063/1.1518555},
	journal = {J. Math. Phys.},
	pages = {534},
	title = {Robust procedures for converting among {L}indblad, {K}raus and matrix representations of quantum dynamical semigroups},
	volume = {44},
	year = {2003},
}

@article{Dobrynin_2025,
doi = {10.1088/2058-9565/ae0363},
year = {2025},
volume = {10},
pages = {045041},
author = {Dobrynin, D. and Cardarelli, L. and M\"{u}ller, M. and Bermudez, A.},
title = {Compressed-sensing {L}indbladian quantum tomography with trapped ions},
journal = {Quantum Sci. Tech.},
}

@article{Fazio,
	title={{Many-body open quantum systems}},
	author={R. Fazio and J. Keeling and L. Mazza and M. Schir\`{o}},
	journal={SciPost Phys. Lect. Notes},
	volume={99},
	pages={1}, 
	year={2025},
	publisher={SciPost},
	doi={10.21468/SciPostPhysLectNotes.99},
}

@article{Jamir,
	title={{Is {L}indblad for me?}},
	author={M. Stefanini and A. A. Ziolkowska and D. Budker and U. Poschinger and F. Schmidt-Kaler and A. Browaeys and A. 
	Imamoglu and D. Chang and J. Marino},
	journal={SciPost Phys. Lect. Notes},
	volume={129},  
    pages={1},
	year={2026},
}

@book{Weiss,
author = {Weiss, U.},
title = {Quantum Dissipative Systems},
publisher = {World Scientific},
year = {2012},
doi = {10.1142/8334},
address = {},
edition   = {4th},
}

@article{Murch,
	author = {Harrington, P. M. and Mueller, E. J. and Murch, K. W.},
	doi = {10.1038/s42254-022-00494-8},
	journal = {Nat. Rev. Phys.},
	pages = {660},
	title = {Engineered dissipation for quantum information science},
	volume = {4},
	year = {2022},
}

@article{Sieberer,
  title = {Universality in driven open quantum matter},
  author = {Sieberer, L. M. and Buchhold, M. and Marino, J. and Diehl, S.},
  journal = {Rev. Mod. Phys.},
  volume = {97},
  pages = {025004},
  year = {2025},
  doi = {10.1103/RevModPhys.97.025004},
}

@article{ClerkHTR,
  title = {Hidden Time-Reversal Symmetry, Quantum Detailed Balance and Exact Solutions of Driven-Dissipative Quantum Systems},
  author = {Roberts, D. and Lingenfelter, A. and Clerk, A. A.},
  journal = {PRX Quantum},
  volume = {2},
  pages = {020336},
  numpages = {33},
  year = {2021},
  doi = {10.1103/PRXQuantum.2.020336},
}

@article{tokieda2026exact,
      title={Exact diagonalization of a non-quadratic bosonic {L}iouvillian with two-body loss}, 
      author={M. Tokieda},
      year={2026},
      journal={arXiv:2603.27480},
      archivePrefix={arXiv},
      primaryClass={quant-ph},
      url={https://arxiv.org/abs/2603.27480}, 
}

@book{Cohen,
title ={Atom-Photon Interactions: Basic Processes and Applications}, 
author = {C. Cohen-Tannoudji and J. Dupont-Roc and G. Grynberg},
 publisher = {Wiley-VCH},
year = {1998},
doi = {10.1002/9783527617197},
address = {},
}

@article{Altman,
  title = {Quantum Simulators: Architectures and Opportunities},
  author = {Altman, E. and others}, 
  journal = {PRX Quantum},
  volume = {2},
  pages = {017003},
  year = {2021},
  doi = {10.1103/PRXQuantum.2.017003},
}

@article{Cirac,
	author = {Verstraete, F. and Wolf, M. M. and Cirac, J. I.},
	doi = {10.1038/nphys1342},
	journal = {Nat. Phys.},
	pages = {633},
	title = {Quantum computation and quantum-state engineering driven by dissipation},
	volume = {5},
	year = {2009},
}

@phdthesis{AzouitThesis,
  title={Adiabatic elimination for open quantum systems},
  author={Azouit, R.},
  year={2017},
  school={Universit{\'e} Paris sciences et lettres}
}

@article{EssigAllOrder,
  title={Quantum adiabatic elimination at arbitrary order for photon number measurement},
  author={Sarlette, A. and Rouchon, P. and Essig, A. and Ficheux, Q. and Huard, B.},
  journal={{IFAC}-Papers{O}nLine},
  volume={53},
  pages={250},
  year={2020},
  publisher={Elsevier}
}

@ARTICLE{letter2024,
  author={Grigoletto, T. and Ticozzi, F.},
  journal={IEEE Control Sys. Lett.}, 
  title={Exact Model Reduction for Discrete-Time Conditional Quantum Dynamics}, 
  year={2024},
  volume={8},
  pages={550},
  doi={10.1109/LCSYS.2024.3399100}
  }

@article{Schirmer2010,
  title = {Stabilizing open quantum systems by {M}arkovian reservoir engineering},
  author = {Schirmer, S. G. and Wang, X.},
  journal = {Phys. Rev. A},
  volume = {81},
  pages = {062306},
  year = {2010},
  doi = {10.1103/PhysRevA.81.062306},
}

@book{alicki-lendi,
	author = {R. Alicki and K. Lendi},
	publisher = {Springer-Verlag},
	series = {Lecture Notes in Physics},
	title = {Quantum Dynamical Semigroups and Applications},
	volume = {286},
	year = {1987}
}

@misc{wolf2012quantum, 
	author = {M. M. Wolf},
	title = {Quantum {C}hannels and {O}perations - {G}uided {T}our},
	note = {{L}ecture {N}otes available at \emph{https://mediatum.ub.tum.de/download/1701036/ 1701036.pdf}}, 
	year = {2012}
}

@article{Ticozzi2007QuantumMS,
  title={Quantum Markovian Subsystems: Invariance, Attractivity, and Control},
  author={F. Ticozzi and L. Viola},
  journal={IEEE Trans. Autom. Control},
  year={2007},
  volume={53},
  pages={2048},
  url={https://api.semanticscholar.org/CorpusID:14850189}
}

@article{ticozzi2017alternating,
  title={Alternating projections methods for discrete- time stabilization of quantum states},
  author={Ticozzi, F. and Zuccato, L. and Johnson, P. D. and Viola, L.},
  journal={IEEE Trans. Autom. Control},
  volume={63},
  pages={819},
  year={2017},
  publisher={IEEE}
}

@article{johnson2015general,
  title={General fixed points of quasi-local frustration-free quantum semigroups: from invariance to stabilization},
  author={Johnson, P. D. and Ticozzi, F. and Viola, L.},
  journal={Quantum Inf. Comput.},
  year={2016},
  pages={0657},
  volume={16}, 
}

@article{blume2010information,
  title = {Information-preserving structures: {A} general framework for quantum zero-error information},
  author = {Blume-Kohout, R. and Ng, H.-K. and Poulin, D. and Viola, L.},
  journal = {Phys. Rev. A},
  volume = {82},
  pages = {062306},
  year = {2010},
  doi = {10.1103/PhysRevA.82.062306},
}

@book{breuer2002theory,
  title={The Theory of Open Quantum Systems},
  author={Breuer, H.-P. and Petruccione, F. and others},
  year={2002},
  publisher={Oxford University Press on Demand}
}

@article{buvca2022algebraic,
  title={Algebraic theory of quantum synchronization and limit cycles under dissipation},
  author={Bu{\v{c}}a, B. and Booker, C. and Jaksch, D.},
  journal={SciPost Phys.},
  volume={12},
  pages={097},
  year={2022}
}

@article{KLV,
  title = {Theory of Quantum Error Correction for General Noise},
  author = {Knill, E. and Laflamme, R. and Viola, L.},
  journal = {Phys. Rev. Lett.},
  volume = {84},
  pages = {2525},
  numpages = {0},
  year = {2000},
  doi = {10.1103/PhysRevLett.84.2525},
}

@article{lindblad1976generators,
  title={On the generators of quantum dynamical semigroups},
  author={Lindblad, G.},
  journal={Commun. Math. Phys.},
  volume={48},
  pages={119},
  year={1976},
  publisher={Springer}
}

@article{Gorini:1975nb,
    author = "Gorini, V. and Kossakowski, A. and Sudarshan, E. C. G.",
    title = "{Completely Positive Dynamical Semigroups of $N$ Level Systems}",
    doi = "10.1063/1.522979",
    journal = "J. Math. Phys.",
    volume = "17",
    pages = "821",
    year = "1976"
}

@article{grigoletto2023modelreductionquantumsystems,
    title={Model Reduction for Quantum Systems: Discrete-time Quantum Walks and Open {M}arkov Dynamics},
    author={Grigoletto, T. and Ticozzi, F.},
    journal={IEEE Trans. Inf. Theory},
    year={2025},
    pages ={8524},
    volume ={71}, 
    doi={10.1109/TIT.2025.3601118},
    publisher={IEEE},
}

@article{grigoletto2024exactmodelreductioncontinuoustime,
      title={Exact Model Reduction for Continuous-Time Open Quantum Dynamics}, 
      author={T. Grigoletto and Y.  Tao and F.  Ticozzi and L. Viola},
      year={2025},
      journal ={Quantum},
      volume ={9},
      pages={1814},
      doi={https://doi.org/10.22331/q-2025-07-29-1814}
  }

@article{Buca_2012,
doi = {10.1088/1367-2630/14/7/073007},
year = {2012},
volume = {14},
pages = {073007},
author = {B. Bu\ifmmode \check{c}\else \v{c}\fi{}a and T. Prosen},
title = {A note on symmetry reductions of the {L}indblad equation: transport in constrained open spin chains},
journal = {New J. Phys.},
}

@article{Derrida,
title = {An exactly soluble non-equilibrium system: {T}he asymmetric simple exclusion process},
journal = {Phys. Rep.},
volume = {301},
pages = {65},
year = {1998},
doi = {https://doi.org/10.1016/S0370-1573(98)00006-4},
author = {B. Derrida},
}

@article{Iemini2018,
  title = {Boundary Time Crystals},
  author = {Iemini, F. and Russomanno, A. and Keeling, J. and Schir\`o, M. and Dalmonte, M. and Fazio, R.},
  journal = {Phys. Rev. Lett.},
  volume = {121},
  pages = {035301},
  year = {2018},
  doi = {10.1103/PhysRevLett.121.035301},
}

@article{Yang2025,
	author = {Yang, S. and Wang, Z. and Fu, L. and Jie, J.},
	doi = {10.1038/s42005-025-02040-1},
	journal = {Commun. Phys.},
	pages = {114},
	title = {Emergent continuous time crystal in dissipative quantum spin system without driving},
	volume = {8},
	year = {2025},
}

@article{mondkar2026,
      title={Dynamical Quantum Phase Transitions in Boundary Time Crystals}, 
      author={S. Mondkar and P. Ghosh and U. Sen},
      year={2026},
      journal={arXiv:2602.04792},
      archivePrefix={arXiv},
      primaryClass={quant-ph},
      url={https://arxiv.org/abs/2602.04792}, 
}

@article{Bruder2025,
  title = {Quantum Synchronization of Twin Limit-Cycle Oscillators},
  author = {Kehrer, T. and Bruder, C. and Solanki, P.},
  journal = {Phys. Rev. Lett.},
  volume = {135},
  pages = {063601},
  year = {2025},
  doi = {10.1103/f97j-474c},
}

@article{PhysRevA.89.022118,
  title = {Symmetries and conserved quantities in {L}indblad master equations},
  author = {Albert, V. V. and Jiang, L.},
  journal = {Phys. Rev. A},
  volume = {89},
  pages = {022118},
  year = {2014},
  doi = {10.1103/PhysRevA.89.022118},
}

@article{azouit2017towards,
  title={Towards generic adiabatic elimination for bipartite open quantum systems},
  author={Azouit, R. and Chittaro, F. and Sarlette, A. and Rouchon, P.},
  journal={Quantum Sci. Tech.},
  volume={2},
  pages={044011},
  year={2017},
  publisher={IOP Publishing}
}

@article{Robin1,
  title={Convergence of bipartite open quantum systems stabilized by reservoir engineering}, 
  author={R. Robin and P. Rouchon and L.-A. Sellem},
  journal={Ann. Henri Poincar\'e},
  year={2025},
  volume={26},
  doi = {https://doi.org/10.1007/s00023-024-01481-8},
  pages={1769}, 
}

@article{Robin2,
  title={Unconditionally stable time discretization of {L}indblad master equations in infinite dimension using quantum channels}, 
  author={R. Robin and P. Rouchon and L.-A. Sellem},
  journal={arXiv:2503.01712},
  year={2025},
  url={https://doi.org/10.48550/arXiv.2503.01712}, 
}

@article{vdP1,
author = {Y. Li  and others},
title = {Experimental realization and synchronization of a quantum van der {P}ol oscillator},
journal = {Science Adv.},
volume = {11},
pages = {eady5649},
year = {2025},
doi = {10.1126/sciadv.ady5649},
}

@article{vdP2,
author = {J. Liu and Q. Wu and J. E. Moore and H. Haeffner and C. W. W\"{a}chtler },
title = {Observation of synchronization between two quantum van der {P}ol oscillators in trapped ions}, 
journal = {Phys. Rev. X},
volume={16}, 
pages ={021062}, 
year = {2026},
doi = {https://doi.org/10.1103/w1bm-wjl4},
}

@article{Terhal_2020,
doi = {10.1088/2058-9565/ab98a5},
year = {2020},
volume = {5},
pages = {043001},
author = {Terhal, B. M. and Conrad, J. and Vuillot, C.},
title = {Towards scalable bosonic quantum error correction},
journal = {Quantum Sci. Tech.},
}

@article{accardi1982quantum,
  title={Quantum stochastic processes},
  author={Accardi, L. and Frigerio, A. and Lewis, J. T.},
  journal={Publications of the Research Institute for Mathematical Sciences},
  volume={18},
  pages={97},
  year={1982},
  publisher={Research Institute forMathematical Sciences}
}

@article{Houck,
  title = {Controlling the Spontaneous Emission of a Superconducting Transmon Qubit},
  author = {Houck, A. A. and Schreier, J. A. and Johnson, B. R. and Chow, J. M. and Koch, Jens and Gambetta, J. M. and Schuster, D. I. and Frunzio, L. and Devoret, M. H. and Girvin, S. M. and Schoelkopf, R. J.},
  journal = {Phys. Rev. Lett.},
  volume = {101},
  pages = {080502},
  year = {2008},
  doi = {10.1103/PhysRevLett.101.080502},
}

@article{PhysRevA.109.062206,
  title = {Complete positivity violation of the reduced dynamics in higher-order quantum adiabatic elimination},
  author = {Tokieda, M. and Elouard, C. and Sarlette, A. and Rouchon, P.},
  journal = {Phys. Rev. A},
  volume = {109},
  pages = {062206},
  year = {2024},
  doi = {10.1103/PhysRevA.109.062206},
}

@article{wolf2010inverseeigenvalueproblemquantum,
      title={The inverse eigenvalue problem for quantum channels}, 
      author={M. M. Wolf and D. Perez-Garcia},
      year={2010},
      journal={arXiv:1005.4545},
      archivePrefix={arXiv},
      primaryClass={quant-ph},
      url={https://arxiv.org/abs/1005.4545}, 
}

@book{kato2013perturbation,
  title={Perturbation Theory for Linear Operators},
  author={Kato, T.},
  volume={132},
  year={2013},
  publisher={Springer Science \& Business Media}
}

@article{lindblad1999general,
  title={A general no-cloning theorem},
  author={Lindblad, G.},
  journal={Lett. Math. Phys.},
  volume={47},
  pages={189},
  year={1999},
  publisher={Springer}
}

@article{bialonczyk2018application,
  title={Application of {S}hemesh theorem to quantum channels},
  author={Bia{\l}o{\'n}czyk, M. and Jamio{\l}kowski, A. and {\.Z}yczkowski, K.},
  journal={J. Math. Phys.},
  volume={59},
  pages ={102204},
  year={2018},
  publisher={AIP Publishing}
}

@article{grigoletto2025quantummodelreductioncontinuoustime,
title={Quantum model reduction for continuous-time quantum filters}, 
      author={T. Grigoletto and C. Pellegrini and F. Ticozzi},
      year={2025},
journal = {Ann. Henri Poincar\'e},
doi={10.1007/s00023-025-01622-7},
volume = {},
pages ={}, 
}

@article{PhysRevLett.113.240406,
  title = {Coherent Quantum Dynamics in Steady-State Manifolds of Strongly Dissipative Systems},
  author = {Zanardi, P. and Campos Venuti, L.},
  journal = {Phys. Rev. Lett.},
  volume = {113},
  pages = {240406},
  year = {2014},
  doi = {10.1103/PhysRevLett.113.240406},
}

@article{PhysRevResearch.5.L012003,
  title = {Symmetry-induced decoherence-free subspaces},
  author = {Dubois, J. and Saalmann, U. and Rost, J. M.},
  journal = {Phys. Rev. Res.},
  volume = {5},
  pages = {L012003},
  year = {2023},
  doi = {10.1103/PhysRevResearch.5.L012003},
}

@book{franchini2017introduction,
  title={An Introduction to Integrable Techniques for One-dimensional Quantum Systems},
  author={Franchini, F.},
  volume={940},
  year={2017},
  publisher={Springer, Berlin}
}

@phdthesis{zhang2024driven,
  title={Driven-dissipative Phase Transitions for {M}arkovian Open Quantum Systems},
  author={Zhang, Y.},
  year={2024},
  school={Duke University}
}

@article{zanardi2015geometry,
  title={Geometry, robustness, and emerging unitarity in dissipation-projected dynamics},
  author={Zanardi, P. and Campos Venuti, L.},
  journal={Phys. Rev. A},
  volume={91},
  pages={052324},
  year={2015},
  publisher={APS}
}

@article{zanardi2016dissipative,
  title={Dissipative universal {L}indbladian simulation},
  author={Zanardi, P. and Marshall, J. and Campos Venuti, L.},
  journal={Phys. Rev. A},
  volume={93},
  pages={022312},
  year={2016},
  publisher={APS}
}

@article{buvca2019non,
  title={Non-stationary coherent quantum many-body dynamics through dissipation},
  author={Bu{\v{c}}a, B. and Tindall, J. and Jaksch, D.},
  journal={Nat. Commun.},
  volume={10},
  pages={1730},
  year={2019},
  publisher={Nature Publishing Group UK London}
}

@article{Tindall2020,
doi = {10.1088/1367-2630/ab60f5},
year = {2020},
volume = {22},
pages = {013026},
author = {Tindall, J. and Sanchez-Munoz, C. and Bu{\v{c}}a, B. and Jaksch, D.},
title = {Quantum synchronisation enabled by dynamical symmetries and dissipation},
journal = {New J. Phys.},
}

@article{li2024exact,
  title={Exact steady state of quantum van der {P}ol oscillator: critical phenomena and enhanced metrology},
  author={Li, Y. and Zhang, X. and Liu, Y.-C.},
  journal={Phys. Rev. A},
  volume={112}, 
  pages ={L021701},
  doi={https://doi.org/10.1103/zybb-vxfz},
  year={2024}
}

@article{ben2021quantum,
  title={Quantum limit cycles and the {R}ayleigh and van der {P}ol oscillators},
  author={Ben Arosh, L. and Cross, M. C. and Lifshitz, R.},
  journal={Phys. Rev. Res.},
  volume={3},
  pages={013130},
  year={2021},
  publisher={APS}
}

@article{gu2024spontaneous,
  title={Spontaneous symmetry breaking in open quantum systems: strong, weak, and strong-to-weak},
  author={Gu, D. and Wang, Z. and Wang, Z.},
  journal={Phys. Rev. B},
  volume ={112}, 
  pages={245123}, 
  year={2024}
}

@article{Zhao2025,
  title={Quantum Synchronization of Perturbed Oscillating Coherences},
  author={Y. J. Zhao and J. E. Moore and J. Thingna and C. W. W\"{a}chtler},
  journal={arXiv:2510.11601},
  year={2025}
}

@INPROCEEDINGS{riva2024cdc,
  author={Riva, A. and Sarlette, A. and Rouchon, P.},
  booktitle={2024 IEEE 63rd Conference on Decision and Control (CDC)}, 
  title={Explicit formulas for adiabatic elimination with fast unitary dynamics}, 
  year={2024},
  volume={},
  number={},
  pages={755},
  doi={10.1109/CDC56724.2024.10886784}}

@INPROCEEDINGS{forni2019cdc,
  author={Forni, P. and Launay, T. and Sarlette, A. and Rouchon, P.},
  booktitle={2019 IEEE 58th Conference on Decision and Control (CDC)}, 
  title={A palette of approaches for adiabatic elimination in bipartite open quantum systems with Hamiltonian dynamics on target}, 
  year={2019},
  volume={},
  number={},
  pages={1362},
  doi={10.1109/CDC40024.2019.9028922}}

@inproceedings{forni2018adiabatic,
  title={Adiabatic elimination for multi-partite open quantum systems with non-trivial zero-order dynamics},
  author={Forni, P. and Sarlette, A. and Capelle, T. and Flurin, E. and Del{\'e}glise, S. and Rouchon, P.},
  booktitle={2018 IEEE Conference on Decision and Control (CDC)},
  pages={6614},
  year={2018},
}

@article{Benoist_2021,
   title={Emergence of Jumps in Quantum Trajectories via Homogenization},
   volume={387},
   DOI={10.1007/s00220-021-04179-8},
   journal={Commun. Math. Phys.},
   publisher={Springer Science and Business Media LLC},
   author={Benoist, T. and Bernardin, C. and Chetrite, R. and Chhaibi, R. and Najnudel, J. and Pellegrini, C.},
   year={2021},
    pages={1821} 
   }

@Book{FB-LNS,
  author =    {F. Bullo},
  title =     {Lectures on Network Systems},
  year =      2024,
  edition =   {{1.7}},
  publisher = {Kindle Direct Publishing},
  ISBN =      {978-1986425643},
  url =       {https://fbullo.github.io/lns},
}

@article{ticozzi2013steadystateentanglementengineeredquasilocal,
      title={Steady-State Entanglement by Engineered Quasi-Local {M}arkovian Dissipation}, 
      author={F. Ticozzi and L. Viola},
      year={2014},
      journal ={Quantum Inf. Comput.}, 
      volume={14}, 
      pages={0265},  
      doi= {https://doi.org/10.26421/QIC14.3-4-5},
}

@article{PhysRevE.102.062210,
  title = {Integrability of one-dimensional {L}indbladians from operator-space fragmentation},
  author = {Essler, F. H. L. and Piroli, L.},
  journal = {Phys. Rev. E},
  volume = {102},
  pages = {062210},
  year = {2020},
  doi = {10.1103/PhysRevE.102.062210},
}

@article{PhysRevB.110.104303,
  title = {Oscillating-mode gap: An indicator of phase transitions in open quantum many-body systems},
  author = {Haga, T.},
  journal = {Phys. Rev. B},
  volume = {110},
  pages = {104303},
  year = {2024},
  doi = {10.1103/PhysRevB.110.104303},
}

@article{PhysRevA.111.052206,
  title = {Time-convolutionless master equation applied to adiabatic elimination},
  author = {Tokieda, M. and Riva, A.},
  journal = {Phys. Rev. A},
  volume = {111},
  pages = {052206},
  year = {2025},
  doi = {10.1103/PhysRevA.111.052206},
}

@article{Arenz_2020,
   title={Emerging unitary evolutions in dissipatively coupled systems},
   volume={101},
   pages ={ 022101},
   DOI={10.1103/physreva.101.022101},
   journal={Phys. Rev. A},
   author={Arenz, C. and Metelmann, A.},
   year={2020},
}

@article{loizeau2026krylovspaceperturbationtheory,
      title={Krylov space perturbation theory for quantum synchronization in closed systems}, 
      author={N. Loizeau and B. Buča},
      year={2026},
      journal={arXiv:2602.11431},
      archivePrefix={arXiv},
      primaryClass={cond-mat.dis-nn},
      url={https://arxiv.org/abs/2602.11431}, 
}

@article{roberts1989appropriate, 
    title={Appropriate initial conditions for asymptotic descriptions of the long term evolution of dynamical systems}, 
    volume={31}, 
    DOI={10.1017/S0334270000006470}, 
    number={1}, 
    journal={The Journal of the Australian Mathematical Society. Series B. Applied Mathematics}, author={Roberts, A. J.}, 
    year={1989}, 
    pages={48–75}
}

@article{roberts2015macroscale,
  title={Macroscale, slowly varying, models emerge from the microscale dynamics},
  author={Roberts, A. J.},
  journal={IMA J. Appl. Math.},
  volume={80},
  pages={1492},
  year={2015},
  publisher={Oxford University Press}
}

@article{szyld2006many,
  title={The many proofs of an identity on the norm of oblique projections},
  author={Szyld, D. B.},
  journal={Num. Alg.},
  volume={42},
  pages={309},
  year={2006},
  publisher={Springer}
}

@article{Tamascelli_2018,
   title={Nonperturbative Treatment of non-{M}arkovian Dynamics of Open Quantum Systems},
   volume={120},
   DOI={10.1103/physrevlett.120.030402},
   journal={Phys. Rev. Lett.},
   publisher={American Physical Society (APS)},
   author={Tamascelli, D. and Smirne, A. and Huelga, S. F. and Plenio, M. B.},
   year={2018},
}

@book{lidar2013quantum,
  title={Quantum Error Correction},
  author={Lidar, D. A. and Brun, T. A.},
  year={2013},
  publisher={Cambridge University Press}
}

\end{document}